\newtheorem{theorem}{Theorem}[section]
\newtheorem{lemma}{Lemma}[section]
\newtheorem{proposition}{Proposition}[section]
\theoremstyle{remark}
\newtheorem{remark}{Remark}[section]
\theoremstyle{definition}
\newtheorem{definition}{Definition}[section]
\numberwithin{equation}{section}
\newcommand{\beq}{\begin{eqnarray}}
\newcommand{\eeq}{\end{eqnarray}}
\def\g{\mathfrak g}
\def\T{\mathfrak{T}}
\def\bbZ{\mathbb{Z}}
\def\R{\mathbb{R}}
\def\bbC{\mathbb{C}}
\newcommand{\cA}{{\cal A}}
\newcommand{\cB}{{\cal B}}
\newcommand{\cC}{{\cal C}}
\newcommand{\cD}{{\cal D}}
\newcommand{\cT}{{\cal T}}
\newcommand{\cH}{{\cal H}}
\newcommand{\cK}{{\cal K}}
\newcommand{\cM}{{\cal M}}
\newcommand{\cN}{{\cal N}}
\newcommand{\cR}{{\cal R}}
\def\id{\operatorname{id}}
\begin{document}

\title{Drinfel'd Double Symmetry of the   4d Kitaev Model}
\author[1]{{ \sf Hank Chen}}\thanks{hank.chen@uwaterloo.ca}
% \affiliation{Department of Applied Mathematics, University of Waterloo, Waterloo, Ontario, Canada}

% \author[1]{\sffamily Florian Girelli\thanks{florian.girelli@uwaterloo.ca}}
% \author[1]{\sffamily Panagiotis Tsimiklis\thanks{ptsimiklis@uwaterloo.ca}}

% \author[1]{{\sf Florian Girelli}}\thanks{fgirelli@uwaterloo.ca}

\affil[1]{\small Department of Applied Mathematics, University of Waterloo, 200 University Avenue West, Waterloo, Ontario, Canada, N2L 3G1}

% \author{Hank Chen}
\maketitle

\begin{abstract}
    Following the general theory of categorified quantum groups developed by the author previously, we construct the Drinfel'd double 2-bialgebra associated to a finite group $N=G_0$. For $N=\bbZ_2$, we explicitly compute the braided 2-categories of 2-representations of certain version of this Drinfel'd double 2-bialgebra, and prove that they characterize precisely the  4d toric code and its spin-$\mathbb{Z}_2$ variant. This result relates the two descriptions (categorical vs. field theoretical) of 4d gapped topological phases in existing literature and displays an instance of {\it higher Tannakian duality} for braided 2-categories. In particular, we show that particular twists of the underlying Drinfel'd double 2-bialgebra is responsible for much of the higher-structural properties that arise in 4d topological orders.
    
\end{abstract}

\tableofcontents

\section{Introduction}
It is well-known that the charge algebra of excitations in the  3d BF theory (or equivalently  3d Chern-Simons \cite{Osei:2013xra,chen:2022} theory) associated to an ordinary Lie group $G$ is described by its Jimbo-Drinfel'd $q$-deformed double $D_q(G)$ \cite{Delcamp:2016yix,Dupuis:2020ndx,Majid:1990bt,Witten:1988hc}. In particular, the representations of the Drinfel'd double is known to form a braided tensor category \cite{Majid:1994nw}, the algebraic structures of which is of tantamount importance in many areas of physics \cite{KitaevKong_2012,Dupuis:2020ndx,Severa:2005}. Of particular interest that serves as the main motivation for this paper is the theory of topological ordering in condensed matter. 

More precisely, it is well-known that the 2d toric code \cite{Kitaev1997}, can be described by an effective BF $\bbZ_2$-gauge theory \cite{Delcamp:2018kqc}, and therefore hosts a Drinfel'd double symmetry $D(\bbZ_2)$. This is a physical manifestation of the fact that the Drinfel'd centre of $\operatorname{Rep}(\bbZ_2)$ coincides with the representation category of $D(\bbZ_2)$,
\begin{equation}
    Z_1(\operatorname{Vect}[\bbZ_2])\simeq \operatorname{Rep}(D(\bbZ_2)),\nonumber
\end{equation}
the former of which describes the 2D toric code \cite{KitaevKong_2012}. In this 3-dimensional scenario, both sides of the above correspondence are phrased in terms of the linear symmetric fusion category $\operatorname{Vect}$ of $k$-vector spaces; in particular, the representation categories are understood as functor categories into $\operatorname{Vect}$, and (bi)algebras $H$ over $k$ correspond to (bi)algebra objects in $\operatorname{Vect}$.

We wish to lift the above correspondence between gauge theory and symmetry-protected topological (SPT) phases to 4-dimensions. This calls for a {\it categorification} of the underlying algebraic structure (see eg. \cite{Wen:2019,Kong:2017hcw,Johnson-Freyd:2020usu}) --- in particular one needs a higher-categorical notion of $\operatorname{Vect}$. The issue here is that there are different such notions of "2-vector spaces", two of which play a central role in this paper.
\begin{enumerate}
    \item Kapranov-Voevodsky (KV) 2-vector spaces $\operatorname{2Vect}^{KV}$ \cite{Kapranov:1994}, which is a linear 2-category consisting of $k$-linear finite semisimple 1-categories (such as $\operatorname{Vect}$), linear functors as 1-morphisms and natural transformations between these functors as 2-morphisms, and 
    \item Baez-Crans (BC) 2-vector spaces $\operatorname{2Vect}^{BC}$ \cite{Baez:2003fs}, which is a linear 2-category consisting of $k$-linear 2-term cochain complexes of vector spaces, cochain maps as 1-morphisms and cochain homotopies between such chain maps as 2-morphisms.
\end{enumerate}
The theory surrounding the KV 2-vector space has received much more attention in the literature (see eg. \cite{Johnson-Freyd:2020usu,Johnson_Freyd_2023,Delcamp:2023kew,Douglas:2018,decoppet2022morita,decoppet2023drinfeld}), and have seen successful applications to describe gapped topological phases in 4d \cite{Hamma:2004ud,Else:2017yqj,Wen:2019,Kong:2020wmn}.

On the other hand, differential graded algebraic structures --- such as $L_\infty$-algebras \cite{Chen:2012gz,Kim:2019owc,Sati:2009ic,costello_gwilliam_2016} and crossed-complexes of groups \cite{Brown,Ang:2018rls,Kapustin2017,Baez:2004,Yekuteli:2015} --- have also appeared in the literature as a way to model higher-dimensional physics, topology and geometry. These notions enjoy desirable properties, such as the fact that 2-term $L_\infty$-algebras, aka "Lie 2-algebras", serve as infinitesimal approximations of Lie 2-groups. The sort of gauge principles that are built out of the corresponding principal (2-truncated) $\infty$-bundle \cite{Wockel2008Principal2A,Nikolaus_2014} then forms the basis of 2-gauge theories studied in the literature.

\medskip

The main result of this paper is to unite the above two ways of studying higher-dimensional physics, in the simplest case of the $\bbZ_2$-SPT. A main obstacle towards this goal is that the BC 2-category $\operatorname{2Vect}^{BC}$ is {\it strict} \cite{Baez:2003fs}, meaning that it has no non-trivial coherence 1- and 2-morphisms. This is a problem, as the corresponding 2-representation theory (namely linear 2-functors into $\operatorname{2Vect}^{BC}$) is unable to carry the sort of coherence data that higher monoidal categories have, which are key components in (non-trivial) topological field theories \cite{Douglas:2018}.

A workaround to this was found by the author in \cite{Chen:2023tjf}, based on the theory of 2-term $A_\infty$-algebras (in chain complexes), aka "2-algebras", with $n$-ary products up to $n=3$ \cite{Stasheff:1963}. In particular, a general theory of {\it 2-bialgebras} was developed, and its "weak" 2-representation theory $\operatorname{2Rep}_\text{wk}(\cA)$ was studied. The main result we shall leverage is the following.
\begin{theorem}\label{basicthm}
The 2-representation 2-category $\operatorname{2Rep}_\text{wk}(\cA)$ of a weak quasitriangular 2-bialgebra $(\cA,\cT,\Delta,\cR)$ forms a braided monoidal 2-category.
\end{theorem}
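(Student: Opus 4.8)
The plan is to categorify, one coherence level up, the classical argument that the representation category of a quasitriangular bialgebra is braided monoidal, so that every structural datum of the braided monoidal 2-category descends from a corresponding datum of $(\cG,\cT,\Delta,\cR)$. First I would build the monoidal structure on $\operatorname{2Rep}^\cT(\cG)$. Given two 2-representations $V,W$, the 2-coproduct $\Delta:\cG\to\cG\otimes\cG$ pulls back the external tensor action to an action of $\cG$ on $V\otimes W$; since $\Delta$ is a morphism of 2-algebras, this assignment is 2-functorial in both slots and respects intertwining $1$- and $2$-morphisms, defining the tensor $2$-functor $\otimes\colon\operatorname{2Rep}^\cT(\cG)\times\operatorname{2Rep}^\cT(\cG)\to\operatorname{2Rep}^\cT(\cG)$. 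The weak (only pseudo-coassociative) nature of $\Delta$ is encoded by $\cT$, which plays the role of a categorified Drinfel'd associator: conjugation by the image of $\cT$ supplies the associator $1$-morphisms $a_{V,W,X}$, and the coherence $2$-cell packaged in the weak $2$-bialgebra axioms supplies the pentagonator. I expect pentagon coherence to reduce, after unwinding, to the coherence (categorified $3$-cocycle) identity satisfied by $\cT$, with the counit furnishing the unit $2$-representation and its constraints.

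Next I would produce the braiding from $\cR$. The $2$-$R$-matrix is a weakly invertible element of $\cG\otimes\cG$ — an object-level datum together with its own coherence $2$-cell — and post-composing its action with the flip $\tau_{V,W}\colon V\otimes W\to W\otimes V$ (the evident symmetry of the underlying $2$-vector spaces) defines the braiding $1$-morphisms $R_{V,W}$. The quasitriangularity relation $\Delta^{\mathrm{op}}=\cR\,\Delta\,\cR^{-1}$, now holding only up to a specified $2$-isomorphism, is precisely what makes $R_{V,W}$ a pseudonatural equivalence rather than a strict natural isomorphism. The two categorified hexagon relations $(\Delta\otimes 1)\cR\simeq\cR_{13}\cR_{23}$ and $(1\otimes\Delta)\cR\simeq\cR_{13}\cR_{12}$, valid up to the coherence $2$-cells built into $\cT$ and $\cR$, then yield the two hexagonator modifications, exactly as the analogous on-the-nose identities produce the two hexagons in the bialgebra case.

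The bulk of the work, and the \emph{main obstacle}, is verifying the full list of coherence axioms of a braided monoidal $2$-category (in the Kapranov--Voevodsky / Baez--Neuchl formulation, with the corrections of Crans and McCrudden): the pentagonator and unit coherences, and crucially the two hexagonator axioms together with the conditions relating the hexagonators to the pentagonator and to each other. Each is a pasting equation among $2$-cells, and my strategy is to show that each reduces, by pseudonaturality and the interchange law, to one of the defining coherence identities of the weak quasitriangular $2$-bialgebra. In particular, the hexagonator-versus-pentagonator compatibility should follow from the interplay between the $\cT$-coherence and the two hexagon relations for $\cR$, just as classically the hexagon axioms follow from $(\Delta\otimes 1)R=R_{13}R_{23}$ and $(1\otimes\Delta)R=R_{13}R_{12}$ — except that here those equations themselves hold only up to higher cells whose coherence must be tracked.

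The delicate point I anticipate is the bookkeeping of these higher cells: in the weak setting none of the quasitriangularity equations hold strictly, so every hexagon and pentagon pasting diagram acquires correction $2$-cells, and one must verify that they cancel or recombine correctly. I would manage this by fixing once and for all a normalization for $\cT$ and $\cR$ and then checking the axioms diagram by diagram, reducing each to a coherence condition that is either assumed in the definition of a weak quasitriangular $2$-bialgebra or is a formal consequence of $2$-categorical interchange. The intended payoff is that the braided monoidal $2$-category axioms stand in correspondence with a subset of the weak quasitriangular $2$-bialgebra axioms, so that no hypotheses beyond those already assumed are required — this is what I would aim to demonstrate.
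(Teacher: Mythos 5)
Your outline follows the same overall architecture as the paper's proof (which, note, is not actually given in this paper: {\bf Theorem \ref{basicthm}} is imported from the author's prior work \cite{Chen:2023tjf}, and Section \ref{prelim} only summarizes the structures involved): tensor product from precomposition with the coproduct, braiding from the flip composed with the action of the 2-$R$-matrix, hexagonators from the categorified quasitriangularity relations, and a final coherence check against the axioms of a braided monoidal 2-category. However, two mechanisms in your plan are misidentified, and they are exactly where the genuinely 2-algebraic (graded) content lives.

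First, the associator $1$-morphisms $a_{VWU}$ on $\operatorname{2Rep}^\cT(\cG)$ are not obtained by ``conjugation by the image of $\cT$''. In this framework $\cT:\cG_0^{3\otimes}\to\cG_{-1}$ is the associator of the \emph{multiplication} of $\cG$; what controls associativity of the tensor product of 2-representations is the \emph{coassociator} $\Delta_1:\cG_0\to\cG_{-1}^{3\otimes}$, the datum measuring the failure of coassociativity of $\Delta$, acting through $\rho_1$. The pentagon relation is then a consequence of the compatibility $\Delta_{-1}\circ\Delta_1=\Delta_1\circ\Delta_0$, and this is equivalent to the Hochschild 3-cocycle condition on $\cT$ only when $\cG$ is self-dual (as the 2-Drinfel'd double is), where $\Delta_1$ is dual to $\cT$. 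The actual role of $\cT$ is different: a weak 2-representation carries a datum $\varrho$ trivializing the difference between $\rho_1\circ\cT$ and the associator $\T$ of $\operatorname{End}(V)$ pulled back along $\rho_0$, and it is $\varrho$ that produces the module pentagonator and, when applied to the defining relations of the 2-$R$-matrix, the hexagonator $\Omega$. Second, you treat $\Delta$ and $\cR$ as ungraded (one coproduct, one invertible element of $\cG\otimes\cG$). In a 2-bialgebra both decompose into graded components, $\Delta=(\Delta_{-1},\Delta_0,\Delta_0')$ and $\cR=\cR^++\cR^-$ together with $R$, and it is precisely these components that furnish the 1-morphism-level and mixed structure: $\rho_{i\otimes j}$ from $\Delta_{-1}$, the mixed tensor actions $\rho_{V\otimes j},\rho_{i\otimes W}$ from $\Delta_0$, the object braiding $b_{VW}$ from $R$, and the mixed braidings $b_{iW},b_{Wi}$ --- which are your pseudonaturality cells --- from $\cR^\pm$ via $\rho_1$. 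Without this decomposition your construction only produces the object-level skeleton, not the 2-categorical layers that make the theorem nontrivial; these are the two places where a literal implementation of your plan would stall.
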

\noindent We refer the reader to \cite{GURSKI20114225} for the definition of a braided monoidal 2-category. Here, by a \textbf{2-bialgebra} we mean a 2-term $A_\infty$-algebra equipped with a compatible {shifted} 2-coalgebra structure; this will be defined properly in the main text. 

We note briefly that the non-trivial homotopy data attached to the $A_\infty$-algebra structure give rise to the higher coherence data seen in braided monoidal 2-categories. Moreover, we have also shown that these 2-bialgebras serve as a \textit{quantization} of Lie 2-bialgebras known in the literature \cite{Bai_2013,Chen:2012gz,chen:2022}. In summary, in terms of the following categorical ladder \cite{Crane:1994ty,Mackaay:ek},
\begin{figure}[h]
    \centering
    \includegraphics[width=0.7\columnwidth]{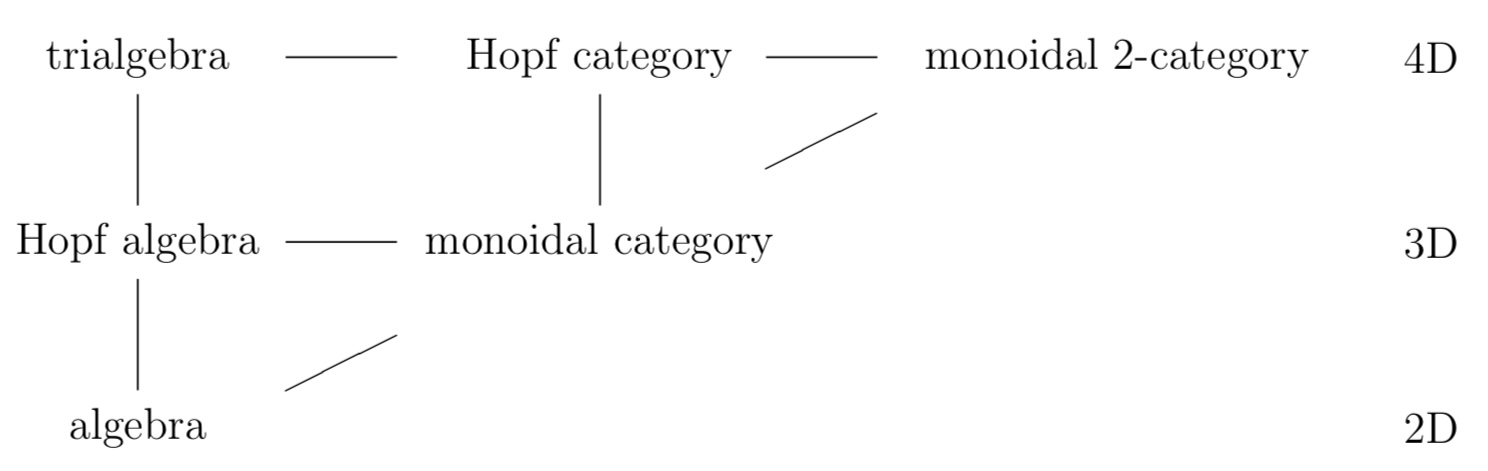}
    % \caption{Caption}
    % \label{fig:my_label}
\end{figure}

\noindent our strategy will be to construct concrete examples of {2-bialgebras} which model directly the structures of a Hopf category \cite{neuchl1997representation}, as well as certain 4d TFTs associated to them. For a study of trialgebras, see for instance \cite{Pfeiffer2007,grosse2001suggestion}.

\medskip

The goal for this paper will be accomplished in two steps. First, we specialize to a particular {\it 2-group} $B\bbZ_2$ obtained by delooping the finite group $\bbZ_2$, and form its Drinfel'd double 2-bialgebra $D(B\bbZ_2)$. We then construct topological field theories based on $D(B\bbZ_2)$ which we collectively call the {\it 4d Kitaev model}; we also briefly mention its relation to previous works \cite{Zhu:2019,chen:2022,Kapustin2017,Wen:2019}. In analogy with the 3-dimensional case, the excitations of the   4d Kitaev model are labeled by the 2-representations of $D(B\bbZ_2)$ in the sense of \cite{Chen:2023tjf}.

The second step is then to explicitly compute the monoidal and braiding structures of the 2-representation 2-category $\operatorname{2Rep}_\text{wk}(D(B\bbZ_2))$ by utilizing  the structures of the underlying Drinfel'd double 2-bialgebra \cite{Chen:2023tjf}. Provided $D(B\bbZ_2)$ is equipped with certain twists/group cocycles on $\bbZ_2$, we demonstrate that $\operatorname{2Rep}_\text{wk}(D(B\bbZ_2))$ recovers in fact two of the braided monoidal 2-categories $\mathscr{R},\mathscr{S}$ studied in \cite{Johnson-Freyd:2020,Johnson_Freyd_2023}, which model respectively the   4d toric code \cite{Hamma:2004ud,Kong:2020wmn,Wan:2019,Else:2017yqj} and its spin counterpart. 

More precisely, we first note that contributions to the Dijkgraaf-Witten 4-cocycle $\omega\in H^4(D(B\bbZ_2),k^\times)$ on the 2-group underlying $D(B\bbZ_2)$ has been computed in \cite{Kapustin2017}. Given a tuple of group 2-cocycles $\bar e \in H^2(\bbZ_2,\bbZ_2),\bar c\in H^2(\bbZ_2,k^\times)$, we can construct the following 2-group 4-cocycles 
\begin{equation}
    \omega_b = \bar e,\qquad \omega_f = \bar c[1] + \bar e; \nonumber
\end{equation}
we will detail this construction in the main text. These 2-cocycles $\bar e,\bar c$ contribute to twist the 2-algebra structure on $D(B\bbZ_2)$ in such a way that we have the following.
\begin{theorem}\label{mainthm}
We have the following \textrm{braided} equivalences 
\begin{eqnarray}
    \operatorname{2Rep}_\text{wk}(D ^{\omega_b}(B\bbZ_2)) &\simeq& \mathscr{R}\simeq Z_1(\operatorname{2Vect}^{KV}[\bbZ_2]), \nonumber\\
    \operatorname{2Rep}_\text{wk}(D^{\omega_f}(B\bbZ_2)) &\simeq&\mathscr{S}\simeq Z_1(\Sigma\operatorname{sVect}),\nonumber
\end{eqnarray}
where $Z_1$ is the Drinfel'd centre and $\Sigma$ is the condensation completion functor defined in \cite{Gaiotto:2019xmp} (note $\operatorname{2Vect}^{KV} = \Sigma\operatorname{Vect}$).
\end{theorem}
\noindent Moreover, we show that the 4-cocycles $\omega_b,\omega_f$, when pulled-back by a classifying map $X\rightarrow BD(B\bbZ_2)$ on a 4-manifold $X$, give  precisely the Dijkgraaf-Witten Lagrangian of the underlying topological NLSM \cite{Zhu:2019}. The meaning of "twisting" shall be made clear in the main text.

This can be understood as an example of {\it categorified} Tannakian reconstruction\footnote{This sentence should be stated more precisely by first picking appropriate forgetful 2-functors into $\operatorname{2Vect}^{KV}$ \cite{Johnson_Freyd_2023}, analogous to the 1-category case \cite{Woronowicz1988}.} \cite{Pfeiffer2007} in the special case of Drinfel'd centre 2-categories. This also suggests that one may model the tube algebra in 4-dimensions \cite{Bullivant:2021} as the "fusion 2-ring" of the Drinfel'd double 2-bialgebras $D(BG)$, categorifying an analogous statement in  3d \cite{Delcamp:2016yix}.

\medskip

It is important to note that here the 2-representations of $D(B\bbZ_2)$ we consider are defined in the setting of "weak Baez-Crans 2-vector spaces" as examined in \cite{Chen:2023tjf}. They form a linear 2-category $\operatorname{2Vect}^{hBC}$ whose algebra objects are precisely 2-term $A_\infty$-algebras (cf. the macrocosm principle \cite{Baez:1995xq}). The linear 2-category $\operatorname{2Vect}^{BC}$ of ordinary Baez-Crans 2-vector spaces (ie. 2-term vector space chain complexes) \cite{Baez:2003fs,Baez:2010ya}, on the other hand, have merely crossed-modules of associative algebras \cite{Wagemann+2021} as algebra objects. The 2-representation theory based on $\operatorname{2Vect}^{hBC}$ is therefore capable of carrying non-trivial homotopy data that one cannot see in $\operatorname{2Vect}^{BC}$.

On the other hand, the 2-representation theory of 2-groups commonly studied in the literature \cite{Douglas:2018,Baez:2008hz,Delcamp:2021szr,Delcamp:2023kew,Johnson_Freyd_2023,Kong:2020wmn} is based on the KV 2-category $\operatorname{2Vect}^{KV}$. These carry higher coherence data, such as the associators and the pentagonators, and they are in fact very closely related to the homotopy data carried by the 2-representation theory based on $\operatorname{2Vect}^{hBC}$ \cite{Chen:2023tjf} (see {\it Remark \ref{wk2algrepthy}}). The exact relationship between $\operatorname{2Vect}^{hBC}$ and $\operatorname{2Vect}^{KV}$, as well as module 2-categories over them, is currently under investigation by the author. 

\medskip

The gist of this paper is therefore the following: we construct the (weak) Drinfel'd double 2-bialgebra $D(B\bbZ_2)$ out of the finite group $\bbZ_2$, and compute all of the structures in the 2-representation 2-category $$\operatorname{2Rep}_\text{wk}(\cA) = [\cA,\operatorname{2Vect}^{hBC}],\qquad \cA = D(B\bbZ_2)$$ based on the theory laid out in \cite{Chen:2023tjf}. 

\subsubsection*{Outline}
The paper is organized as follows. In Section \ref{prelim}, we shall recall some definitions that will be used throughout this paper. We very briefly review the notion of 2-groups, 2-algebras and 2-gauge theory based on existing literature \cite{Chen:2012gz,Wagemann+2021,Baez:2003fs,Baez:2004,chen:2022,Kapustin2017,Zhu:2019}, then move on to recall the construction and key duality structures of a 2-quantum double as given in a previous work \cite{Chen:2023tjf}. We focus in particular on skeletal models of Drinfel'd double 2-bialgebras based on finite Abelian groups --- as well as the properties of its 2-$R$-matrix --- which plays a central role in this paper.

In Section \ref{4dkit}, we define the "monster 2-BF theory" \cite{chen:2022} associated to the Drinfel'd double 2-bialgebra $D(B\bbZ_2)$, and construct its partition function on a 4-manifold. We refer to the resulting theory as the {\bf 4d Kitaev model}. Here, we also introduce additional terms in the topological action in accordance with previous work on 2-group Dijkgraaf-Witten theory \cite{Kapustin2017,Zhu:2019}, and provide an interpretation for them as {\it twists} on the underlying Drinfel'd double 2-bialgebra symmetry.

With the main tools under our belt, we  study the excitations of the   4d Kitaev model (without twists) in Section \ref{toriccharges}. This is done by explicitly computing all of the braided and monoidal/fusion structures in the 2-representation 2-category $\operatorname{2Rep}_\text{wk}(D(B\bbZ_2))$. We show that the   4d Kitaev model without any twists {\it cannot} describe any non-degenerate gapped topological phase, as the symmetric M{\" u}ger centre $Z_2(\operatorname{2Rep}_\text{wk}(D(B\bbZ_2))\not\simeq \operatorname{2Vect}^{KV}$ is in fact not trivial.

This problem is amended in Section \ref{ferkitcharges} by incorporating the twists, whence the fusion rules and braided structures are revisited. We find that, by interpreting these twists naturally as properties of the 2-representations of $D(B\bbZ_2)$, we are able to finally prove {\bf Theorem \ref{mainthm}} directly. We provide explicit braided equivalences between the 2-categories $\mathscr{R},\mathscr{S}$ studied in \cite{Johnson-Freyd:2020,Johnson_Freyd_2023} and the 2-representation 2-category of twisted versions of the Drinfel'd double 2-bialgebra $D(B\bbZ_2)$.

\smallskip

Finally, in Appendix \ref{Zp}, we apply our computational techniques to compute the 2-category $\operatorname{2Rep}_\text{wk}(D^\omega(B\bbZ_p))$ describing the $\bbZ_p$-analogue of the 4d toric code, where $p>2$ is an odd prime. We will in particular examine the fusion and braiding properties of this 2-category, and demonstrate how the statement of remote detectability arises from the twist cocycle $\omega\in H^4(D(B\bbZ_p),k^\times)$.

\subsubsection*{Acknowledgements}
The author would like to thank Ruochen Ma, Matthew Yu, Theo Johnson-Freyd, Liang Kong, Mathew Bullimore and Clement Delcamp for valuable discussions throughout the completion of this work, and also pointing me towards relevant results in the literature. I would also like to thank the journal editor and referee for the comments and suggestions, which helped to drastically improve the quality and novelty of the article.

\section{Preliminaries}\label{prelim}
Let us begin with a brief overview of the notion of 2-groups, 2-bialgebras and 2-gauge theory \cite{Bai_2013,Chen:2012gz,Baez:2004in,Martins:2006hx,chen:2022,Chen:2023tjf}. In this article, we shall mainly consider the presentation of a 2-group $G$ as a group crossed-module, but there are many equivalent definitions that we shall mention.

\subsection{A brief overview on 2-groups and 2-gauge theory}
\begin{definition}
A \textbf{2-group} $\mathbb{G}=(G_{-1},G_0,\rhd)$ is a 2-term crossed complex, ie. a crossed-module, of groups, which consist of two groups $G_{-1},G_0$, a group homomorphism $t:G_{-1}\rightarrow G_0$ and an action $\rhd$ of $G_0$ on $G_{-1}$, such that the following conditions
\begin{equation}
    t(x\rhd y) = xt(y)x^{-1},\qquad (ty)\rhd y'=yy'y^{-1}\label{pfeif2}
\end{equation}
are satisfied for each $x\in G_0$ and $y,y'\in G_{-1}$. The first and second conditions are known respectively as the {\it equivariance} and the {\it Peiffer identity}.
\end{definition}
\noindent Readers more familiar with category theory may understand 2-groups equivalently as a connected 2-groupoid $[G_{-1}\rtimes G_0,G_0,\ast]$ \cite{Baez:2008hz,Douglas:2018}, or equivalently its loop 1-groupoid $G_{-1}\rtimes G_0 \rightrightarrows G_0$ \cite{Chen:2012gz,Porst2008Strict2A}.

\subsubsection{Classification of 2-groups}
As we shall mainly be interested in topological field theories, we only need to work with equivalence classes of 2-groups. More precisely, we consider 2-group $\mathbb{G}=G_{-1}\rightarrow G$ only up to chain homotopies.

\begin{definition}
A cochain map between 2-term group complexes, or simply a {\bf 2-group homomorphism}, is a graded map $\phi=(\phi_{-1},\phi_0):\mathbb{G}\rightarrow \mathbb{G}'$ such that
\begin{enumerate}
    \item $\phi_0:G_0\rightarrow G_0'$ and $\phi_{-1}:G_{-1}\rightarrow G_{-1}'$ are group homomorphisms,
    \item $\phi_{-1}(x\rhd y) = (\phi_0x)\rhd'(\phi_{-1}y)$ for each $x\in G_0,y\in G_{-1}$, 
    \item $\phi_0t=t'\phi_{-1}$.
\end{enumerate}
\end{definition}
\noindent We say that two 2-groups $\mathbb{G},\mathbb{G}'$ are equivalent, or quasi-isomorphic, if there exists a weakly invertible 2-group homomorphism between them. The fundamental classification result \cite{Zhu:2019,Kapustin2017,Wagemann+2021,Ang:2018rls,Baez:2004} is that
\begin{theorem}
{\bf (Gerstenhaber, attr. Mac-Lane).} 2-groups are classified up to quasi-isomorphism by a degree-3 group cohomology class $\tau\in H^3(N,M)$, called the {\bf Postnikov class} of $\mathbb{G}$, where $N=\operatorname{coker}t,M=\operatorname{ker}t$.
\end{theorem}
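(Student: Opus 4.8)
The plan is to realize the class $\tau$ as an obstruction cocycle extracted from chosen set-theoretic sections, and to organize invariance and completeness through the classifying space of the 2-group. First I would record the structural consequences of the crossed-module axioms \eqref{pfeif2}. Equivariance $t(x\rhd y)=xt(y)x^{-1}$ shows $\operatorname{im}t$ is normal in $G_0$, so that $N=\operatorname{coker}t$ is a group. Applying the Peiffer identity to $y\in\ker t$ gives $y'=yy'y^{-1}$ for all $y'\in G_{-1}$, so $M=\ker t$ is abelian and in fact central in $G_{-1}$. Since $t(x\rhd m)=xt(m)x^{-1}=e$ for $m\in M$, the action $\rhd$ restricts to $M$; and since $(tz)\rhd m=zmz^{-1}=m$ for $z\in G_{-1}$, it factors through $N$, making $M$ a $\bbZ[N]$-module. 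This fixes the pair $(N,M)$ appearing in the statement.

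Next I would build the Postnikov $3$-cocycle. Choose a normalized set-theoretic section $s:N\to G_0$ of the projection $G_0\to N$, and for each pair $n_1,n_2$ a lift $\lambda(n_1,n_2)\in G_{-1}$ with $t(\lambda(n_1,n_2))=s(n_1)s(n_2)s(n_1n_2)^{-1}$, which exists since the right-hand side lies in $\ker(G_0\to N)=\operatorname{im}t$. Comparing the two ways of reassociating the triple product $s(n_1)s(n_2)s(n_3)$ through $\lambda$ yields an element $\tau(n_1,n_2,n_3)\in\ker t=M$; a direct computation using equivariance and Peiffer shows $\tau$ is an inhomogeneous $3$-cocycle for the $N$-module $M$, and re-choosing $s,\lambda$ shifts it by a coboundary. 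This defines $[\tau]\in H^3(N,M)$.

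The third step is invariance under quasi-isomorphism. A weakly invertible 2-group homomorphism $\phi=(\phi_{-1},\phi_0)$ induces, via the homomorphism axioms, isomorphisms $N\xrightarrow{\sim}N'$ and $M\xrightarrow{\sim}M'$ of $N$-modules; transporting $s,\lambda$ along $\phi$ then identifies the two Postnikov cocycles up to coboundary, so $[\tau]$ depends only on the quasi-isomorphism class. I would make this systematic by passing to the classifying space $BG$, a connected homotopy $2$-type with $\pi_1(BG)\cong N$, $\pi_2(BG)\cong M$ and $k$-invariant $[\tau]\in H^3(N,M)\cong H^3(K(N,1);M)$; under this dictionary a weakly invertible homomorphism becomes a weak homotopy equivalence, and invariance together with the classification reduces to the classical Mac Lane--Whitehead classification of $2$-types.

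Finally I would prove completeness: for every $N$-module $M$ and every class $[\tau]\in H^3(N,M)$ I would exhibit an explicit crossed module with those invariants, built from a cocycle representative, and then show that any two crossed modules sharing $(N,M,[\tau])$ are linked by a weak equivalence. I expect this to be the main obstacle, precisely because quasi-isomorphisms need not be strict homomorphisms: matching the Postnikov data does not directly produce a map between the given crossed modules, so one must either invert weak equivalences up to homotopy (equivalently, lift an isomorphism of Postnikov invariants to a span of crossed modules by obstruction theory) or thread an explicit zig-zag through the standard model. By contrast, the cocycle bookkeeping in the middle steps — checking closedness of the reassociator and that every change of section is a coboundary — is the most computation-heavy part but is routine given \eqref{pfeif2}.
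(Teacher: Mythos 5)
The paper never proves this statement: it is imported as a classical result with citations, and the only ingredients the paper itself uses afterwards are the extraction of the Ho{\'a}ng data $(N,M,\tau)$ and the remark that $M$ is abelian by the Peiffer identity. So your proposal can only be measured against the classical argument contained in the cited literature, and in outline it reproduces that argument correctly. Your structural step is exactly right: equivariance in \eqref{pfeif2} makes $\operatorname{im}t$ normal, the Peiffer identity applied to $y\in\ker t$ makes $M$ central in $G_{-1}$, and the action restricts to $M$ and factors through $N$. Your obstruction cocycle is the standard one, and the key point is sound: both reassociations of $s(n_1)s(n_2)s(n_3)$ have the same image under $t$, so their ratio lies in $\ker t=M$. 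Packaging invariance and completeness through $BG$ and the Mac Lane--Whitehead classification of connected homotopy 2-types is the canonical route.

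One caveat deserves emphasis, and you partially flag it yourself. With the paper's definitions a 2-group is a \emph{strict} crossed module, so the realization step must produce a strict crossed module for every class $[\tau]\in H^3(N,M)$, and ``building it from a cocycle representative'' cannot mean installing $\tau$ on the skeletal object $M\xrightarrow{0}N$: for any strict crossed module with $t=0$ your own construction permits the choice $\lambda\equiv e$, which forces the extracted class to vanish. (This is precisely why the paper, in its Section 2.1.2, has to treat $\tau$ on a skeletal 2-group as \emph{extra} categorical data, an associator, rather than as something read off a strict skeletal crossed module.) Non-trivial classes are only realized by crossed modules with $t\neq 0$; the standard construction runs through a free presentation $1\to R\to F\to N\to 1$ and a crossed extension $0\to M\to G_{-1}\to R\to 1$. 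Likewise the converse direction, that equal invariants force an equivalence, genuinely requires the zig-zag/span formalism you mention, since a single weakly invertible homomorphism need not exist in either direction. Both facts are true and classical, but in your write-up they are invoked rather than proved; since the paper defers the entire theorem to the literature, your proposal has the right skeleton and a comparable level of completeness, with this realization subtlety being the one point where the sketch would need real work to close.
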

\noindent Note $M$ is Abelian due to the Peiffer identities. The tuple $(N,M,\tau)$ is known as the {\it Ho{\'a}ng data} \cite{Ang:2018rls,Nguyen2014CROSSEDMA}, which was proven by Xu{\^ a}n S{\' i}nh Ho{\'a}ng, a Vietnamese female mathematician supervised by Alexander Grothendieck, to classify "$\operatorname{Gr}$-categories".

Note that the part $(N,M)$ of the Ho{\'a}ng data of $\mathbb{G}$ defines a skeletal 2-group, where the $t$-map is trivial. We call this 2-group $\pi \mathbb{G}=M\xrightarrow{0}N$ the {\bf skeleton} of $\mathbb{G}$. Clearly, skeletal 2-groups $\mathbb{G}$ are their own skeletons, and are hence classified by the Ho{\'a}ng data $(G_0,G_{-1},\tau)$. 

% \begin{remark}
% An analogous classification statement also holds for Lie algebra crossed-modules, in which $\g = \g_{-1}\rightarrow \g_0$ is classified by the data $(\operatorname{coker}t=\mathfrak{n},\operatorname{ker}t=V,\kappa)$, where $\kappa\in H^3(\mathfrak{n},V)$ is a Lie algebra cohomology element \cite{Wag:2006}. A fun fact is that Lie 2-algebras are classified in the same way; the skeleton $\pi \g=\mathfrak{n}\xrightarrow{0}V$ forms a weak Lie 2-algebra for which the homotopy map $\mu_3$ is in fact a 3-cocycle representative of the cohomology class $\kappa$ \cite{Baez:2005sn,Kim:2019owc}.
% \end{remark}

\subsubsection{2-groups as categorical groups}\label{catgrp} 
In this skeletal case, we can in fact understand $\mathbb{G}=(G_0,G_{-1},\tau)$ as a {\bf categorical group} with categorical structure \cite{Cui_2017}. Let $G_{-1}$ denote an Abelian $G_0$-module, and let us consider a category $\cC=\mathbb{G}$ with objects $\operatorname{Obj}(\cC)=G_0$ and morphisms $\operatorname{Hom}_{\cC}(x,x') = G_{-1}$ when $x=x'\in G_0$, and trivial otherwise. Composition of morphisms is written additively,
\begin{equation*}
    (y,x)+(y',x) = (y+y',x),\qquad x\in G_0,\quad y,y'\in G_{-1}= \operatorname{End}_\cC(x), 
\end{equation*}
and we endow $\cC$ with the monoidal structure (the horizontal multiplication \cite{Baez:2003fs}) given by
\begin{equation}
    (y,x) (y,x') = (y+x\rhd y',xx').\nonumber
\end{equation}
The two composition laws $\otimes,+$ must satisfy the {\it interchange law} \cite{Porst2008Strict2A,Baez:2003fs,Baez:2012}
\begin{equation}
    ((y_1,x_1)+(y_1',x_1))((y_2,x_2)+(y_2',x_2)) = (y_1,x_1)(y_2,x_2)+(y_1',x_1)(y_2',x_2)\label{grpinterchange}
\end{equation}

Furthermore, we also define distinguished associator isomorphisms for the horizontal multiplication
\begin{equation}
    \tau(x,x',x''):(xx')x''\xrightarrow{\sim}x(x'x'')\in \operatorname{Hom}_\cC(xx'x'') =G_{-1},\nonumber
\end{equation}
for which the \textbf{pentagon relation} 
\begin{equation}
    \begin{tikzcd}
                                                        &                                                             & ((x_1x_2)x_3)x_4 \arrow[rrd, "{\tau(x_1x_2,x_3,x_4)}"] \arrow[lld, "{\tau(x_1,x_2,x_3)}"'] &                  &                                                        \\
(x_1(x_2x_3))x_4 \arrow[rdd, "{\tau(x_1,x_2x_3,x_4)}"'] &                                                             &                                                                                            &                  & (x_1x_2)(x_3x_4) \arrow[ldd, "{\tau(x_1,x_2,x_3x_4)}"] \\
                                                        &                                                             &                                                                                            &                  &                                                        \\
                                                        & x_1((x_2x_3)x_4) \arrow[rr, "{x_1\rhd \tau(x_2,x_3,x_4)}"'] &                                                                                            & x_1(x_2(x_3x_4)) &                                                       
\end{tikzcd}\label{pentagon}
\end{equation}
is satisfied. It is well-known that, by interpreting $\tau$ as a group 3-cochain $\tau:G_0^{3\times}\rightarrow G_{-1}$, the pentagon relation implies that it is in fact a 3-cocycle, which represents the Postnikov class $\tau\in H^3(G_0,G_{-1})$. This yields a categorical understanding of the kind of algebraic data a 2-group encodes; similar categorical realizations can also be given for 2-algebras \cite{Baez:2003fs,Wagemann+2021,Chen:2023tjf}.

If one is more geometrically-minded, one may construct the {\it classifying space} $B\mathbb{G}$ of the skeletal 2-group $\mathbb{G}$ through the Postnikov tower \cite{Zhu:2019,Kapustin2017}. This space is given as a fibration
\begin{equation}
    B^2G_{-1}\rightarrow BG\rightarrow BG_0 \rightarrow \ast, \nonumber
\end{equation}
in which $BG_0$ is the classifying space of $G_0$ and $B^2G_{-1}$ is the second delooping of $G_{-1}$, namely $\pi_2(B^2G_{-1}) = G_{-1}$ with all other homotopy groups trivial. The (representative of the) Postnikov class $\tau\in H^3(G_0,G_{-1})$ then prescribes the way in which the 2-cells of $B^2G_{-1}$ are glued onto the principal path fibration on $BG_0$ \cite{Mackaay:uo}.

\begin{remark}\label{gencatgrp}
More generally, the monoidal structure can be weakened in various ways by introducing distinguished {\it 2-morphisms} to $\mathbb{G}$, which are invertible elements $k^\times$ of the ground field $k$ \cite{Johnson-Freyd:2020}. For instance,  \eqref{grpinterchange} can be implemented by an interchanger 2-morphism $h(y_1,y_1';y_2,y_2')\in k^\times$, which is an isomorphism between 1-morphisms over $xx'\in G_0$. Together with the associator 1-morphism $\tau$, it defines an associator 2-morphism \cite{Wen:2019},
\begin{equation}
    \tilde\tau(y,y',y''): \tau(x,x',x'')+(yy')y''\rightarrow y(y'y'')+\tau(x,x',x''), \nonumber
\end{equation}
for the 1-morphisms $y,y',y''\in G_{-1}$ with respective sources $x,x',x''\in G_0$. The pentagon relation can also be weakened to a {\it pentagonator} $\nu(x_1,x_2,x_3,x_4)\in k^\times$, which is classified by degree-4 group cohomology class $H^4(G_0,k^\times)$ \cite{Zhu:2019}. 
\end{remark}

In the following, we shall interchangeably understand skeletal 2-groups both in terms of a crossed complex, or as a categorical group; in particular, we shall call the classifying Postnikov class $\tau$ also as the "associator". The same will be done for 2-algebras introduced later.

\subsubsection{2-gauge theory}\label{2gaugethry}
The notion of 2-groups is the natural setting to study a {\it 2-gauge theory}, ie. a categorification of the usual gauge theory. Recall that the fundamental structure underlying gauge theory is a principal bundle $P\rightarrow X$ with connection \cite{book-baez}. Locally, one may describe the connection in terms of a Lie algebra-valued connection 1-form $A$.%\in\Omega^1(X)\otimes \g$, with $\g$ the Lie algebra of the gauge group. 

Similarly, a principal 2-bundle $\mathcal{P}\rightarrow X$ on $X$ with connection is described by, locally, a pair of connection forms
\begin{equation}
    A\in \Omega^1(X)\otimes\g_0,\qquad \Sigma\in\Omega^2(X)\otimes\g_{-1},\nonumber 
\end{equation}
where $\g_0$ and $\g_{-1}$ are the Lie algebras of respectively $G_0$ and $G_{-1}$.
The 2-form $\Sigma$ allows us to define a {\it 2-holonomy} $\operatorname{2Hol}_\mathcal{P}(S) = S\exp\int_S \Sigma$ across a 2-dimensional surface $S\subset X$, where $S\exp$ is the surface-ordered exponential defined in \cite{Yekuteli:2015}.  See also \cite{Wockel2008Principal2A}.

The covariant geometric quantities are given by the fake-flatness and the 2-curvature \cite{Mikovic:2016xmo,chen:2022}
\begin{equation}
    \mathcal{F} = F-t\Sigma,\qquad \cA = d_A\Sigma - \kappa(A),\nonumber
\end{equation}
where $\kappa(A)$ is the evaluation of a 3-cocycle representative of the Postnikov class $\kappa\in H^3(\mathfrak{n},V)$ on the 1-connection $A$. For the details regarding the gauge structure, see \cite{Martins:2006hx,Mikovic:2011si}. The same construction can be applied to define flat 2-connections valued in the skeleton $\pi\g$ of $\g$.

\begin{remark}
Note that here we have used a Lie algebra crossed-module to model the 2-gauge theory. This is because the gauge structure associated to a weak Lie 2-algebra is problematic: the 2-gauge algebra does not close unless fake-flatness $\mathcal{F}=0$ is always satisfied \cite{Kim:2019owc}. There is no such problem in the crossed-module formulation, as long as we introduce the first descendant $\zeta(A,\lambda)$ \cite{Kapustin2017} of the Postnikov class $\kappa$ into the 1-gauge transformation \cite{Kim:2019owc}.
\end{remark}

\medskip

In the above, we have outlined the continuum description of a 2-gauge theory on $X$ using Lie 2-algebras, but a similar construction can be applied to yield a discrete 2-gauge theory \cite{Bullivant:2017sjz,Kapustin2017}. In particular, we define a discrete 2-connection $(A,\Sigma)$ as a pair of cochains
\begin{equation}
    A \in C^1(X,G_0),\quad  %\rightarrow
    \Sigma\in C^2(X,G_{-1}) \nonumber
\end{equation}
satisfying the flatness conditions
\begin{equation}
    (d_AA)_{(012)} = t\Sigma_{(012)},\qquad (d_A\Sigma)_{(0123)} = \kappa(A_\text{tree})\nonumber
\end{equation}
on a %discretized simpliciation 
simplicial discretization of $X$, where we denote an oriented $k$-simplex by an ordered tuple $(0\cdots k)$ and $d$ is now the differential on group cochains. Here, $A_\text{tree}$ denotes the evaluation of $A$ on the edges $(01),(02),(13)$ of $(0123)$ \cite{Kapustin2017}, as depicted in red in Fig. \ref{fig:tree}.

\begin{figure}[h]
    \centering
    \includegraphics[width=0.65\columnwidth]{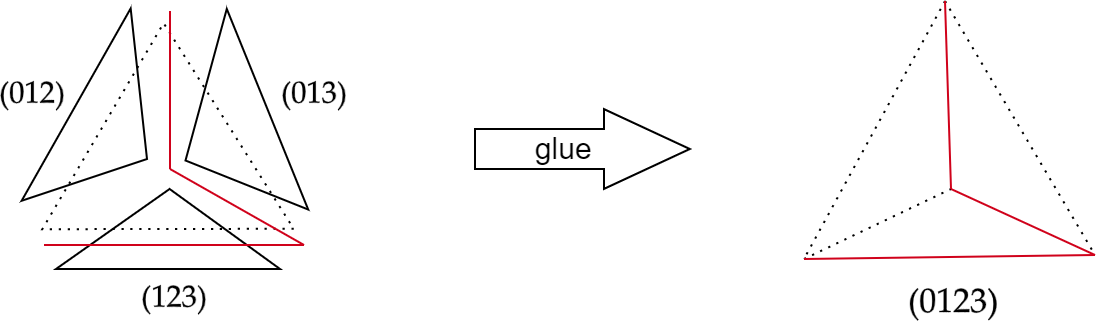}
    \label{fig:tree}
\end{figure}

\paragraph{Topological 2-gauge theories and higher-SPT phases.}
In general, a 2-group homomorphism $\mathbb{G}\rightarrow \mathbb{G}'$ induces a 2-bundle homomorphism $\mathcal{P}\rightarrow \mathcal{P}'$ which preserves the covariant curvature quantities $(\mathcal{F},\cA)$. As such, by the Gerstenhaber classification theorem above, 2-gauge bundles are classified up to 2-bundle maps by the Ho{\'a}ng data $(\pi \mathbb{G},\tau) = (N,M,\tau)$. 

The topological information encoded in a 2-gauge theory with structure 2-group $\mathbb{G}$,  are thus captured up to homotopy by a representative 2-bundle $\pi\mathcal{P}$, whose structure 2-group is the skeleton $\pi \mathbb{G}$ of $\mathbb{G}$. Indeed, it is known that flat 2-connections are in one-to-one correspondence with homotopy classes of classifying maps $f: X\rightarrow B(\pi \mathbb{G})$, where $B(\pi \mathbb{G})$ is the classifying space of the skeleton $\pi \mathbb{G}=M\rightarrow N$ (via the Postnikov tower construction) \cite{Kapustin2017,Zhu:2019}. 

These classifying maps define 2-group homomorphisms $\Pi_2X\rightarrow \pi \mathbb{G}$ from the homotopy 2-type $\Pi_2X$ of $X$ to $\pi \mathbb{G}$; such maps have been used to construct  3d topological defects \cite{Ang:2018rls}. The associated 2-gauge theory has also been proposed to describe higher-symmetry protected topological (SPT) phases of matter in (3+1)D \cite{Kapustin2017}. It is typically understood and accepted that the excitations in such 2-gauge theories {should} form a braided monoidal 2-category that characterizes 4-dimensional topological phases. In this paper, we explicitly demonstrate this for the example of $\bbZ_2$. 

\smallskip

In the following, we shall recall the Drinfel'd double 2-bialgebra construction of \cite{Chen:2023tjf}, but specialized to {\it skeletal} 2-group algebras $k\mathbb{G}$. We will also describe some basic monoidal properties of its 2-representation 2-category $\operatorname{2Rep}_\text{wk}(D(\mathbb{G}))$, where $\tau$ is the Postnikov class of $\mathbb{G}$. We shall use this structure in order to construct a (3+1)D topological field theory (TFT) and study its excitations as 2-representations of the Drinfel'd double $D(\mathbb{G})$ of $\mathbb{G}$. All 2-groups in the following shall be understood as skeletal.

\subsection{A lightning review on 2-bialgebras and Drinfel'd double 2-bialgebras}
Let $k$ denote the ground field in the following. Recall a bialgebra $(H,\cdot,\Delta)$ is a vector space $H$ equipped with an associative algebra structure and a coassociative coalgebra structure $\Delta:H\rightarrow H\otimes H$ that is mutually compatible, in the sense that
\begin{equation}
    \Delta(h h') = h_{(1)}  h'_{(1)}\otimes h_{(2)} h'_{(2)},\qquad \Delta(h)= h_{(1)}\otimes h_{(2)} \nonumber
\end{equation}
for all $h\in H$. Note we have used the conventional Sweedler notation for the coproduct $\Delta$.

\medskip 

Let $(\cA_0,\Delta_0'),(\cA_{-1},\Delta_{-1})$ denote a pair of associative bialgebras, which are equipped with the following linear maps $$\cdot:\cA_0\otimes\cA_{-1}\oplus\cA_{-1}\otimes\cA_0\rightarrow\cA_{-1},\qquad \Delta_0=\Delta_0^l\oplus\Delta_0^r:\cA_0\rightarrow \cA_{-1}\otimes\cA_0\oplus \cA_0\otimes\cA_{-1}.$$ We say that $\cA_{-1}$ is a $\cA_0$-bimodule iff 
\begin{equation}
        y(x\cdot y') = (y\cdot x)y',\qquad \forall y,y'\in\cA_{-1},~x\in\cA_0,\nonumber
\end{equation}
and $\cA_0$ is a $\cA_{-1}$-cobimodule iff
\begin{equation*}
    (1\otimes\Delta_0^l)\circ \Delta_0' = (\Delta_0^r\otimes 1)\circ \Delta_0'
\end{equation*}
% \begin{eqnarray}
%      (1\otimes \Delta_0')\circ \Delta_0^l &=& (\Delta_0^l\otimes 1)\circ\Delta_0',\nonumber\\
%      ,\nonumber\\
%      . \nonumber 
% \end{eqnarray}
The following is from \cite{Chen:2023tjf}.
\begin{definition}\label{assoc2bialg}
An \textbf{associative 2-bialgebra} $(\cA,\cdot,\Delta)$ is a bialgebra homomorphism $t:\cA_{-1}\rightarrow\cA_0$ such that
\begin{enumerate}
    \item $\cA_{-1}$ is a $\cA_0$-bimodule and $\cA_0$ is a $\cA_{-1}$-cobimodule,
    \item $t$ is two-sided \textbf{$\cA_0$-equivariant} and \textbf{$\cA_{-1}$-coequivariant},
    \begin{equation}
        \begin{cases}t(x\cdot y) = x t(y) \\ t(y\cdot x)=t(y)x\end{cases},\qquad D_t^+\circ\Delta_{-1}= \Delta_0\circ t\label{algpeif1}
    \end{equation}
    for all $y\in\cA_{-1},x\in\cA_0$, where $D_t^\pm = t\otimes 1 \pm 1\otimes t$, and
    \item the \textbf{Peiffer/coPeiffer identities} are satisfied,
    \begin{equation}
        t(y)\cdot y' = yy' = y\cdot t(y'),\qquad D_t^-\circ \Delta_0=0,\label{algpeif2}
    \end{equation}
    where $y,y'\in\cA_{-1}$, and 
    \item the {\bf 2-bialgbera axioms} are satisfied,
    \begin{eqnarray}
    \Delta_{-1}(x\cdot y)=  \bar x_{(1)}\cdot y_{(1)}\otimes  \bar x_{(2)}\cdot y_{(2)},&\qquad& \Delta_{-1}(y\cdot x)= y_{(1)}\cdot \bar x_{(1)}\otimes y_{(2)}\cdot \bar x_{(2)},\nonumber\\
    \Delta_0^l(xx')=x_{(1)}^lx_{(1)}'^l\otimes x_{(2)}^lx_{(2)}'^l,&\qquad& \Delta_0^r(xx') = x_{(1)}^rx_{(1)}'^r\otimes x_{(2)}^rx_{(2)}'^r,\label{2algcoprod}
    \end{eqnarray}
    where $x,x'\in \cA_0,y\in \cA_{-1}$ and we have used the conventional Sweedler notation
    \begin{equation}
        \Delta_0(x) = x_{(1)}^l\otimes x_{(2)}^l + x_{(1)}^r\otimes x_{(2)}^r,\qquad \Delta_0'(x) = \bar x_{(1)}\otimes \bar x_{(2)}.\nonumber
    \end{equation}
\end{enumerate}
\end{definition}
\noindent In other words, an associative 2-bialgebra $(\cA,\cdot\Delta)$ is simultaneously an associative 2-algebra $(\cA,\cdot)$ --- or equivalently a crossed-module of associative algebras \cite{Wagemann+2021} --- and a coassociative 2-coalgebra $(\cA,\Delta)$, which are mutually compatible according to the above 2-bialgebra axioms. 

\begin{definition}
A 2-bialgebra $\cA$ is {\it unital} if there exists a unit $\eta=(\eta_{-1},\eta_0):k\rightarrow \cA$ and a counit $\epsilon = (\epsilon_{-1},\epsilon_0):\cA\rightarrow k$ such that
\begin{gather}
  \eta_{-1} y = y \eta_{-1} = y,  \quad \eta_0  x = x \eta_0 = x,\quad
   \eta_0\cdot y = y\cdot \eta_0 = y, \nonumber\\
    \operatorname{id}=(\operatorname{id}\otimes \epsilon_{-1})\circ\Delta_{-1},\quad \operatorname{id} = (\epsilon_{-1}\otimes \operatorname{id})\circ\Delta_{-1},\nonumber\\
    % \operatorname{id}&=&(\operatorname{id}\otimes \epsilon_0)\circ\Delta_0^l = (\epsilon_{-1}\otimes\operatorname{id})\circ\Delta_0^l,\nonumber\\
    % \operatorname{id}&=&(\operatorname{id}\otimes\epsilon_{-1})\circ\Delta_0^r = (\epsilon_0\otimes\operatorname{id})\circ\Delta_0^r.\nonumber
    \operatorname{id}=(\epsilon_{-1}\otimes \operatorname{id})\circ\Delta_0^l, \quad \operatorname{id} =(\operatorname{id}\otimes \epsilon_{-1})\circ\Delta_0^r.\nonumber
\end{gather}
Moreover, $t$ should respect the units and counits such that $\eta_{-1} = t\eta_0$ and $\epsilon_0 = \epsilon_{-1}t$. 
\end{definition}

% as well as the fact that the (co)bimodule structure preserves the (co)unit:
% \begin{gather}
%     x\cdot\eta_{-1} = \eta_{-1}\cdot x = \epsilon_0(x)\eta_{-1}\nonumber \\
%     \eta_{-1}\epsilon_0=(\operatorname{id}\otimes \epsilon_0)\circ\Delta_0^l=(\epsilon_0\otimes \operatorname{id})\circ\Delta_0^r.\nonumber
% \end{gather}
We note that for non-trivial $t\neq 0$, the component $\Delta_0'$ of the coproduct is constrained to satisfy $$\Delta_0' = (t\otimes 1)\circ\Delta_0^l= (1\otimes t)\circ\Delta_0^r.$$ Otherwise (ie. for {\it skeletal} 2-algebras), all of the components of the coproduct are independent.

\paragraph{Classification of 2-algebras.} Similar to the case of 2-groups, an associative 2-algebra homomorphism $f=(f_{-1},f_0):\cA\rightarrow \cA'$ is a cochain map between 2-algebras, or equivalently a graded pair of algebra homomorphisms that respect the underlying bimodule structure, such that
\begin{enumerate}
    \item $f_0:\cA_0\rightarrow \cA_0'$ and $f_{_1}:\cA_{-1}\rightarrow \cA_{-1}'$ are algebra homomorphisms,
    \item $f_{-1}(x\cdot y) = (f_0x)\cdot'(f_{-1}y)$ and $f_{-1}(y\cdot x) = (f_{-1}y)\cdot' (f_0x)$ for each $x\in\cA_0,y\in\cA_{-1}$, and
    \item $f_0t=t'f_{-1}$.
\end{enumerate}
We say that two 2-algebras are equivalent, $\cA\sim\cA'$, if there exists a weakly invertible (ie. invertible up to cochain homotopy) 2-algebra homomorphism between them. The following is known in \cite{Wagemann+2021}.
\begin{theorem}\label{2algclass}
{\bf (Gerstenhaber, attr. Wagemann).} Associative 2-algebras are are classified up to equivalence by a third {\bf Hochschild cohomology} class $\cT\in HH^3(\cN,V)$, where $\cN = \operatorname{coker}t$ and $\cM=\operatorname{ker}t$. 
\end{theorem}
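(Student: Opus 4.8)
The plan is to mirror the Gerstenhaber--Mac-Lane classification of 2-groups recalled above, replacing the group cohomology $H^\bullet(N,M)$ with the Hochschild cohomology $HH^\bullet(\cN,V)$ of the algebra $\cN=\operatorname{coker}t$ with coefficients in $V=\cM=\ker t$. The first step is to verify that this target even makes sense, i.e. that $\cM$ is an $\cN$-bimodule. The equivariance relations $t(x\cdot y)=x\,t(y)$ and $t(y\cdot x)=t(y)\,x$ in \eqref{algpeif1} make $t$ a $\cG_0$-bimodule map, so $\cM=\ker t$ is a sub-$\cG_0$-bimodule of $\cG_{-1}$, and $t(x\cdot m)=x\,t(m)=0$ shows the action preserves $\cM$. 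That the residual $\cG_0$-action descends to $\cN=\cG_0/\operatorname{im}t$ follows from the Peiffer identity \eqref{algpeif2}: for $x=t(z)\in\operatorname{im}t$ and $m\in\cM$ one computes $x\cdot m=t(z)\cdot m=zm=z\cdot t(m)=0$, and likewise $m\cdot x=0$, so $\operatorname{im}t$ acts trivially and $\cM$ becomes a genuine $\cN$-bimodule. Thus $HH^3(\cN,\cM)$ is well defined.

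Next I would reduce an arbitrary associative 2-algebra to its skeleton $\pi\cG=\cM\xrightarrow{0}\cN$. Since $k$ is a field the 2-term complex $t:\cG_{-1}\to\cG_0$ splits: choosing $\cG_{-1}=\cM\oplus\cC$ with $t|_\cC:\cC\xrightarrow{\sim}\operatorname{im}t$, and a complement $\cG_0=\operatorname{im}t\oplus\cN'$ with $\cN'\cong\cN$, produces an explicit deformation retraction of $(\cG,t)$ onto $(\pi\cG,0)$ together with a contracting homotopy $h$. Transporting the multiplication and bimodule action along this retraction by the homotopy transfer theorem endows $\pi\cG$ with the induced strict products on $\cN$ and on the $\cN$-bimodule $\cM$, plus a single surviving higher operation $\cT=m_3:\cN^{\otimes3}\to\cM$ assembled from $h$ and the non-multiplicativity of the chosen section. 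For degree reasons all operations $m_{n\ge4}$ vanish, so $\cT$ is the only new datum; it is precisely the associator, the Hochschild analogue of the 2-group 3-cochain $\tau$ appearing in the pentagon \eqref{pentagon}.

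The coherence of the transferred structure then determines $\cT$ up to the expected equivalence. The arity-four relation of the transferred structure --- equivalently the pentagon-type identity for the weak associator, the direct counterpart of \eqref{pentagon} --- collapses, once the strictly associative contributions cancel, to $(\delta_{HH}\cT)(a,b,c,d)=a\cdot\cT(b,c,d)-\cT(ab,c,d)+\cT(a,bc,d)-\cT(a,b,cd)+\cT(a,b,c)\cdot d=0$, i.e. $\cT$ is a Hochschild 3-cocycle valued in $\cM$. Conversely, any $\cN$-bimodule $\cM$ together with a 3-cocycle $\cT$ reconstitutes a skeletal associative 2-algebra, so the assignment hits every class of $HH^3(\cN,\cM)$; this is the algebraic shadow of presenting $\cG$ as a crossed $2$-fold extension $0\to\cM\to\cG_{-1}\to\cG_0\to\cN\to0$.

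Finally I would show $[\cT]$ is a complete invariant. After skeletonizing both sides, a cochain homotopy (quasi-isomorphism) between two such 2-algebras becomes an equivalence $(\cN,\cM,\cT)\to(\cN,\cM,\cT')$ that fixes the homology, whose leading datum is a bimodule 2-cochain $g:\cN^{\otimes2}\to\cM$; tracking its compatibility with $m_3$ yields exactly $\cT-\cT'=\delta_{HH}g$. Hence equivalent 2-algebras carry cohomologous associators, and $\cG\mapsto[\cT]$ descends to a bijection onto $HH^3(\cN,\cM)$. I expect the principal obstacle to be the homotopy-transfer bookkeeping: identifying the transported $m_3$ with an honest Hochschild 3-cochain, extracting the cocycle identity cleanly from the coherence relation, and verifying that the gauge freedom in the choice of splitting and homotopy corresponds precisely to Hochschild coboundaries, which is what secures the equivalence-$\Leftrightarrow$-coboundary direction.
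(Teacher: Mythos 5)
The paper does not prove this statement at all: it is quoted as a classical result, attributed to Gerstenhaber in the form given in Wagemann's book, and the cited proof runs through the interpretation of $HH^3(\cN,\cM)$ as classifying \emph{crossed 2-fold extensions} $0\to\cM\to\cG_{-1}\xrightarrow{t}\cG_0\to\cN\to 0$: one chooses linear sections, extracts a Hochschild 3-cocycle from the failure of the section to be multiplicative, and shows that congruent extensions yield cohomologous cocycles. Your homotopy-transfer route is therefore a genuinely different strategy, and it is a sound one: the degree count showing $m_{n\geq 4}=0$ and that the only surviving higher operation is $m_3:\cN^{\otimes 3}\to\cM$ is correct, the Stasheff identity in arity four does reduce to the Hochschild cocycle condition, and the gauge freedom in the splitting/homotopy does correspond to coboundaries. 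What your route buys is a conceptual explanation of \emph{why} $\cT$ is the unique residual datum, and a direct bridge to the weak/skeletal picture the paper uses later (the associator of Section 2.2.2); what the extension-theoretic route buys is that it never leaves the world of strict crossed modules, which is the world in which the theorem's equivalence relation is actually stated.

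Two points in your sketch need shoring up. First, a 2-algebra in the paper's sense is not literally a dg-algebra concentrated in degrees $[-1,0]$: in such a dg-algebra the product $\cG_{-1}\otimes\cG_{-1}\to\cG_{-2}=0$ vanishes for degree reasons, whereas the 2-algebra carries a genuine product on $\cG_{-1}$. Before the homotopy transfer theorem applies you need the (easy but essential) lemma that the Peiffer identity $yy'=t(y)\cdot y'=y\cdot t(y')$ makes that product redundant, so that strict 2-algebras are equivalent to dg-algebras in degrees $[-1,0]$ with matching quasi-isomorphisms --- this is precisely the content of the paper's remark that $\cM$ is a square-free ideal in the nucleus. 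Second, your argument naturally classifies up to $A_\infty$-quasi-isomorphism, but the theorem classifies up to equivalences generated by \emph{strict} 2-algebra homomorphisms. For the direction ``cohomologous $\Rightarrow$ equivalent'' you must rectify an $A_\infty$-quasi-isomorphism into a zig-zag of strict quasi-isomorphisms, and for surjectivity note that a nonzero class is \emph{not} realized by a strict skeletal 2-algebra (your ``skeletal 2-algebra carrying $\cT$'' is a weak/minimal model): realizing it strictly requires the crossed-2-fold-extension construction. Both steps are standard, but without them the bijection with the strict classification is not yet established.
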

\noindent The Peiffer identity implies that $\cM\subset Z(\cA_{-1})$ is in the {\it nucleus} of $\cA_{-1}$.

\subsubsection{2-group bialgebras}\label{2grpbialg}
The main example of 2-bialgebras of interest in this paper comes from finite 2-groups $\mathbb{G}$, in which we take the group algebra functor on each graded component $kG_{-1},kG_0$ \cite{Pfeiffer2007}. 

One may try to extend all 2-group structures $k$-linearly, but the main difficulty is that $G_{-1}$ acts on $G_0$ by group automorphism, not algebra automorphism, as required by Peiffer identity. As a result, the $kG_0$-bimodule structure on $kG_{-1}$ induced this way, namely
\begin{equation}
    \begin{cases}
    x\cdot y = x\rhd y\\ y\cdot x = x^{-1}\rhd y
    \end{cases},\qquad x\in kG_0,~y\in kG_{-1},\nonumber
\end{equation}
does not give a bona fide 2-algebra.

There are several ways to amend this. One is to quotient $kG_{-1}$ out by an ideal generated by $yy' - yy'y^{-1}$; this is the method given in \cite{Wagemann+2021}, and is much too trivial for our purposes. We are looking for an association $\mathbb{G}\rightarrow \cA$ which descends to a well-defined map on the equivalence classes.

% An alternative way is to suppose the action $x\rhd\in\operatorname{Inn}G_{-1}$ is inner for all $x\in G_0$, and endow a $kG_0$-bimodule structure $\cdot$ on $kG_{-1}$ inherited thus:
% \begin{equation}
%     x\rhd y = x\cdot y\cdot x^{-1},\nonumber
% \end{equation}
% whence the 2-algebra axioms follow from the 2-group axioms. In either case, there is a correspondence $\tau\mapsto \cT$ between the classifying group 3-cocycle $\tau\in H^3(N,V)$ of the 2-group $G$ and the Hochschild 3-cocycle $\cT\in H^3(\mathcal{N},V)$ of the 2-group algebra $kG$.

We can describe this construction explicitly in the skeletal case. Let $N$ be a finite group and let $M$ denote an Abelian $N$-module. Given the skeletal 2-group $\mathbb{G}=M\xrightarrow{0}N$, we extend all structures $k$-linearly {\it except} for the $t$-map. We take, instead, the skeletal 2-algebra $\cA=k\mathbb{G}$ with $t=0:kM\rightarrow kN$. If the Postnikov class $\tau\in H^3(N,M)$ of $\mathbb{G}$ is non-trivial, then theassociated Hochschild class $\cA=k\mathbb{G}$ is also non-trivial.
\begin{remark}\label{2alg-2grp}
In certain cases, the above construction is in fact a bijection. To see this, we can leverage a natural (ring) isomorphism of the Hochschild cohomology $HH^\ast(kN,kN)$ with the group cohomology $H^\ast(N,kN)$ \cite{Siegel:1999}. If $N$ is Abelian, then
\begin{equation}
    HH^\ast(kN,kN) \cong kN\otimes_k H^\ast(k N,k).\nonumber
\end{equation}
Assuming that $N,M$ are isomorphic, then in degree-3 we have an isomorphism of Abelian groups
\begin{equation}
    HH^3(kN,kM) \cong kM\otimes_k H^3(k N,k)\cong H^3(N,M)\otimes k.\label{class}
\end{equation}
Given the element $\tau\in H^3(N,M)$ classifying the 2-group $G = M\xrightarrow{0} N$, we take the corresponding element $\cT\in HH^3(kN,kM)$ for the associated 2-algebra $\cA=kG$.
\end{remark}

\paragraph{Skeletal Drinfel'd double 2-bialgebra of a finite cyclic Abelian group.} We take now $N$ to be a finite cyclic Abelian group, and let $M$ denote an Abelian $N$-module. Suppose further that $N\cong M$, and let $p:kN\xrightarrow{\sim}kM$ denote the induced linear isomorphism. The simplest graded coproduct $\Delta$ we can endow on the 2-algebra $k\mathbb{G} = kM\xrightarrow{0}kN$ is the grouplike coproduct:
\begin{eqnarray}
    \Delta_{-1}(y) = y\otimes y,\qquad \Delta_0'(x) = x\otimes x,\nonumber\\
    \Delta_0(x) =  p(x)\otimes x + x \otimes p(x),\label{grouplike}
\end{eqnarray}
where $x\in kN,y\in kM$. By definition, $\Delta$ is coassociative and admits the usual antipode $S_0^0(x) = x^{-1}, S_0^1(y) = y^{-1}$,
% such tha
% \begin{eqnarray}
%     \cdot\circ(\operatorname{id}\otimes  S_0^1)\circ \Delta_{-1} &=& \cdot\circ( S_0^1\otimes \operatorname{id})\circ\Delta_{-1}=\eta_{-1}\circ\epsilon_{-1},\nonumber\\
%     \cdot\circ(\operatorname{id}\otimes  S_0^0)\circ \Delta_0^l &=& \cdot\circ(S_0^1\otimes \operatorname{id})\circ\Delta_0^l=\eta_{-1}\circ\epsilon_0,\nonumber\\
%     \cdot\circ(\operatorname{id}\otimes  S_0^1)\circ \Delta_0^r &=& \cdot\circ(S_0^0\otimes \operatorname{id})\circ\Delta_0^r=\eta_{-1}\circ\epsilon_0,\label{antpode}
% \end{eqnarray}
% as well as 
together with the unit/counit
\begin{equation}
    \begin{cases}\eta_0=1\in N \\ \eta_{-1}=1\in M\end{cases},\qquad \begin{cases}\epsilon_0(x)=\delta_{x\eta_0} \in k \\ \epsilon_{-1}(y)=\delta_{y\eta_{-1}}\in k\end{cases}.\nonumber
\end{equation}
Moreover, this coproduct can be very easily shown to satisfy the 2-bialgebra axioms,
\begin{eqnarray}
    \Delta_{-1}(x\cdot y) = x\cdot y\otimes x\cdot y,\qquad \Delta_{-1}(y\cdot x) = y\cdot x\otimes y\cdot x,\nonumber\\
    \Delta_0(xx') = p(x)p(x')\otimes  xx' +  xx'\otimes  p(x)p(x'),\nonumber
\end{eqnarray}
where we have used the fact that $p$ is a group homomorphism $p(xx') = p(x)p(x')$. 

This defines $(k\mathbb{G},\cdot,\Delta,S)$ as a unital 2-Hopf algebra (see Appendix A of \cite{Chen:2023tjf}), but we require more structure: we require it to define a {\it Drinfel'd double}. The Drinfel'd double, in this context of 2-term $A_\infty$-algebras, is briefly a graded bicrossed product $\cA\bar{\bowtie}\cH$ of two mutually paired 2-bialgebras $\cA,\cH$ \cite{Chen:2023tjf}, and is therefore in particular naturally \textbf{self-dual} in the following sense.

We define the dual of the complex $\cA=\cA_{-1}\xrightarrow{t}\cA_0$ as the shifted cochain complex $\cA^* = \cA_0^*\xrightarrow{t^*} \cA^*_{-1}$, which has degrees concentrated in $-1,0$. The natural evaluation pairing then defines a symmetric bilinear form of degree-1 $$\langle-,-\rangle: \cA_0^*\otimes \cA_0 \oplus \cA_{-1}^*\otimes\cA_{-1}\rightarrow k,\qquad \langle (g,f),(y,x)\rangle = g(x) + f(y),$$ where $x\in\cA_0,y\in\cA_{-1}$ and $g\in \cA_0^*,f\in\cA_{-1}^*$. Now if $(\cA,\cdot)$ is a 2-algebra, then we may use this bilinear form to define several linear maps on $\cA^*$:
\begin{eqnarray}
    \Delta^*_{-1}: \cA_0^*\rightarrow \cA_0^*\otimes\cA_0^*,&\qquad & \langle \Delta_{-1}^*(g),x_1\otimes x_2\rangle = \langle g,x_1x_2\rangle,\nonumber\\
    (\Delta^*_0)^l:\cA^*_{-1}\rightarrow \cA_0^*\otimes \cA_{-1}^*,&\qquad& \langle (\Delta_0^*)^l(f),x\otimes y\rangle = \langle f,x\cdot y\rangle,\nonumber\\
    (\Delta_0^*)^r:\cA^*_{-1}\rightarrow \cA_{-1}^*\otimes \cA_0^*,&\qquad& \langle (\Delta_0^*)^r(f),y\otimes x\rangle = \langle f,y\cdot x\rangle.\label{dualcoprod}
\end{eqnarray}
The self-duality result for 2-bialgebras is the following \cite{Chen:2023tjf}.
\begin{proposition}\label{dual2bialg}
$(\cA,\cdot,\Delta)$ is a (unital) 2-bialgebra iff its dual $(\cA^*,\cdot^*,\Delta^*)$ is also a (unital) 2-bialgebra. Moreover, Drinfel'd double 2-bialgebras $(\cD,\cdot,\Delta)$ are self-dual --- namely it is isomorphic as a 2-bialgebra to $(\cD^*,\cdot^*,\Delta^*)$.
\end{proposition}
\noindent This statement is intimated related to the recent result \cite{decoppet2023drinfeld} that Drinfel'd centre 2-categories are Morita self-dual, as we expect the 2-representation 2-category associated to Drinfel'd double 2-bialgebras are Drinfel'd centres.

To endow our particular skeletal 2-bialgebra $(k\mathbb{G},\cdot,\Delta)$ with the structure of a Drinfel'd double, we take $M= \widehat{N}$ --- the Pontrjagyn dual of $N$ --- such that
\begin{equation}
    k\mathbb{G} \equiv D(BM) = kBM\bar{\bowtie} kN,\qquad \begin{cases} BM = M\xrightarrow{0}\ast \\ N_\ast = \ast\xrightarrow{0}N\end{cases}.\nonumber
\end{equation}
The map $p: x\mapsto p(x)=\hat x$ is the Pontrjagyn duality isomorphism (recall both $N,M$ are cyclic Abelian groups), and $kM$ is a $N$-bimodule canonically through the dual left- and right- actions
\begin{equation}
    (x\cdot y)(x') = y(x^{-1}x'),\qquad (y\cdot x)(x') = y(x'x),\label{dualaction}
\end{equation}
where $x,x'\in N$ and $y\in\widehat{N}=M$. This two-sided action dualizes to the grouplike coproduct $\Delta_0$ in  \eqref{grouplike}, as required by self-duality and \eqref{dualcoprod}. We call $(D(BM),\cdot,\Delta,S)$ the (skeletal) Drinfel'd double of $M$.

\medskip

Another important property of a Drinfel'd double $D(BM)$ is its \textbf{factorizability}, which means that it fits into a cospan of 2-bialgebra injections 
\begin{equation}
    kBM \hookrightarrow D(BM) \hookleftarrow kN. \nonumber
\end{equation}
If these injections $kBM\otimes kN\rightarrow D(BM)$ induce an equivalence $D(BM)\sim kBM\bar{\bowtie}kN$, then this forbids $D(BM)$ from carrying a non-trivial Postnikov class. This is because the Postnikov class of the bicrossed product $kBM\bar{\bowtie}kN$ \cite{Chen:2023tjf} is constructed out of those of each of its sectors $kM,kN$, which have trivial Hochschild cohomology groups
\begin{equation}
    HH^3(\ast,kM) = 0,\qquad HH^3(kN,\ast) = 0.\nonumber
\end{equation}
Therefore, in order to allow $D(BM)$ to carry a non-trivial Postnikov class, these 2-bialgebra injections cannot extend to an equivalence. 

In any case, we shall allow $D(BM)$ to carry a non-trivial Hochschild 3-cocycle $\cT\in HH^3(kN,kM)$. As $M=\widehat{N}\cong N$ in this case, {\it Remark \ref{2alg-2grp}} states that $\cT$ comes from the Postnikov class $\tau\in H^3(N,M)$ of its underlying 2-group $\mathbb{D} = (M,N,\rhd)$, hence we shall abuse notation and denote the Hochschild 3-cocycle $\cT$ by $\tau$ as well.

\subsubsection{Weakening the 2-bialgebra structure}\label{weak2bialgstructure}
Going back to {\bf Definition \ref{assoc2bialg}}, let $(\cA,\cdot,\Delta)$ denote a 2-bialgebra. There is a corresponding notion of weak 2-algebras and weak 2-bialgebras, which is accomplished by violating associativity/coassociativity in a homotopical manner. More precisely, we introduce the following maps
\begin{equation}
    \cT:\cA_0^{3\otimes}\rightarrow \cA_{-1},\qquad \Delta_1:\cA_0\rightarrow\cA_{-1}^{3\otimes},\nonumber
\end{equation}
which measures the failure of the associativity in $\cA$. We call these maps $\cT,\Delta_1$ the {\it homotopy associator/coassociator} of the 2-bialgebra $(\cA,\cdot,\Delta)$, respectively, and they are required to satisfy a particular algebraic condition captured by certain Hochschild cocycle conditions. For details, see \cite{Chen:2023tjf}.

\medskip

In a mathematically more elegant way, a "weak 2-algebra" $(\cA,\cdot,\cT)$ is simply an $A_\infty$-algebra $(\cA,m_n)$ with non-trivial $n$-nary product $m_n\in \operatorname{Hom}^{2-n}(A^2,A)$ for $n\leq 3$ (see for instance \cite{Stasheff:1963}). The map $m_1 = t:\cA_{-1}\rightarrow \cA_0$and the binary product $\cdot=m_2$ makes $A$ into a graded differential algebra, and $m_3=\cT$ plays the role of the homotopy associator --- when $m_3=0$, we get precisely the associative formulation above (and also in \cite{Wagemann+2021}). Maps between weak 2-algebras are therefore $A_\infty$-algebra homomorphisms.
\begin{definition}\label{2alghomdef}
    A \textbf{2-algebra homomorphism} $(f,\varphi): (\cA,\cdot,\cT)\rightarrow (\cA',\cdot',\cT')$ between two weak 2-algebras consists of a chain map $f=(f_1,f_0): \cA\rightarrow \cA'$, such that the following diagram commutes,
        \begin{equation}
            \begin{tikzcd}
\cA_{-1} \arrow[r, "t"] \arrow[d, "f_1"] & \cA_0 \arrow[d, "f_0"] \\
\cA_{-1}' \arrow[r, "t'"]                & \cA_0'                
\end{tikzcd},\label{commsq}
        \end{equation}
        and a bilinear map $\varphi:\cA_0^{2\otimes}\rightarrow \cA_{-1}'$ such that
        \begin{enumerate}
    \item for each $x,x_1,x_2\in\cA_0$ and $y\in\cA_{-1}$, we have
    \begin{gather}
    t'(\varphi(x_1,x_2))  = f_0(x_1x_2)  - f_0(x_1)f_0(x_2),\nonumber\\
    \varphi(ty,x) = f_1(y\cdot x) - f_1(y)\cdot' f_0(x),\qquad \varphi(x,ty) = f_1(x\cdot y) - f_0(x)\cdot' f_1(y),\label{weak2alg}
\end{gather}
and 
\item for each $x_1,x_2,x_3\in \cA_0$, we have 
\begin{eqnarray}
    \cT'(f_0(x_1),f_0(x_2), f_0(x_3)) - f_1(\cT(x_1,x_2,x_3)) &=& f_0(x_1)\cdot' \varphi(x_2,x_3) -\varphi(x_1x_2,x_3) \nonumber\\
    &\qquad& +~ \varphi(x_1,x_2,x_3) - f_1(x_1,x_2)\cdot' f_0(x_3)).\label{weak2hom}
\end{eqnarray}
\end{enumerate}
\end{definition}
\noindent On the other hand, a weak 2-coalgebra $(\cA,\Delta,\Delta_1)$ is a {\it shifted} $A_\infty$-coalgebra $(\cA,D_n[1])$, whose $n$-nary coproduct $D_n\in \operatorname{Hom}^{2-n}(\cA[1],\cA^2[1])$ for $n=2,3$ describes the coproduct $D_2=\Delta$ and the coassociator homotopy $D_3=\Delta_1$.

\medskip

Provided certain compatibility conditions between these maps hold, we form a \textit{weak 2-bialgebra} (or, from now on, simply a 2-bialgebra) and the duality result {\bf Proposition \ref{dual2bialg}} extends to the weak case \cite{Chen:2023tjf}. The coassociator $\Delta_1$ is of particular importance, as it defines the natural associator morphisms $a$ on $\operatorname{2Rep}_\text{wk}(\cA)$. The corresponding pentagon relations are satisfied provided the following compatibility condition\footnote{Here we are using a shorthand for tensor extensions of the coproduct maps, in which 
\begin{eqnarray}
    \Delta_{-1}\circ\Delta_{-1} &\equiv& (\Delta_{-1}\otimes 1)\circ\Delta_{-1} - (1\otimes \Delta_{-1})\circ\Delta_{-1},\nonumber\\
    (\Delta_{-1} + \Delta_0)\circ\Delta_0 &\equiv& \left[(\Delta_{-1}\otimes 1)\circ \Delta_0^l - (1\otimes\Delta_0^l)\circ\Delta_0^l\right] \nonumber\\
    &\qquad&+~ \left[(1\otimes\Delta_{-1})\circ\Delta_0^r -(\Delta_0^r\otimes 1)\circ\Delta_0^r\right].\nonumber
\end{eqnarray}}
\begin{equation}
    \Delta_{-1}\circ\Delta_1 = \Delta_1\circ\Delta_0\label{2coass}
\end{equation}
holds \cite{Chen:2023tjf}.

Given Abelian cyclic groups $N,M$, what is important for us here in this paper is the Drinfel'd double $D(BM) =  kBM\bar\bowtie kN $, in which $M = \widehat{N}$. This particular 2-bialgebra has two important features: it is {\it skeletal} and {\it self-dual}. The fact that it is skeletal means that 
% $\cT,\Delta_1$ do not directly participate in weakening the associativity/coassociativity of the underlying 2-bialgebra structure (cf. the 2-group case mentioned in Section \ref{catgrp}), and 
all adjoint equivalences (ie. the "mixed associator" 2-morphisms such as $a_{VWk}: a_{VWU}\Rightarrow a_{VWU'}$ for a 2-intertwiner $k:U\rightarrow U'$) are isomorphisms, and hence the only possibly non-trivial data endowed by $\Delta_1$ are the associator 1-morphisms
\begin{equation}
    a_{VWU}: (V\otimes W)\otimes U\rightarrow V\otimes (U\otimes W) \nonumber
\end{equation}
on 2-representations $V,W,U\in\operatorname{2Rep}_\text{wk}(D(BM))$. 

Secondly, the fact that $D(BM)$ is self-dual means that the coassociator $\Delta_1$ is dual to the Hochschild 3-cocycle $\cT$,
\begin{equation}
    \langle x,\tau(x_1,x_2,x_3)\rangle = \langle \Delta_1(x),x_1\otimes x_2\otimes x_3\rangle,\qquad x_1,\dots,x_4\in \big( D(BM) \big)_0 =kN.\nonumber
\end{equation}
Under this duality, the condition  \eqref{2coass} is  equivalent to the Hochschild 3-cocycle condition for $\cT$. By {\it Remark \ref{2alg-2grp}}, such Hochschild 3-cocycles are determined by the classifying Postnikov class of the 2-group underlying the Drinfel'd double 2-bialgebra $D(BM)$.

In summary, the Postnikov class $\tau \in H^3(N,M)$ (i) determines the Hochschild class $[\cT]$ of $D(BM)$, which in turn (ii) dualizes to the coassociator $\Delta_1$, and which in turn (iii) gives rise to the associator 1-morphism $a$ on $\operatorname{2Rep}_\text{wk}(D(BM))$. In other words, an element in $H^3(N,M)$ determines the associator data $a$ on $\operatorname{2Rep}_\text{wk}(D(BM))$, consistent with what one expects from the literature \cite{Zhu:2019,Wen:2019,Johnson-Freyd:2020}.

\subsection{2-representations of the Drinfel'd double}
The key result in \cite{Chen:2023tjf} states that the 2-representations of a Drinfel'd double $D(BM)$ forms a braided monoidal 2-category as defined in \cite{GURSKI20114225,KongTianZhou:2020}. We shall recall how the braiding and fusion structures of the 2-representations are induced by the 2-Hopf algebra structure of $D(BM)$.

Recall that a Baez-Crans 2-vector space is equivalently a 2-term chain complex $V = V_{-1}\xrightarrow{\partial}V_0$ of vector spaces \cite{Baez:2010ya} over $k$ --- namely a nuclear 2-algebra \cite{Wagemann+2021} or an Abelian Lie 2-algebra \cite{Bai_2013,Angulo:2018} --- and they form a linear 2-category $\operatorname{2Vect}^{BC}$ \cite{Baez:2003fs}. In its {\it weakened} form, namely in the context of $\operatorname{2Vect}^{hBC}$ as opposed to $\operatorname{2Vect}^{BC}$, the endormophisms $\operatorname{End}(V)$ of $V\in\operatorname{2Vect}^{hBC}$ is modelled by a weak 2-algebra as described in Section \ref{weak2bialgstructure}, 
\begin{equation*}
    \operatorname{End}(V) =\operatorname{End}(V)_{-1}\xrightarrow{\delta}\operatorname{End}(V)_0,
\end{equation*}
where $\operatorname{End}(V)_{-1}$ denotes the space of chain homotopies between, and $\operatorname{End}(V)_0$ are \textit{homotopy} chain maps on $V$. The map $\delta: \mu\mapsto (\mu\partial,\partial\mu)$ pre-/post-composes a chain homotopy $\mu$ with the chain map $\partial$ of $V$. 

Crucially, the chain maps in $\operatorname{End}(V)_0$ compose only up to chain homotopy; in other words, we have a homotopy associator $\T: \operatorname{End}(V)_0^{3\otimes}\rightarrow \operatorname{End}(V)_{-1}$ which witnesses the associativity in $\operatorname{End}(V)_0$, precisely as one would expect in a 2-bialgebra/2-term $A_\infty$-bialgebra. The presence of $\T$ leads to a proliferation of very rich structures supported by the 2-representation 2-category $\operatorname{2Rep}_\text{wk}(D(BM))$ of (weak) 2-representations of $D(BM)$ \cite{Chen:2023tjf}, as we now describe.

% is equivalently a  There is a natural associative 2-algebra $\operatorname{End}(V)$ of linear transformations on $V$, given by \cite{Angulo:2018}
% \begin{eqnarray}
%      \operatorname{End}(V)_0 &=& \{(f,g)\in\operatorname{End}(V_{-1})\times \operatorname{End}(V_0) \mid \partial m = n\partial\},\nonumber\\
%      \operatorname{End}(V)_{-1} &=&\{\phi\in\operatorname{Hom}(V_0,V_{-1})\mid (\phi\partial,\partial \phi) \in \operatorname{End}(V_{-1})\times\operatorname{End}(V_0)\},\nonumber
% \end{eqnarray}
% equipped with the 2-algebra structure
% \begin{equation}
%      \delta: \phi\mapsto (\phi\partial,\partial \phi),\qquad (f,g)\cdot \phi = f\phi,\qquad \phi\cdot(f,g)=\phi g.\nonumber 
% \end{equation}
% The associativity of matrix multiplication implies that $\operatorname{End}(V)_{-1}$ is clearly a $\operatorname{End}(V)_0$-bimodule, Moreover, $\delta$ is two-sided equivariant
% \begin{eqnarray}
%      \delta((f,g)\cdot \phi) &=& (f\phi\partial,\partial f\phi) = (f\phi\partial,g\partial \phi)= (f,g)\delta(\phi),\nonumber\\ 
%      \delta(\phi\cdot(f,g)) &=& (\phi g\partial,\partial \phi g) = (\phi\partial f,\partial \phi g) = \delta(\phi)(f,g),\nonumber
% \end{eqnarray}
% and we also have the Peiffer identity (note $\phi,\phi'\in\operatorname{End}(V)_{-1}$)
% \begin{equation*}
%       \phi\cdot \phi'= \phi\partial \phi'= \delta(\phi)\cdot \phi' = \phi\cdot\delta(\phi'),\nonumber
% \end{equation*}
% and hence $\operatorname{End}(V)$ is a 2-algebra. Note that none of the matrices here are required to be invertible.

\begin{definition}\label{2repcat}
A {\bf weak 2-representation} $(V,\rho)$ of $D(BM)$ on $V\in\operatorname{2Vect}^{BC}_\text{wk}$ is a weak 2-algebra homomorphism $(\rho,\varrho):D(BM)\rightarrow\operatorname{End}(V)$ as defined in {\bf Definition \ref{2alghomdef}}. The linear 2-category $\operatorname{2Rep}_\text{wk}(D(BM))$ of (weak) 2-representations of $D(BM)$ has the following morphisms.
\begin{enumerate}
    % \item The objects in $\operatorname{2Rep}_\text{wk}(D(BM))$ are {\bf weak 2-representations} together with a Hochschild 2-cocycle $\varrho:kN\rightarrow \operatorname{End}(V)_{-1}$ trivializing $\rho_1(\tau) - \T\circ \rho_0$. Namely
    % \begin{eqnarray}
    %      \T(\rho_0(x_1),\rho_0(x_2),\rho_0(x_3)) - \rho_{-1}(\tau(x_1,x_2,x_3)) &=& \rho_0(x_1) \varrho(x_2,x_3)-\varrho(x_1x_2,x_3) \nonumber\\
    %  &\qquad&+~ \varrho(x_1,x_2x_3) - \varrho(x_1,x_2) \rho_0(x_3) \nonumber
    % \end{eqnarray}
    % for each $x_1,\dots,x_3\in kN$.
    \item The 1-morphisms are \textbf{weak 2-intertwiners} $(I,i):\rho\rightarrow \rho'$, consisting of a 2-vector space homomorphism $i=(i_1,i_0):V\rightarrow W$ together with a Hochschild 1-cocycle $I_{\cdot,i}:kN\rightarrow \operatorname{End}(V)_{-1}$ fitting into the following commutative diagrams,
    \begin{equation}
    \begin{tikzcd}
V \arrow[rr, "i"] \arrow[dd, "\rho_0(x)"'] &                         & W \arrow[dd, "\rho'_0(x)"] \\
                                           & {\xRightarrow{I_{x,i}}} &                            \\
V \arrow[rr, "i"]                          &                         & W                         
\end{tikzcd},\qquad\qquad \begin{tikzcd}
V_0 \arrow[r, "\rho_1"] \arrow[d, "i_0"] & V_{-1} \arrow[d, "i_{1}"] \\
W_0 \arrow[r, "\rho_1'"]                 & W_{-1}                    
\end{tikzcd},\label{2int}
\end{equation}
such that $I_{\cdot,i}$ trivializes $\varrho - \varrho'$. Namely 
\begin{equation}
    \varrho'(x,x')\cdot \operatorname{id}_i - \operatorname{id}_i\cdot \varrho(x,x') = \operatorname{id}_{\rho_0(x)}\cdot I_{x',i} - I_{xx',i} + I_{x,i} \cdot\operatorname{id}_{\rho_0(x')}\nonumber
\end{equation}
for each $x_1,x_2\in kN$, where $\operatorname{id}_i:i\Rightarrow i$ denotes the identity 2-morphism on the intertwiner $i$.

\item The 2-morphisms/modifications are {\bf equivariant cochain homotopies} $\mu:i\Rightarrow i'$ that trivializes $I_{\cdot,i} - I_{\cdot,i'}$, namely
\begin{equation}
    I_{x,i} - I_{x,i'} = \operatorname{id}_{\rho_0(x)}\cdot \mu - \mu\nonumber
\end{equation}
for all $x\in kN$. More explicitly, $\mu$ is a cochain homotopy
\begin{equation}
    \begin{tikzcd}
V_{-1} \arrow[r, "\partial"] \arrow[d, "i_1-i_1'"'] & V_0 \arrow[ld, "\mu"'] \arrow[d, "i_0-i_0'"] \\
W_{-1} \arrow[r, "\partial'"]                      & W_0                                        
\end{tikzcd}\label{modif}
\end{equation}
that intertwines between $\rho_1(y),\rho_1'(y)$ as cochain homotopies for each $y\in M$.
\end{enumerate}
\end{definition}

As the Baez-Crans 2-vector spaces by definition do not have homotopy associators $\T$ for its endomorphisms, one sees that 2-representations based on $\operatorname{2Vect}^{BC}$ would fail to carry any non-trivial chain homotopy $\varrho$, and hence also the chain homotopy $I$ for its 2-intertwiners. These pieces of homotopy data are critical for recovering the higher-coherence data present in 4d gapped topological phases \cite{Kong:2020,Kong:2020wmn,KongTianZhou:2020,Johnson-Freyd:2020,Johnson-Freyd:2020usu,Else:2017yqj,Wen:2019}.

\subsubsection{Tensor product and direct sums}\label{tensor2repthy} Recall vector space cochain complexes $V,W$ come equipped with natural notions of direct sum
\begin{equation}
    V\oplus W = V_{-1}\oplus W_{-1}\xrightarrow{\partial\oplus\partial'}V_0\oplus W_0,\nonumber
\end{equation}
as well as tensor product
\begin{equation}
    V\otimes W = V_{-1}\otimes W_{-1}\xrightarrow{D^+}V_{-1}\otimes W_0 \oplus V_0\otimes W_{-1}\xrightarrow{D^-} V_0\otimes W_0,\label{grtensor}
\end{equation}
where $D^\pm = \pm1\otimes\partial'+\partial\otimes 1$ is the tensor extension of the differentials $\partial:V_{-1}\rightarrow V$ and $\partial':W_{-1}\rightarrow W_0$.

For 2-representations, the direct sum is defined by extending the definition to a direct sum of 2-algebra homomorphisms
\begin{equation}
    \rho\oplus \rho': \cA\oplus\cA \rightarrow \operatorname{End}(V)\oplus\operatorname{End}(W),\nonumber
\end{equation}
while the tensor product is accomplished by precomposing with the coproduct. By interpreting chain homotopies $\rho_1(y)$ as an "action" of $y\in M$ on the 2-intertwiners $i,j$ of $\operatorname{2Rep}_\text{wk}(D(BM))$, we can put formally
\begin{gather}
    \rho_{V\otimes W} = (\rho_0\otimes \rho_0)\circ\Delta_0',\qquad \rho_{i\otimes j} = (\rho_1\otimes\rho_1)\circ\Delta_{-1},\nonumber\\
    \rho_{V\otimes j} = (\rho_0\otimes\rho_1) \circ \Delta_0,\qquad \rho_{i\otimes W} = (\rho_1\otimes\rho_0)\circ \Delta_0.\label{tensor2rep}
\end{gather}
The coequivariance and the coPeiffer conditions of the graded coproducts $\Delta_{-1},\Delta_0,\Delta_0'$ then implies the naturality of this tensor product, as well as the decomposition \cite{Chen:2023tjf}
\begin{equation}
    \rho_{i\otimes j} = \rho_{V'\otimes j}\rho_{i\otimes W} = \rho_{i\otimes W'}\rho_{V\otimes j}.\label{tensordecomp}
\end{equation}
Now crucially, the weak component $\varrho$ attached to these decompositions \eqref{tensordecomp} may differ; their difference gives rise to a possibly non-trivial \textit{interchanger 2-isomorphism}
\begin{equation}
    \phi_{ij}: (\id_{V'}\otimes j) \circ (i\otimes\id_{W}) \xRightarrow{\sim} (i\otimes \id_{W'})\circ (\id_V\otimes j) \label{interchange}
\end{equation}
for the 2-intertwiners $i,j$ in $\operatorname{2Rep}_\text{wk}(\cA)$. This fact will play an important role in Section \ref{fermifusion}.

% In the following, it would be useful to consider components of $\rho$ given by cochain homotopies, such as $\rho_1(y)$ or $\rho_{i\otimes W}(x)$, etc., as an "action" on the 2-intertwiners.

\medskip

Notice since the grouplike coproduct  \eqref{grouplike} is cocommutative, the induced tensor product  \eqref{tensor2rep} is commutative, as we shall see in Sections \ref{toriccharges} and \ref{ferkitcharges}. The zero 2-representation is the zero complex $0\xrightarrow{0} 0$, while the tensor unit 2-representation is the trivial complex $\mathbf{1}=k\xrightarrow{1} k$ carrying the action $\rho(-)\cdot \mathbf{1} = \epsilon(-)\cdot\mathbf{1}$ by a scalar multiplication through the counit $\epsilon:D(BM)\rightarrow k$. We briefly note that the tensor product of 2-morphisms is induced by composition $\mu\otimes\mu' = \mu\cdot\mu'$. The reader is referred to \cite{Chen:2023tjf} for an explicit description of the above structures and formulas.

% Note the quantity $\rho_{i\otimes W}$ lies in $\operatorname{End}(V)_0\otimes\operatorname{End}(V)_{-1}\oplus\operatorname{End}(V)_{-1}\otimes\operatorname{End}(V)_0$, and is in fact a 2-morphism. As such, we shall also require modifications $\mu$ in $\operatorname{2Rep}_\text{wk}(D(BM))$ to commute with $(\rho_{i\otimes W})(x),(\rho_{V\otimes j})(x)$ for each $x\in kN$. 

\paragraph{Dual/contragredient 2-representations.} Let $V \in \operatorname{2Rep}_\text{wk}(D(BM))$ be a 2-representation. As the Drinfel'd double 2-bialgebra $D(BM)$ is equipped with the (grouplike) antipode $S$, we can in fact define the {\bf dual 2-representation} $\bar V$ by
\begin{equation}
    \rho_{\bar V} = \rho_V\circ S.\nonumber
\end{equation}
Since $S$ is bijective in this case, all 2-representations and 2-intertwiners of $D(BM)$ has a dual and hence $\operatorname{2Rep}_\text{wk}(D(BM))$ is a {\it fully-dualizable}. This is a necessary step for {\bf Theorem \ref{mainthm}}, since tensor 2-categories do indeed have duals \cite{etingof2016tensor,KongTianZhou:2020,Johnson_Freyd_2023}, but it shall not play an important role in this paper. 

% Now in general, the antipode comes with an additional component $S_1:kN\rightarrow kBM$ that enters  \eqref{antpode} by 

\subsubsection{Braiding} In general, Drinfel'd double 2-bialgebras such as $D(BM)$ come equipped naturally with a {\bf 2-$R$-matrix} 
\begin{equation}
        \cR = \cR^+ + \cR^-,\qquad \begin{cases}\cR^+ \in kBM \otimes kN \\ \cR^-\in kN\otimes kBM\end{cases},\nonumber
\end{equation}
which satisfy certain algebraic properties \cite{Chen:2023tjf}. There is also defined a component $R \in kN\otimes kN$, which serves as a universal $R$-matrix for $kN$.

We shall make use of the conventional Sweedler notation for the 2-$R$-matrices,
\begin{equation}
    \cR^\pm = \cR_{(1)}^\pm \otimes \cR_{(2)}^\pm,\qquad R = R_{(1)}\otimes R_{(2)}. \nonumber
\end{equation}
If $t\neq 0$, this quantity $R$ is subject to the constraint 
\begin{equation}
    R = \cR^-_{(1)}\otimes t\cR^-_{(2)} (\equiv R^-) = t\cR^+_{(1)}\otimes \cR^+_{(2)}(\equiv R^+),\nonumber
\end{equation}
while if $t=0$, $R$ and $\cR$ are independent.

\smallskip

Now take two 2-representations $V,W\in\operatorname{2Rep}_\text{wk}(D(BM))$; we define the braiding map between $V,W$ by
\begin{equation}
    b_{V,W}(V\otimes W) =\text{flip}\circ \rho_0(R)(V\otimes W).\label{2repbraid1}
\end{equation}
We call $B_{V,W} =b_{V,W}\circ b_{W,V}$ the {\it full-braiding}, and $b_V=b_{V,V}$ the {\it self-braiding} of $V$. Provided $V,W$ are invertible and can be self-braided, we have \cite{Majid:1996kd,Johnson-Freyd:2020}
\begin{equation}
    b_{V\otimes W} \cong (V\otimes b_{V,W}\otimes W) \circ (b_V \otimes b_W)\circ (V\otimes b_{W,V} \otimes W).\label{ribboneq}
\end{equation}

Now similar to the tensor product structure, we can form the \textit{mixed braiding} between 1-morphisms $i$ and objects $W$, which is formally given by
\begin{equation}
    b_{i,W} =\text{flip}  \circ (\rho_1\otimes\rho_0)(\cR^+),\qquad b_{W,i} =\text{flip}  \circ (\rho_0\otimes\rho_1)(\cR^-).\label{2repbraid2}
\end{equation}
See \cite{Chen:2023tjf} for more explicit formulas.
% More explicitly, $b_{i,W} = b_{i,W}^1 + b_{i,W}^2$ determines two maps 
% \begin{equation}
%     b_{i,W}^1: V_0\otimes W_0\rightarrow (W_{-1}\otimes U_0)\oplus (W_0\otimes U_{-1}),\qquad b_{i,W}^2: (V_{-1}\otimes W_0)\oplus (V_0\otimes W_{-1}) \rightarrow W_{-1}\otimes U_{-1},\nonumber
% \end{equation}
% the latter of which carries the non-trivial sign $(-1)^{\text{deg}} = -1$. 
There, we have also proved the following naturality property
\begin{equation}
    b_{i,W}: i\circ b_{V,W}\Rightarrow b_{U,W}\circ i,\qquad b_{W,i}: i\circ b_{W,V}\Rightarrow b_{W,U}\circ i,\nonumber
\end{equation}
where $i:V\rightarrow U$ is a 2-intertwiner in $\operatorname{2Rep}(D(BM))$.

\paragraph{The hexagonator.} By weakening the Drinfel'd double 2-bialgebra via Section \ref{weak2bialgstructure}, we also introduce a {\it hexagonator} 2-morphism $\Omega_{V|\bullet\bullet}$ that implements the hexagon relation for each $V\in \operatorname{2Rep}_\text{wk}(D(BM))$. More precisely, given three 2-representations $V,W,U$, we have the following diagram
\begin{equation}
\begin{tikzcd}
                                                                                  & V(W U) \arrow[rr, "b_{V,WU}"] &  & (W U) V \arrow[rd, "a_{WUV}"] &                                  \\
(V W) U \arrow[ru, "a_{VWU}"] \arrow[rd, "b_{V,W}"'] & {} \arrow[rr, "\Omega_{V|WU}", Rightarrow]                             &  & {}                                                          & W (U V) \\
                                                                                  & (W V) U \arrow[rr, "a_{WVU}"']    &  & W (V U) \arrow[ru, "b_{V,U}"']      &                                 
\end{tikzcd},\nonumber
\end{equation}
as well as its adjoint $\Omega^\dagger_{V|WU}$. This 2-morphism results from applying the component $\varrho$ of the weak 2-representation $\rho= (\rho_1,\rho_0,\varrho)$ to one of the defining relations of a 2-$R$-matrix.

It is a non-trivial theorem (and in fact the main result) in \cite{Chen:2023tjf} that the various structures --- namely the monoidal/associator and the braiding/hexagonator data --- induced by the 2-representation theory of weak 2-bialgebras are mutually cohesive: $\operatorname{2Rep}_\text{wk}(D(BM))$ forms a braided monoidal 2-category in the sense of \cite{GURSKI20114225}. Since this hexagonator datum does not play a central role in the following, we shall direct the reader to the aforementioned works for a more complete description.

\section{The 4d Kitaev model}\label{4dkit}
We are now finally ready to study the   4d Kitaev model. In this section, we specialize the above Drinfel'd double 2-bialgebra $D(BM) = D(B\bbZ_2)$ to the case $N = \bbZ_2, M = \widehat{\bbZ_2} \cong \bbZ_2$, and construct (3+1)D 2-BF theory based on $D(BM)$. We study its extended $\bbZ_2$-charged excitations by studying $\operatorname{2Rep}_\text{wk}(D(B\bbZ_2))$, and seek to prove the main result {\bf Theorem \ref{mainthm}}. 

Along the way, we shall make concrete the connection between our 2-BF theory and the higher-gauge topological nonlinear $\sigma$-models (NLSMs) that have already appeared in the literature \cite{Zhu:2019,Wen:2019}. We shall take the ground field $k=\bbC$ throughout the following. 

\subsection{Monster 2-BF theory on the Drinfel'd double 2-bialgebra $D(BM)$}
Recall the Drinfel'd double 2-bialgebra $D(B\bbZ_2)$ has the structure of a skeletal 2-algebra
\begin{equation}
    D(B\bbZ_2)=k\widehat{\bbZ_2}\xrightarrow{0}k\bbZ_2,\nonumber
\end{equation}
whose Hochschild class is determined by the choice of a Postnikov class
\begin{equation}
    \tau\in H^3(\bbZ_2,\widehat{\bbZ_2}) \cong \bbZ_2 \nonumber
\end{equation}
of the underlying 2-group. Let $x\in k\bbZ_2$ and $y\in k\widehat{\bbZ_2}$ be understood as the non-trivial generators.

Let $k\bbZ_2=kN$ in degree-0 act on $k\widehat{\bbZ_2}= kM$ on the left by  \eqref{dualaction} as group algebras. There are two such algebra automorphisms: the trivial or the sign representation. We denote the Drinfel'd double 2-bialgebra by $D(B\bbZ_2)^\text{trv}$ in the former case, while by $D(B\bbZ_2)^\text{sgn}$ in the latter case. This then induces a non-trivial grouplike component $\Delta_0(x)= \hat x\otimes x+x\otimes\hat x$ of the coproduct $\Delta$ on $D(B\bbZ_2)$ (recall $\hat x = p(x)$ where $p$ is the Pontrjagyn duality).

Now consider the discrete combined $D(B\bbZ_2)$-connection $({\bf A},\boldsymbol\Sigma) = (A+\Sigma,C+B)$ on a 4-manifold $X$ \cite{chen:2022}. These connection forms are given by cochains
\begin{equation}
    A\in C^1(X,\bbZ_2),\qquad B\in C^2(X,\widehat{\bbZ_2}),\nonumber
\end{equation}
with the components $\Sigma=0,C=0$ trivial. Depending on the automorphism $\operatorname{Aut}(k\bbZ_2)$ encoded in the Drinfel'd double 2-bialgebra $D(B\bbZ_2)$, the 1- and 2-curvatures of the field theory are given by
\begin{equation}
    F=\begin{cases}dA &; \text{in $D(B\bbZ_2)^\text{triv}$} \\ dA + \frac{1}{2}A\cup A &;\text{in $D(B\bbZ_2)^\text{sgn}$}\end{cases},\qquad d_AB =\begin{cases}dB &; \text{in $D(B\bbZ_2)^\text{triv}$} \\ dB + A\cup B &;\text{in $D(B\bbZ_2)^\text{sgn}$}\end{cases},\nonumber
\end{equation}
where the cup products are implemented through the automorphism $\operatorname{Aut}(k\bbZ_2)$ or its dual. The corresponding \textit{monster 2-BF theory} \cite{chen:2022} is given by the topological action
\begin{equation}
    S[A,B] = \frac{1}{2}\int_X \langle B\cup F\rangle + \langle \tau(A)\cup A\rangle,\label{toric}
\end{equation}
where we recall that $\tau\in H^3(\bbZ_2,\widehat{\bbZ_2})$ is the underlying Postnikov class of $D(B\bbZ_2)$.

Note that the discrete 1-form gauge fields must be flat, $F=dA=0$, and terms like $A^2=0\mod 2$ vanish, hence the classical equations of motion (EOMs) are  given by
\begin{equation}
    F=dA=0,\qquad d_AB=\tau(A).\label{toriceom}
\end{equation}
These EOMs, together with the coefficient of $\frac{1}{2}$ in front of the topological action \eqref{toric}, tell us that the cochains $A,B$ are $\bbZ_2$-valued. We will introduce in the following a non-trivial cohomological term that "mimics" $\frac{1}{2}A^2$. However, it is important to note that these cohomological terms constitute {\it twists} on the Drinfel'd double 2-bialgebra and are not dynamical; they do not alter the EOM  \eqref{toriceom}.

\smallskip

We define the partition function corresponding to  \eqref{toric} on a 4-manifold $X$ as a formal path integral
\begin{equation}
    Z_\text{Kit}(X) = \int dAdB e^{i2\pi S[A,B]},\label{kit}
\end{equation}
which should be appropriately normalized such that $Z_\text{Kit}(S^4)=1$ \cite{Zhu:2019}. We call $Z_\text{Kit}$ the {\bf   4d Kitaev model}. It should be understood as a collective of two such theories\footnote{There is a slight misnomer here, where $Z_\text{Kit}^0$ should really be called the "invisible" toric code, as it fails to satisfy the principle of {\it remote detecatbility} \cite{Kong:2020,KongTianZhou:2020,Johnson-Freyd:2020}; see {\it Remark \ref{invisibletoric}} later.},
\begin{equation}
    \text{(Invisible) toric code}:~Z_\text{Kit}^0,\qquad\text{Spin-Kiatev}:~ Z_\text{Kit}^s,\nonumber
\end{equation}
arising respectively from $D(B\bbZ_2)^\text{trv}$ and $D(B\bbZ_2)^\text{sgn}$. We shall refer to either of these Drinfel'd double 2-bialgebras collectively as $D(B\bbZ_2)$ in the following. The central idea is then that $Z_\text{Kit}$ has a Drinfel'd double 2-bialgebra symmetry.

\subsubsection{$Z_\text{Kit}$ as a topological nonlinear $\sigma$-model} There had been proposals to construct (3+1)D topological phases with a higher-gauge field theory \cite{Kapustin2017}. Specifically, \cite{Zhu:2019} constructs a topological non-linear $\sigma$-model (NLSM) which corresponds to a higher-Dijkgraaf-Witten theory based on a 2-group, and claims that all (3+1)D topological phases can be described this way. 

The NLSM they construct is characterized by the following data: (i) a (skeletal) 2-group $\mathbb{G}=\bbZ_2\rightarrow G_b$, where $G_b$ is a finite group labeling "stringlike bosonic charges", and $\bbZ_2$ is either fermion parity $\bbZ_2^f$ or a magnetic $\pi$-flux $\bbZ_2^m$, (ii) the first Postnikov class $\tau\in H^3(G_b,\bbZ_2^f)$ of $\mathbb{G}$ and (iii) a Dijkgraaf-Witten class $\omega\in H^4(\mathbb{G},\R/\bbZ)$ \cite{Kapustin2017,Zhu:2019}. We write the Ho{\'a}ng data \cite{Ang:2018rls} of $\mathbb{G}$ as $(G_b,\bbZ_2^f,\tau)$.

Our construction of the Kitaev model  \eqref{kit} fits nicely into this framework, with the 2-group $\widehat{\bbZ_2}\xrightarrow{0}\bbZ_2$ given by the Ho{\'a}ng data $$(G_b=\bbZ_2,\bbZ_2^f\cong \widehat{\bbZ_2},\tau).$$ To construct the Dijkgraaf-Witten cocycle, we begin with the group cohomology ring $H^\ast(\bbZ_2,\bbZ_2)\cong \bbZ_2[u]$ with a generator $u\in H^1(\bbZ_2,\bbZ_2)$ in degree-1 \cite{book-groupcohomology}. Considering $\bbZ_2$ as a trivial $\bbZ_2$-module, the sign representation $\text{sgn}\in \operatorname{Aut}(k\bbZ_2)\cong \operatorname{Hom}(\bbZ_2,\bbZ_2)\cong H^1(\bbZ_2,\bbZ_2)$ then serves as a representative of the generator $u$. 

Now consider $D(B\bbZ_2)^\text{sgn}$. The cup product for the term $\frac{1}{2}A\cup A$ in the curvature $F$ is implemented by the sign representation $\text{sgn}=u\in H^1(\bbZ_2,\bbZ_2)$, from which
\begin{equation}
    \frac{1}{2}A\cup A = \bar e(A),\qquad \bar e=\frac{1}{2}u\cup u\in H^2(\bbZ_2,\bbZ_2).\label{ext}
\end{equation}
The factor of $1/2$ is very important as, without it, $u\cup u =0\mod 2$ is trivial in $\bbZ_2$-cohomology \cite{book-groupcohomology}. Dualizing the value of $\bar e$ to a class in $H^2(\bbZ_2,\widehat{\bbZ_2})$, it lifts the action $\rhd$ of $k\bbZ_2$ on $k\widehat{\bbZ_2}$ to a central extension. 

The term $\langle B\cup \bar e(A)\rangle$ that appears in  \eqref{toric} gives precisely the Dijkgraaf-Witten cocycle $\omega\in Z^4(G,\R/\bbZ)$. Indeed, going on-shell of the EOM  \eqref{toriceom} reduces the spin-Kitaev partition function to
\begin{equation}
    Z_\text{Kit}^s(X) \sim \sum_{\substack{dA=0 \\ dB=\tau}}e^{i\pi\int_X \langle B\cup \bar e(A)\rangle}.\nonumber
\end{equation}
This gives exactly the NLSM constructed in \cite{Zhu:2019} with $\omega(B,A) = B\cup \bar e(A)$, provided the {\bf anomaly-free condition}
\begin{equation}
    \tau \cup \bar e =0\label{anomfree}
\end{equation}
is satisfied. This condition ensures that that the Dijkgraaf-Witten integrand $\omega(A,B) = \langle B\cup \bar e(A)\rangle$ is a cocycle $d\omega(A,B) = 0$ in light of the EOM $dB=\tau(A)$.

\subsubsection{Classification of   4d topological phases with a single pointlike $\bbZ_2$-charge} The above describes the construction of a   4d Dijkgraaf-Witten topological field theory. As we have mentioned, these were proposed to describe \cite{Zhu:2019,Kapustin2017,Bullivant:2016clk,Wan:2014woa}, in a very general sense,   4d gapped topological phases. Another approach towards this follows the program of "higher categorical symmetries" \cite{Wen:2019,Kong:2014qka,Kong:2017hcw,Kong:2020,Johnson-Freyd:2020usu,Delcamp:2023kew,Hamma:2004ud,KitaevKong_2012}. In particular, the   4d toric code has been extensively studied in the literature \cite{Else:2017yqj,Kong:2020wmn,Johnson-Freyd:2020} from this perspective, so we understand its corresponding braided fusion 2-category quite well.

By hypothesis, gapped topological phases are characterized by non-degenerate\footnote{Namely the sylleptic/$E_2$-centre $Z_2$ is trivial.} braided fusion 2-categories, based on the physical principle of {\bf remote detectability} \cite{Kong:2014qka,Kong:2017hcw,Kong:2020,Johnson-Freyd:2020usu,KitaevKong_2012}. In particular, those with a single pointlike $\bbZ_2$-charge have been classified in \cite{Johnson-Freyd:2020,Johnson_Freyd_2023}. These phases are
\begin{enumerate}
    \item the   4d toric code $\mathscr{R} \simeq Z_1(\Sigma\operatorname{Vect}[\bbZ_2])$,
    \item the   4d spin-$\bbZ_2$ gauge theory $\mathscr{S} \simeq Z_1(\Sigma\operatorname{sVect})$, 
    \item the $w_2w_3$ gravitational anomaly $\mathscr{T}$,
\end{enumerate}
where $Z_1$ denotes the Drinfel'd centre and $\Sigma$ denotes the condensation completion functor \cite{Gaiotto:2019xmp}. Here, $\operatorname{Vect}[\bbZ_2]$ denotes the category of $\bbZ_2$-graded vector spaces, and $\operatorname{sVect}$ is the category of supervector spaces.

\begin{remark}\label{2vects}
Note $\Sigma\operatorname{Vect}[\bbZ_2] \simeq \operatorname{2Vect}^{KV}[\bbZ_2]$ is equivalent to the 2-category of $\bbZ_2$-graded 2-vector spaces as defined in the sense of Kapranov-Voevodsky \cite{Baez:2012,Douglas:2018,Johnson-Freyd:2020}. It is important to keep in mind that this notion of a 2-vector space is distinct from what we are using in this paper, which is in the sense of Baez-Crans \cite{Bai_2013,Baez:2003fs,Angulo:2018}. For this reason, we shall always denote $\operatorname{2Vect}^{KV}[\bbZ_2]$ in terms of the condensation completion $\Sigma\operatorname{Vect}[\bbZ_2]$. However, we have proven in \cite{Chen:2023tjf} that the 2-representation theory developed from {\bf Definition \ref{2repcat}} coincides with that studied in the literature \cite{Baez:2012,Douglas:2018,Delcamp:2021szr,Delcamp:2023kew} for finite skeletal 2-groups.
\end{remark}

In this paper, we shall mostly focus on the gapped phases $\mathscr{R},\mathscr{S}$, and leave the study of the gravitational anomaly $\mathscr{T}$ to a later work; the reason for this shall be given at the end of Section \ref{ferkitcharges}. We will find explicit realizations of these phases as 2-representation 2-categories of certain versions of the 2-quantum double $D(B\bbZ_2)$. To do so, we study the excitations in the associated NLSM  \eqref{toric}. 

\subsection{Anomaly-freeness of the   4d spin-Kitaev model}\label{anomalyfree}
Recall from the above that the   4d Kitaev model $Z_\text{Kit}$ is well-defined provided the non-trivial Postnikov class $\tau$ and extension class $\bar e$ of the underlying 2-group satisfies the anomaly free condition  \eqref{anomfree}.

Let us here study, from the point of view of the 2-representation 2-category $\operatorname{2Rep}_\text{wk}(D(B\bbZ_2))$, why the anomaly-free condition  \eqref{anomfree} is necessary. Recall from Section \ref{weak2bialgstructure} that the self-duality of $D(B\bbZ_2)$ as a Drinfel'd double 2-bialgebra means that the Postnikov class $\tau$ dualizes to a coassociator $\Delta_1:{\bbZ_2}\rightarrow \widehat{\bbZ_2}^{3\otimes}$ defining the
associator 1-morphism $a_{VWU}$ for the objects $V,W,U\in \operatorname{2Rep}_\text{wk}(D(B\bbZ_2))$.

The key point is that, in general, the pentagon relation for $a$ follows from the condition  \eqref{2coass}, which in turn follows from the 3-cocycle condition for $\tau$. This notion generalizes to the case where $D(B\bbZ_2)$ is twisted by $\bar e\in H^2(\bbZ_2,\widehat{\bbZ_2})$; that is, the product in $D(B\bbZ_2)_0 = \bbZ_2$ is modified such that
\begin{equation*}
    x\times x' = \bar e(x,x') (xx'),\qquad x,x'\in\bbZ_2.
\end{equation*}
We shall denote the corresponding 2-representation 2-category by
\begin{equation}
    \operatorname{2Rep}_\text{wk}(D^{\bar e}(B\bbZ)) = \operatorname{2Rep}_m^\tau(D(B\bbZ_2)^\mathrm{sgn}).\nonumber
\end{equation}
This notation shall be explained later in Section \ref{ferkitcharges}. For now, we prove the following.

\begin{lemma}\label{pentagonfree}
The anomaly-free condition  \eqref{anomfree} implies that the associator $a$ of $\operatorname{2Rep}_m^\tau(D(B\bbZ_2)^\mathrm{sgn})$ satisfies the pentagon relations.
\end{lemma}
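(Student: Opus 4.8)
The plan is to recognize the pentagon for the associator $1$-morphism $a$ as a twisted form of the coassociativity condition \eqref{2coass}, and then to identify its obstruction with the cup product $\tau\cup\bar e$. Recall from Section \ref{weak2bialgstructure} that, by self-duality of $D(B\bbZ_2)$, the associator $a_{VWU}$ is induced by the coassociator $\Delta_1:\bbZ_2\rightarrow\widehat{\bbZ_2}^{3\otimes}$, which is dual under \eqref{pairing} to the Postnikov/Hochschild class $\tau$; moreover the pentagon for $a$ follows from the statement that $\Delta_1$ satisfies \eqref{2coass}. In the untwisted double this holds because \eqref{2coass} is dual to the group $3$-cocycle condition $d\tau=0$, which is automatic. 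The content of the lemma is therefore to control how the twist $\bar e$ deforms this identity.

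First I would determine which structure map of $D(B\bbZ_2)$ the twist deforms. The twist decorates the product of $\cG_0=k\bbZ_2$ with the morphism $\bar e(x,x')\in\widehat{\bbZ_2}=\cG_{-1}$, i.e. it is a $2$-cochain correction $\bar e:\cG_0^{\otimes 2}\rightarrow\cG_{-1}$ to the multiplication, exactly one Hochschild degree below the associator $\tau:\cG_0^{\otimes 3}\rightarrow\cG_{-1}$. Under the self-duality pairing \eqref{pairing}, which interchanges the two graded sectors of $D(B\bbZ_2)$, this correction dualizes to a deformation of the coproduct data \eqref{grouplike} entering \eqref{2coass}, with the correction depending linearly on $\bar e$. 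I would then write out the resulting twisted coassociativity condition and subtract the untwisted identity $\Delta_{-1}\circ\Delta_1=\Delta_1\circ\Delta_0$, which continues to hold since $\bar e$ is a cocycle and $\tau$ is unchanged.

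The residual obstruction to \eqref{2coass} is then a single term built from $\Delta_1$ and the $\bar e$-deformation. The final step is to dualize this term back using \eqref{pairing}: the factor $\Delta_1$ pairs to $\tau$, the deformation pairs to $\bar e$, and their composite reorganizes into the Alexander--Whitney product of group cochains, i.e. into a representative of $\tau\cup\bar e$. Hence the twisted \eqref{2coass}, and with it the pentagon for $a$ in $\operatorname{2Rep}_m^\tau(D(B\bbZ_2)^{\mathrm{sgn}})$, holds whenever $\tau\cup\bar e=0$, which is exactly the anomaly-free condition \eqref{anomfree}. The same bookkeeping recovers the computation $d\langle B\cup\bar e(A)\rangle=\langle(\tau\cup\bar e)(A)\rangle$ used to justify \eqref{anomfree}, so the two appearances of $\tau\cup\bar e$ match.

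The main obstacle is this last dualization: one must verify that the residual term assembles precisely into $\tau\cup\bar e$ with the correct ordering of the $\widehat{\bbZ_2}$-tensor factors and the correct signs coming from the tensor-extension conventions in the footnote to \eqref{2coass}, and in particular that it lands in the cohomological degree in which the anomaly-free class lives. Because everything takes values in $\bbZ_2\cong\widehat{\bbZ_2}$, this is a finite, purely combinatorial check; once it is carried out, only the cocycle properties of $\tau$ and $\bar e$ are needed, and their entire interaction is captured by $\tau\cup\bar e$, which yields the desired implication from \eqref{anomfree} to the pentagon relation.
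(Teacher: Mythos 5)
Your proposal is correct and takes essentially the same route as the paper: the pentagon for $a$ is reduced to the compatibility \eqref{2coass}, the twist $\bar e$ is tracked through the bialgebra/duality structure as a deformation of the coproduct, and the resulting obstruction is dualized via \eqref{pairing} into the cup product $\tau\cup\bar e$, so that \eqref{anomfree} yields the pentagon. The combinatorial check you defer is exactly what the paper carries out concretely: using \eqref{2algcoprod} one finds $\Delta_0(x^2)=\bar e(x,x)\otimes 1$, so evaluating \eqref{2coass} on $1=x^2$ produces the anomalous term $\bar e(x,x)\otimes\Delta_1(1)$, whose pairing against arbitrary $x_1\otimes x_2\otimes x_3$ is precisely $\bar e(x,x)\,\tau(x_1,x_2,x_3)$, i.e.\ the vanishing of $\bar e\cup\tau$.
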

\begin{proof}
In order to see the anomaly-free condition  \eqref{anomfree} manifest on the 2-representations, we begin with the observation that the component $\Delta_0$ of the coproduct on $D(B\bbZ_2)$ satisfies
\begin{equation}
    \Delta_0(x^2) = \bar e(x,x)\hat x^2\otimes x^2= \bar e(x,x)\otimes 1,\label{boscoprod}
\end{equation}
by  \eqref{2algcoprod}. This means that $\Delta_0$ is an algebra map on $k\bbZ_4$, not $k\bbZ_2$.

Because of  \eqref{boscoprod}, evaluating the condition  \eqref{2coass} on $1=x^2\in k\bbZ_2$ gives
\begin{equation}
    1^{4\otimes}= \Delta_{-1}\circ\Delta_1(1) = (\Delta_1\otimes 1)\circ \Delta_0(x^2) = \bar e(x,x)\otimes \Delta_1(1),\nonumber
\end{equation}
which violates the pentagon relation unless the right-hand side is also trivial $1^{4\otimes}$. Pairing this equation with arbitrary $x_1,\dots,x_3\in k\bbZ_2$ gives
\begin{equation}
    1=\langle \bar e(x,x)\otimes \Delta_1(1),1\otimes x_1\otimes x_2\otimes x_3\rangle = \langle \bar e(x,x)\otimes 1,1\otimes \tau(x_1,x_2,x_3)\rangle = \bar e(x,x)\tau(x_1,x_2,x_3).\nonumber
\end{equation}
This is nothing but $\bar e\cup \tau =0$. 
\end{proof}
\noindent Notice on the other hand that if $\tau=0$ is trivial, then so is $\Delta_1$ and the coassociativity condition simply implies the group cocycle condition for $\bar e$.

\paragraph{Weakening the anomaly-free condition.} There is a way to weaken the anomaly-free condition, by imposing  \eqref{anomfree} only in {\it cohomology} $\tau\cup\bar e =0\in  H^5(\bbZ_2,k^\times)$ \cite{Kapustin2017,Cui_2017}. This means that the   4d Kitaev model gains an additional term $\nu(A)$ that trivializes the coboundary of the Dijkgraaf-Witten 4-cocycle, $$d(\omega_b-\nu) = 0.$$ Algebraically, this 4-cocycle $\nu\in H^4(\bbZ_2,k^\times)$ is known to play the role of a "pentagonator" 2-morphism in the underlying 2-group \cite{Wen:2019,Zhu:2019}, implementing the pentagon relation  \eqref{pentagon}; see {\it Remark \ref{gencatgrp}}. 

This group 4-cocycle $\nu$ is intimately related to the Hochschild 3-cocycle $\mathfrak{T}$ attached to the {\it weak} endomorphism 2-algebra $\operatorname{End}(V)$ (see {\bf Definition \ref{2repcat}}) of a 2-representation $V$. From the 2-representation theory perspective, it was shown \cite{Chen:2023tjf} that the module pentagonator $\pi$ \cite{Delcamp:2023kew} on $\operatorname{2Rep}_\text{wk}(D(B\bbZ_2))$ (in the untwisted case) is given by
\begin{equation*}
    \pi_{x_1x_2x_3|V} = \mathfrak{T}(\rho_0(x_1),\rho_0(x_2),\rho_0(x_3))(V),\qquad x_1,\dots,x_3\in k\bbZ_2,
\end{equation*}
where $\rho=(\rho_1,\rho_0,\varrho): D(B\bbZ_2) \rightarrow \operatorname{End}(V)$ is a 2-representation on $V$. 

Given $V$ is irreducible with an associated label $x_4\in\bbZ_2$, then $\pi_{x_1x_2x_3|x_4}=\nu(x_1,x_2,x_3,x_4)$ defines a group 4-cocycle whose cocycle condition arises from the associahedron condition \cite{Delcamp:2023kew,Chen:2023tjf}. Moreover, the fact that the Hochschild cohomology of
\begin{equation}
    \rho_1\circ \tau - \mathfrak{T}\circ \rho_0^{3\otimes}\nonumber
\end{equation}
is trivial (see {\bf Definition \ref{2alghomdef}}) translates to precisely the equation $d(\omega_b-\nu)=0$. This relationship between $\tau$ and $\nu$ is intimately related to the conjecture \cite{Wen:2019} that the 4d $\nu$-twisted gauge theory on $G$ coincides with the 4d untwisted 2-gauge theory on $\mathbb{G}= (G,k^\times,\tau)$. We will say more about this in a future work, but it is worth emphasizing again here that one would not be able to see the pentagonator in the (strict) Baez-Crans 2-category $\operatorname{2Vect}^{BC}$.

% The proof of the above lemma then implies that this pentagonator $\nu$ appears, upon dualization, as an element $\Pi\in k\widehat{\bbZ_2}^{4\otimes}$ that gives rise to a cochain homotopy $((\rho_V)_1\otimes(\rho_W)_1\otimes(\rho_U)_1\otimes(\rho_X)_1)(\Pi)$ between the associator morphisms $a$ involved in the pentagon relation  \eqref{pentagon}. Equivalently, this element $\Pi$ modifies  \eqref{2coass} in a way such that
% \begin{equation}
%     \Pi \cdot (\Delta_1\circ\Delta_0^{\bar e}) = \Delta_{-1}\circ\Delta_1,\nonumber
% \end{equation}
% where the superscript of $\bar e$ reminds us of the twist by the group 2-cocycle as in  \eqref{boscoprod}. We shall not dwell on this point here, however.

% \subsection{The spin-Kitaev model $Z_{\text{Kit}}^s$}

% Here, the skeletal 2-group $G=\widehat{\bbZ_2}\xrightarrow{0}\bbZ_2$ is treated as a categorical group, in which $y,y',y''\in \widehat{\bbZ_2}$ denote morphisms over the objects $x,x',x''\in \bbZ_2$, respectively. Therefore $\tilde\tau\in H^3(\widehat{\bbZ_2},k^\times)$ determines a set of \textit{2-morphism}, whose higher-pentagon relation follows from the 2-cocycle condition for $\bar c$ and the pentagon relation for $\tau$ \cite{Wen:2019,Zhu:2019}.

% We wish to also see this 2-associator appear on the side of the 2-representation 2-category $\operatorname{2Rep}_f^\tau(D(B\bbZ_2)^\text{sgn})$.

% *************************************

% \medskip

\section{Excitations in the (invisible) toric code $Z_\text{Kit}^0$}\label{toriccharges}
Excitations are inserted into the theory $Z_\text{Kit}$ with 2-representations $\rho$ of $D(B\bbZ_2)$. Since $D(B\bbZ_2)$ is skeletal, it suffices to study 2-representations of the underlying 2-group. Let us first focus on the trivial case $D(B\bbZ_2)^\text{trv}$. 

Recall that a 2-representation $\rho:D(B\bbZ_2)^\text{trv}\rightarrow \operatorname{End}(V)$ on a 2-vector space $V=V_{-1}\xrightarrow{\partial}V_0$ consists of the following data:
\begin{enumerate}
    \item a pair of $\bbZ_2$-representations 
    \begin{equation}
        \rho_0 = \rho_0^1\oplus\rho_0^0:\bbZ_2\rightarrow \operatorname{End}(V_0)\oplus \operatorname{End}(V_{-1}),\nonumber
    \end{equation}
    such that $\partial$ is an intertwiner between $\rho_0^0$ and $\rho_0^1$, and 
    \item a map $\rho_{1}:\widehat{\bbZ_2}\rightarrow \operatorname{Hom}(V_0,V_{-1})$ such that $\rho_1(e)$ is the trivial chain homotopy on the identity element $e\in\widehat{\bbZ_2}$.
\end{enumerate}
Since the $t$-map for $D(B\bbZ_2)$ is trivial, $\rho$ must satisfy $\delta \rho_1 =(\rho_1\circ \partial ,\partial \circ \rho_1)=\rho_0t=0$, which means either $\rho_{1}=0$ or $\partial =0$. For 1-dimensional irreducible representations (irreps) $V_0,V_{-1}\cong k$ over the ground field $k$, the value of $\rho_1$ on the non-trivial generator $y\in\widehat{\bbZ_2}$ is either trivial or a scalar multiplication. We write simply $\rho_1=0$ in the former case, while in the latter case we shall normalize the scalar $\rho_1(y)$ to $1\in k^\times$, and denote this map by $\rho_1=\hat 1$. 

\begin{remark}\label{rho1}
Though $\rho_1$ need \textit{not} be an intertwiner, we require it to preserve the identity $\rho_0^{0,1}(1)=\rho_0^{0,1}(x^2)=\operatorname{id}$ in the sense that
\begin{equation}
    \rho_1(y)\circ\operatorname{id}_{V_0} = \operatorname{id}_{V_{-1}}\circ \rho_1(y),\qquad x\in\bbZ_2,~y\in\widehat{\bbZ_2}.\nonumber
\end{equation}
This condition is vacuous here, but it shall become non-trivial later when we {\it twist} $D(BM)$. Strictly speaking, $\rho_1$ can be trivial as well if $\partial =0$, but this distinction makes no difference for $D(B\bbZ_2)^\text{trv}$.
\end{remark}

Now given $\rho_0^0,\rho_0^1$ are irreducible, Schur's lemma implies that $\partial$ is either trivial or an isomorphism. Hence given $\partial\neq 0$, then $\rho_0^0,\rho_0^1$ are either both the trivial representation $1$, or both the sign representation $\text{sgn}$. We therefore have four inequivalent irreducible 2-representations
\begin{eqnarray}
   \textbf{Electric}:&\qquad& \mathbf{1} = (1\oplus 1,\partial=1,\rho_1=0),\qquad \mathbf{c} = (1\oplus\text{sgn},\partial=0,\rho_1=\hat 1),\nonumber\\
   \textbf{Magnetic}:&\qquad&\mathbf{1}^* = (\text{sgn}\oplus\text{sgn},\partial=1,\rho_1=0),\qquad \mathbf{c}^* = (\text{sgn}\oplus  1,\partial=0,\rho_1=\hat 1),\label{2repdd}
\end{eqnarray}
which constitute the simple objects\footnote{Given any arbitrary 2-representation $\rho\in \operatorname{2Rep}_\text{wk}(D(B\bbZ_2)$, each graded component of the vector space complex $V=V_{-1}\xrightarrow{\partial} V_0$ carries a $\bbZ_2$-representation, which decomposes individually into direct sums of irreducible representations $1,\text{sgn}$. As $\partial$ must be a $\bbZ_2$-intertwiner, it also decomposes accordingly as a direct sum on each irreducible summand, whence $V$ is a direct sum of the objects listed in \eqref{2repdd}.} in $\operatorname{2Rep}_\text{wk}(D(B\bbZ_2))$. We call the first row the {\bf electric} sector and the second row the {\bf magnetic} sector; this partition will be clear in the following. Note that $\mathbf{c}$ is {\it not} equivalent to $\mathbf{c}^*$, because the map $\partial$ remembers its domain and codomain.

\subsection{Fusion structure}\label{invtoricfusion} We now investigate the monoidal structure of the 2-category $\operatorname{2Rep}(D(B\bbZ_2)^\text{trv})$. Since the coproduct $\Delta$ on $D(B\bbZ_2)$ is grouplike, the tensor product of 2-representations $\rho,\rho'$ is just the usual \textit{graded} tensor product $\rho\otimes\rho'$. Graded here means  \eqref{grtensor}, ie. equipped with the differential $\partial$; we demonstrate this through computations below. 

Let us examine the 2-representations as listed in  \eqref{2repdd}. In the electric sector, we use the Morita equivalence $\text{sgn}^{2\otimes}\simeq 1^{2\otimes}\cong 1$ to have
\begin{equation}
    \mathbf{c}\otimes\mathbf{c} = (1\oplus\text{sgn}) \otimes (1\oplus\text{sgn}) \cong 1\oplus \text{sgn}\oplus \text{sgn}^{2\otimes}\oplus\text{sgn}\simeq \mathbf{c}\oplus\mathbf{c},\label{fus0}
\end{equation}
which tells us that $\mathbf{c}$ is a {\bf Cheshire string} \cite{Johnson-Freyd:2020}; similarly, we compute
\begin{equation}
    \mathbf{c}^*\otimes\mathbf{c}^*\cong \text{sgn}^{2\otimes}\oplus\text{sgn}\oplus 1\oplus \text{sgn} \simeq \mathbf{c}\oplus\mathbf{c}. \nonumber
\end{equation}
Note that the order of the direct sums matter, as we have are keeping track of the (trivial) differential $\partial =0$. Indeed, we have on the other hand, 
\begin{equation}
    \mathbf{c}\otimes\mathbf{c}^* = (1\oplus\text{sgn})\otimes(\text{sgn}\oplus 1) \cong \text{sgn}\oplus 1 \oplus \text{sgn}\oplus 1 \simeq \mathbf{c}^*\oplus\mathbf{c}^* \simeq \mathbf{c}^*\otimes\mathbf{c},\label{fus0*}
\end{equation}
which is distinct from the above fusion rules.

Consider the mixed fusion $\mathbf{1}^*\otimes\mathbf{c}$. Here, we need to keep track of the non-trivial maps $\partial$,
\begin{eqnarray}
\mathbf{1}^*\otimes\mathbf{c}&=&(\text{sgn}\oplus\text{sgn})\otimes (1\oplus \text{sgn})\nonumber\\
&\cong& \begin{tikzcd}
\text{sgn} \arrow[rrrr, "\partial=1", bend left] & \oplus &  \text{sgn}^{2\otimes}(\simeq 1)\arrow[rrrr, "\partial=1", bend right] & \oplus &  \text{sgn} & \oplus &\text{sgn}^{2\otimes} (\simeq 1)
\end{tikzcd}.\nonumber
\end{eqnarray}
Since these maps $\partial$ are intertwiners (in fact the identity), its domain and codomain are the same. We keep only one copy, so that
\begin{equation}
    \mathbf{1}^*\otimes \mathbf{c} \simeq \text{sgn}\oplus1=\textbf{c}^*. \label{magcheshire}
\end{equation}
Through similar computations, we have $$\mathbf{1}\otimes\mathbf{1}\cong \mathbf{1},\qquad \mathbf{1}\otimes\mathbf{c}\simeq \mathbf{c},\qquad \mathbf{1}^*\otimes\mathbf{1}^*\simeq \mathbf{1},$$ hence $\mathbf{1},\mathbf{1}^*$ are the vacuum lines; in particular, $\mathbf{1}$ is the indecomposable identity object in $\operatorname{2Rep}(D(B\bbZ_2)^\text{trv})$.

\subsubsection{2-intertwiners; the 1-morphisms}\label{2int.1-morph} Recall from {\bf Definition \ref{2repcat}} that the 1-morphisms in $\operatorname{2Rep}(D(B\bbZ_2)^\text{trv})$ are given by $\bbZ_2$-equivariant {\it cochain maps}. Form the list  \eqref{2repdd}, we clearly have identity self-2-intertwiners, such as $i[00]= \operatorname{id}: \mathbf{1}\rightarrow \mathbf{1}$ and $i[11]=\operatorname{id}:\mathbf{c}\rightarrow\mathbf{c}$.
% \begin{equation}
%     i[00]:\mathbf{1}\rightarrow\mathbf{1},\qquad i[11]:\mathbf{c}\rightarrow \mathbf{c},\nonumber
% \end{equation}
As the source and target are the same graded $\bbZ_2$-representations for self-2-intertwiners in particular, we can find two more. These are given by a swap of grading together with a certain twist,
\begin{equation}
    i'[00]:(w,v)\mapsto (v,w),\qquad i'[11]: (w,v)\mapsto (\pm 1)\cdot (v,w),\label{swaptwiners}
\end{equation}
where $(v,w)\in V_{-1}\oplus V_0$ denotes elements in $\mathbf{1}$ or $\mathbf{c}$. Clearly, the identity $i[00],i[11]$ admit trivial actions by $\rho_1$, in contrast to the grading swaps $i'[00],i'[11]$. Hence from  \eqref{tensor2rep} and the grouplike coproduct $\Delta_{-1}$  \eqref{grouplike} we deduce the following fusion rules
\begin{equation}
    i[00]\otimes i[00] = i'[00]\otimes i'[00] = i[00],\qquad i[00]\otimes i'[00] = i'[00]\otimes i[00] = i'[00], \label{1morfus}    
\end{equation}
and similarly for $i[11],i'[11]$. The same analysis applies to the dual sector $i^*[00]\in\operatorname{End}\mathbf{1}^*,i^*[11]\in \operatorname{End}\mathbf{c}^*$.

Now consider a map $i[01]:\mathbf{1}\rightarrow\mathbf{c}$; in the absence of the homotopy $I$, the commutative diagrams  \eqref{2int} respectively enforce that
\begin{equation}
    i[01]_0 \circ 1 = 0 \circ i[01]_{1},\qquad  i[01]_{1}\circ 0=\hat1 (y)\circ i[01]_0,\nonumber
\end{equation}
where $\hat 1(y)=\rho_1(y)\in\operatorname{Hom}(V_0,V_1)$ is a non-trivial scalar multiplication. These equations admit a non-trivial solution $i[01]_0 = 0,i[01]_{1}=1$, hence there is a non-trivial 2-intertwiner
\begin{equation}
    i[01]=1\oplus 0:\mathbf{1}\rightarrow\mathbf{c};\nonumber
\end{equation}
similar arguments show that we also have a non-trivial 2-intertwiner 
\begin{equation}
    i[10]=0\oplus 1:\mathbf{c}\rightarrow\mathbf{1}. \nonumber
\end{equation}
These are the {only} possible 2-intertwiners between $\mathbf{1}$ and $\mathbf{c}$. Again, the same analysis applies to the dual sector. Since $i[01]$ and $i[10]$ have different domain and codomain, we must employ the decomposition  \eqref{tensordecomp} in order to find the tensor product between them \cite{gurski2006algebraic}. However, since the coproduct $\Delta_0=0$ is trivial in $D(B\bbZ_2)^\text{trv}$, we find their tensor product
\begin{equation}
    i[01]\otimes i[10] = i[10]\otimes i[01] \simeq 1 = i[00] \nonumber
\end{equation}
to be trivial as well% (recall $D(B\bbZ_2)$ "acts" on the trivial 2-intertwiner $i[00]=\operatorname{id}$ by the trivial cochain homotopy $\rho_1=0$)
. We shall see later in Section \ref{fermifusion} that this will be different once we introduce twists on $D(B\bbZ_2)$.

Let us now come finally to the 2-intertwiners that map between dual sectors. First, consider maps such as $\mathbf{1}\rightarrow\mathbf{1}^*$ or $\mathbf{c}\rightarrow\mathbf{c}^*$. Any such maps must intertwine between different $\bbZ_2$-representations in both degrees, and the only such map is $0$. Next, consider a map $\bar i[01]:\mathbf{1}\rightarrow \mathbf{c}^*$. The commutative diagrams  \eqref{2int} enforce
\begin{equation}
    \bar i[01]_0 \circ 0 = 1 \circ \bar i[01]_{1},\qquad  \bar i[01]_{1}\circ 0=\hat1 (1)\circ \bar i[01]_0.\nonumber
\end{equation}
The first equation says $\bar i[01]_{1} = 0$, while the second equation says $\bar i[01]_0 =0$, hence $\bar i[01]=0$. Similarly, any 2-intertwiner $\bar i[10]:\mathbf{c}^*\rightarrow \mathbf{1}$ must be trivial $\bar i[10]=0$. 

The above paragraph proves that $\operatorname{2Rep}(D(B\bbZ_2)^\text{trv})$ has two connected components made separately of the electric and magnetic objects in  \eqref{2repdd}, which have no (invertible) 1-morphisms between them. We denote the \textbf{identity component} of $  \operatorname{2Rep}(D(B\bbZ_2)^\text{trv})$, namely the connected component of the fusion identity $\mathbf{1}$, by $\Gamma$, which consist of nothing but the electric sector. Relabeling $i[00],i[11]=\mathfrak{1}$ and $i'[00],i'[11]=\mathfrak{e}$, we achieve the following structure for $\Gamma$ from  \eqref{1morfus},
\begin{equation}
    \begin{tikzcd}
\mathbf{1} \arrow[rr, "{i[01]}", bend left] \arrow["{\mathfrak{1},\mathfrak{e}}"', loop, distance=2em, in=215, out=145] &  & \mathbf{c} \arrow[ll, "{i[10]}", bend left] \arrow["{\mathfrak{1},\mathfrak{e}}"', loop, distance=2em, in=35, out=325]
\end{tikzcd},\label{gammadiag}
\end{equation}
which shall become crucial in the following.

\subsubsection{Cochain homotopies; the 2-morphisms} Recall from {\bf Definition \ref{2repcat}} that the 2-morphisms in $\operatorname{2Rep}_\text{wk}(B\bbZ_2)$ are given by $\widehat{\bbZ_2}$-equivariant cochain homotopies. Of course, the monoidal structure of the 1-morphisms (eg.  \eqref{1morfus}) induce a monoidal structure on the modifications $\mu\otimes\mu':i\otimes j\Rightarrow i'\otimes j'$, which by using the (so-far trivial) interchanger \eqref{interchange} can be expressed in terms of the composition $(\mu\otimes \id_j)\circ (\id_i\otimes\mu') \simeq \mu\circ \mu'$.

By inspection of the connected component $\Gamma$ \eqref{gammadiag}, one can argue that the only modifications possible in $\Gamma$ are self-modifications $\mu:i\Rightarrow i$. To see this, we first note that there is only one unique 1-morphism $i[01]$ (or $i[10]$) between the simple objects $\mathbf{1}$ and $\mathbf{c}$, hence we only have the trivial identity cochain homotopy $\id: i[01]\Rightarrow i[01]$. On the other hand, there are two 1-endomorphisms on $\mathbf{1}$ (or equivalently $\mathbf{c}$), denoted by $\mathfrak{1},\mathfrak{e}$. Each of these of course comes with its own trivial identity cochain homotopy, denoted by
\begin{equation}
    1:\mathfrak{1}\Rightarrow\mathfrak{1}\qquad \mu :\mathfrak{e}\Rightarrow\mathfrak{e}.\label{chainhomotopy}
\end{equation}
Here, we note that $\mu\simeq -1\cdot\id$ carries a global sign due to a grading swap in $\mathfrak{e}$ \eqref{swaptwiners}.

% \medskip

It then remains to check that there are no non-trivial cochain homotopies between $\mathfrak{1}$ and $\mathfrak{e}$. Let $\bar\mu: \mathfrak{1}\Rightarrow\mathfrak{e}$ denote such a cochain homotopy. In order for $\bar\mu$ to denote a genuine 2-morphism in $\operatorname{2Rep}_\text{wk}(D(B\bbZ_2))$, it must by definition \eqref{modif} intertwine $\rho_1$. However, $\widehat{\bbZ_2}$ "acts" trivially on $\mathfrak{1}$, while non-trivially on $\mathfrak{e}$,
$$\rho_1=\id: \mathfrak{1}\Rightarrow\mathfrak{1},\qquad \rho_1=\text{sgn}\cdot\id:\mathfrak{e}\Rightarrow\mathfrak{e},$$ and hence $\bar\mu=0$ must be trivial. This demonstartes that the only modofications in $\operatorname{2Rep}_\text{wk}(D(B\bbZ_2))$ that exist are the self-modifications $\mu:i\Rightarrow i$, as desired.

\medskip

We of course have the trivial 1- and 2-morphisms given by $0$. More importantly, we note that the non-trivial 1- and 2-morphisms that we have identified above are not unique. In particular, we have made the choice to normalize all of the 2-intertwiners and the self-modifications, whereas any scalar multiple of them would also be valid. Further, we are also able to take direct sums of the 2-intertwiners that we have identified above; basically, Section \ref{2int.1-morph} lists a minimal set of generators for the Hom-spaces of $\operatorname{2Rep}_\text{wk}(D(B\bbZ_2))$.

\paragraph{Loop 1-category: the 1-endomorphism space of the tensor unit.} Recall that, for any Abelian group $N$, $\rho_1:\widehat{N}\rightarrow \operatorname{End}(V)_{-1}$ defines a "$\widehat{N}$-action" by cochain homotopies on the endomorphisms $\operatorname{End}(V)$ of a 2-representation $V\in\operatorname{2Rep}_\text{wk}(D(BN))$. Furthermore, modifications $\mu:i\Rightarrow i'$ between $i,i'\in\operatorname{End}(V)$ by definition \eqref{modif} must necessarily intertwine this $\widehat{N}$-action. On the tensor unit $V=\mathbf{1}$, in particular, the space $\operatorname{End}(\mathbf{1})\cong k\xrightarrow{1}k$ furnishes a 1-dimensional irreducible $\widehat{N}$-module, for which the intertwining modifications between the different $\widehat{N}$-module structures are either the identity or trivial. This allows us to conclude that
\begin{equation}
    \Omega\operatorname{2Rep}_\text{wk}(D(BN)) = \operatorname{End}_{\operatorname{2Rep}_\text{wk}(D(BN))}(\mathbf{1}) \simeq \operatorname{Rep}(\widehat{N}).\label{looping}
\end{equation}  
For $N=\bbZ_2$, we recover the result that $\operatorname{End}(\mathbf{1}) \simeq \operatorname{Rep}(\bbZ_2)\simeq \operatorname{Vect}[\bbZ_2]$ has two distinct objects, $\mathfrak{1},\mathfrak{e}$, with no non-trivial modifications between them.

Notice that in the usual theory of higher representations \cite{Wen:2019,KongTianZhou:2020,Johnson_Freyd_2023,decoppet2023drinfeld}, where $\text{2Vect}^{KV}$ is 2-enriched in $\text{Vect}$ \cite{Kapranov:1994,Baez1996HigherDimensionalAI}, the above statement follows immediately from definition. However, in the context of $\text{2Vect}^{hBC}$ (which is {\it not} 2-enriched), we have to prove it by direct computation.

\begin{proposition}\label{4dtoric}
There is a $\mathrm{non}$-$\mathrm{monoidal}$ equivalence between $Z_1(\Sigma\operatorname{Vect}[\bbZ_2])$ and $\operatorname{2Rep}_\text{wk}(D(B\bbZ_2)^\mathrm{trv})$.
\end{proposition}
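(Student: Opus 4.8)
The plan is to build an equivalence of the \emph{underlying} 2-categories — forgetting all tensor data — and then to show that this equivalence admits no compatible monoidal structure, the obstruction being precisely the absent twist $\bar e$. First I would extract from \cite{Johnson-Freyd:2020} the underlying 2-categorical skeleton of $\mathscr{R} = Z_1(\Sigma\operatorname{Vect}[\bbZ_2])$: two connected components (a trivial and a magnetic sector), each generated by a vacuum string and a single Cheshire string, with the $\bbZ_2$-grading of $\operatorname{Vect}[\bbZ_2]$ recorded by a sign automorphism of the identity line. I would then line this up, component by component, with the data already computed for $\operatorname{2Rep}^\tau(D(B\bbZ_2)^\mathrm{trv})$: the four simple 2-representations of \eqref{2repdd}, the identity-component diagram \eqref{gammadiag}, and the 2-endomorphisms \eqref{chainhomotopy} with the nontrivial $\mu=-1$ on $\mathfrak{e}$.

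Next I would define the 2-functor $F$ directly on this matched data, sending $\mathbf{1}\mapsto\mathbf{1}$, $\mathbf{c}\mapsto$ Cheshire, $\mathbf{1}^*\mapsto$ magnetic string, $\mathbf{c}^*\mapsto$ magnetic Cheshire, the generating 1-morphisms $i[01],i[10],\mathfrak{1},\mathfrak{e}$ to the corresponding lines in $\mathscr{R}$, and the sign 2-morphism $\mu$ to the grading automorphism of the unit. Checking functoriality is then bookkeeping: vertical and horizontal composition of 2-morphisms, together with the compatibility with $\rho_1$ demanded in {\it Remark \ref{rho1}}, reduce on both sides to plain composition of cochain maps, because $t=0$ and $\Delta_{-1}$ is grouplike (see {\bf Definition \ref{2repcat}}). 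Since the electric and magnetic sectors are each disconnected in both targets, the resulting $F$ is an equivalence of underlying 2-categories.

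The crux is to show that $F$ is \emph{not} monoidal. Although the bare fusion rules agree — compare $\mathbf{c}\otimes\mathbf{c}\simeq\mathbf{c}\oplus\mathbf{c}$ in \eqref{fus0} and $\mathbf{1}^*\otimes\mathbf{c}\simeq\mathbf{c}^*$ in \eqref{magcheshire} with their counterparts in $\mathscr{R}$ — the associator data do not. On the $\operatorname{2Rep}^\tau(D(B\bbZ_2)^\mathrm{trv})$ side the component $\Delta_0$ enters the mixed tensor products \eqref{tensor2rep} trivially, so the only coherence input is the Postnikov class $\tau$; on the $\mathscr{R}$ side the tensor product of strings carries the nontrivial associator of $\operatorname{Vect}[\bbZ_2]$, equivalently the extension class $\bar e\in H^2(\bbZ_2,\widehat{\bbZ_2})$ that enters $D(B\bbZ_2)$ only after the twist of {\bf Theorem \ref{mainthm}}. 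I would make this quantitative exactly as in {\bf Lemma \ref{pentagonfree}}: a tensorator for $F$ would have to absorb the discrepancy $\bar e$ into a trivialization, forcing $\bar e$ to vanish in $H^2(\bbZ_2,\widehat{\bbZ_2})$, which it does not.

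The main obstacle is this last step — turning ``the twist is absent'' into an honest non-vanishing cohomological obstruction, since one must rule out \emph{every} candidate tensor structure on $F$, not merely the naive one. The cleanest route is to package the failure as the class $\bar e$ itself and to use that the two coproducts $\Delta_0$, on $D(B\bbZ_2)^\mathrm{trv}$ and on its $\bar e$-twist, differ by exactly this nonzero class as recorded in \eqref{boscoprod}; a monoidal $F$ would then contradict $\bar e\neq 0$, in parallel with the anomaly-free analysis of \eqref{anomfree}.
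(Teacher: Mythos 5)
Your construction of the functor itself matches the paper's: the paper likewise sends the four simple objects to $\mathbf{1},\mathbf{c},\mathbf{1}^*,\mathbf{c}^*$, matches $\mathring{1},e\in\Omega\mathscr{R}$ with $\mathfrak{1},\mathfrak{e}$, and observes that $\Sigma\operatorname{Vect}[\bbZ_2]$ has exactly the shape of the diagram \eqref{gammadiag}, so the first two paragraphs of your plan are sound. The gap is in the crux step. You assert that ``the bare fusion rules agree'' and locate the failure of monoidality in the \emph{associator} data, to be exhibited as a cohomological obstruction forcing $\bar e=0$. But the paper's actual obstruction is one level down and far more concrete: the fusion rules of the \emph{1-morphisms} do not agree. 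In $\operatorname{2Rep}^\tau(D(B\bbZ_2)^{\mathrm{trv}})$ the coproduct component $\Delta_0$ is identically zero, so the decomposition \eqref{tensordecomp} forces
\begin{equation}
    i[01]\otimes i[10] \;\simeq\; \mathfrak{1},\nonumber
\end{equation}
whereas in $\Sigma\operatorname{Vect}[\bbZ_2]$ the corresponding pair of 1-morphisms between the unit and the Cheshire string tensors to $\mathring{1}\oplus e$ (this is \eqref{tensorv's} on the twisted side, which is precisely what repairs the mismatch in {\bf Theorem \ref{4dspin}}). Since $\mathfrak{1}\not\simeq\mathfrak{1}\oplus\mathfrak{e}$, no tensorator whatsoever can exist for $\mathfrak{F}$; this is the clean, non-circular way to rule out ``every candidate tensor structure,'' and it is a computation you never perform.

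Your proposed route is also conceptually off target in two ways. First, the paper works with $\mathscr{R}=Z_1(\Sigma\operatorname{Vect}[\bbZ_2])$ \emph{with trivial associator class}, so there is no ``nontrivial associator of $\operatorname{Vect}[\bbZ_2]$'' on that side to create a discrepancy; in the paper's framework the associator 1-morphisms come from the Postnikov class $\tau$ via the coassociator $\Delta_1$, and $\tau$ is present identically on both the twisted and untwisted doubles, so it cannot distinguish them. Second, $\bar e$ is an extension 2-cocycle, not an associator datum: it enters the monoidal structure through the coproduct component $\Delta_0$ (cf. \eqref{boscoprod}), i.e.\ through the mixed object/1-morphism tensor products, which is exactly why the visible symptom is the 1-morphism fusion rule above. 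Consequently {\bf Lemma \ref{pentagonfree}} --- which concerns the anomaly-free condition $\tau\cup\bar e=0$ inside the \emph{twisted} category $\operatorname{2Rep}_m^\tau(D(B\bbZ_2)^{\mathrm{sgn}})$ --- gives you no obstruction here: in $D(B\bbZ_2)^{\mathrm{trv}}$ the class $\bar e$ simply does not appear in any of the data, so there is nothing for a hypothetical tensorator to ``fail to trivialize.'' The non-monoidality must be (and in the paper is) exhibited as a mismatch of explicitly computed fusion rules, not as a nonvanishing class in $H^2(\bbZ_2,\widehat{\bbZ_2})$.
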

\begin{proof}
We use the description of the braided fusion 2-category $\mathscr{R} \simeq Z_1(\Sigma\operatorname{Vect}[\bbZ_2])$ (with trivial associator class) describing the (3+1)D toric code given in \cite{Johnson-Freyd:2020}. This category has two identical components; the identity component $\Sigma\operatorname{Vect}[\bbZ_2]$ has two simple objects, given by the trivial $\bbZ_2$-algebra $I=\bbC$ and the Cheshire string $c=\bbC[x]/\langle x^2-1\rangle$, where $\bbZ_2$ acts non-trivially on $x$. Monoidally, the two components of $\mathscr{R}$ follow a fusion rule that is graded by $\bbZ_2$ \cite{KongTianZhou:2020},
\begin{equation}
    I^2\simeq m^2 \simeq I,\qquad c^2 \simeq m'^2\simeq c\oplus c,\qquad c\otimes m = m\otimes c \simeq m',\qquad c\otimes m'\simeq m'\otimes c\simeq m'\oplus m',\nonumber
\end{equation}
where $m,m'$ denotes the simple objects in the non-trivially graded copy of $\Sigma\operatorname{Vect}[\bbZ_2]$.

To show the desired equivalence, we need to exhibit a 2-functor $\mathfrak{F}:\mathscr{R}\rightarrow \operatorname{2Rep}_\text{wk}(D(B\bbZ_2))$ which is essentially surjective and fully faithful. This means that 
\begin{enumerate}
    \item $\mathfrak{F}$ is {\it essentially surjective}, namely a surjection on the equivalence classes of objects, and
    \item $\mathfrak{F}$ is {\it fully faithful}, namely it is an equivalence of Hom-categories.
\end{enumerate}
We begin by taking 
\begin{equation}
    \mathfrak{F}(I) = \mathbf{1},\qquad  \mathfrak{F}(c) = \mathbf{c},\qquad \mathfrak{F}(m) =\mathbf{1}^*,\qquad \mathfrak{F}(m') = \mathbf{c}^*,\nonumber
\end{equation}
which is a bijection on the simple objects. Hence $\mathfrak{F}$ is essentially surjective, and furthermore preserves the identity. Now to check that $\mathscr{F}$ is fully-faithful, we must consider the Hom-categories. Since $Z_1(\Sigma\operatorname{Vect}[\bbZ_2])\simeq \Sigma\operatorname{2Vect}[\bbZ_2] \oplus \Sigma\operatorname{2Vect}[\bbZ_2]$ \cite{KongTianZhou:2020}, it suffices to show full and faithfulness on the corresponding identity components $\mathfrak{F}: \Sigma\operatorname{Vect}[\bbZ_2]\rightarrow\Gamma$. 

To begin, we note that $\Sigma\operatorname{Vect}[\bbZ_2]$ is well-known to have the following form \cite{Douglas:2018},
\begin{equation}
    \begin{tikzcd}
I\arrow[rr, "{\operatorname{Vect}}", bend left] \arrow["{\operatorname{Vect}[\bbZ_2]}"', loop, distance=2em, in=215, out=145] &  & c \arrow[ll, "{\operatorname{Vect}}", bend left] \arrow["{\operatorname{Vect}[\bbZ_2]}"', loop, distance=2em, in=35, out=325]
\end{tikzcd},\nonumber
\end{equation}
with each of the Hom-categories labeled. We let $v_1\cong k$ (resp. $v_2\cong k$) denote respectively the simple object in the linear Hom-category $\operatorname{Vect}= \operatorname{Hom}_{\Sigma\operatorname{Vect}[\bbZ_2]}(I,c)$ (resp. $\operatorname{Vect}= \operatorname{Hom}_{\Sigma\operatorname{Vect}[\bbZ_2]}(I,c)$), which can be understood as the 1-dimensional vector space over $k$. Similarly, we let $\mathring{1},e$ denote the two simple objects of the linear Hom-category $\operatorname{Vect}[\bbZ_2] =  \operatorname{End}_{\Sigma\operatorname{Vect}[\bbZ_2]}(I)\simeq  \operatorname{End}_{\Sigma\operatorname{Vect}[\bbZ_2]}(c)$; the direct sum $\mathring{1}\oplus e$ corresponds to a 1-dimensional super- (ie. $\bbZ_2$-graded) vector space. 

By comparing with \eqref{gammadiag}, we define the following component functors of the 2-functor $\mathfrak{F}$ by
\begin{gather*}
    \mathfrak{1} = \begin{cases} i[00] = (\mathfrak{F})_{I\to I}(\mathring{1}) \\ i[11] = (\mathfrak{F})_{c\to c}(\mathring{1})\end{cases},\qquad \mathfrak{e} = \begin{cases}i'[00] = (\mathfrak{F})_{I\to I}(e) \\ i'[11] = (\mathfrak{F})_{c\to c}(e)\end{cases}\\
    i[01]= (\mathfrak{F})_{I\to c}(v_1),\qquad i[10] = (\mathfrak{F})_{c\to I}(v_2),
\end{gather*}
which we note are all unit-preserving and essentially surjective. It then suffices to check that these component functors are fully faithful. By leveraging the linearity of the Hom-categories under consideration, this is equivalent to checking that each of the component functors send (additive) generating 2-morphisms to generating 2-morphisms.

This is indeed the case. Let $j_1\in\operatorname{End}_{\operatorname{Vect}}(v_1)\cong k$ denote the non-trivial generating 2-morphism over the indecomposable 1-morphism $v_1\in \operatorname{Vect}= \operatorname{Hom}_{\Sigma\operatorname{Vect}[\bbZ_2]}(I,c)$. Then by construction $\mathfrak{F}_{I\to c}(j_1:v_1\Rightarrow v_1) = \operatorname{id}_{i[01]}:i[01]\Rightarrow i[01]$ is the identity self-modification on $i[01]=\mathfrak{F}_{Ic}(v_1)$, which is the generating object in the Hom-category $\operatorname{Hom}_{\operatorname{2Rep}_\text{wk}(D(B\bbZ_2))}(I,c)$ as required. 

Similarly, as $\mathfrak{F}_{I\to I},\mathfrak{F}_{c\to c}$ are additive, they send the generating 2-morphism $j_{\bbZ_2}: \mathring{1}\oplus e\Rightarrow \mathring{1}\oplus e$ over the indecomposable $\mathring{1}\oplus e\in \operatorname{Vect}[\bbZ_2]$ to the (graded) generating chain homotopy $1+\mu$ \eqref{chainhomotopy} over $\mathfrak{1}\oplus \mathfrak{e} = \mathfrak{F}_{I\to I}(\mathring{1}\oplus e)$. This shows that each component functor $\mathfrak{F}_{X\to Y}$ are equivalences of the corresponding Hom-categories, and hence $\mathfrak{F}:\Sigma\operatorname{Vect}[\bbZ_2]\rightarrow \Gamma$ is an equivalence of 2-categories. 

% For each $X,Y\in \Sigma\operatorname{Vect}[\bbZ_2]$, we then have
% \begin{equation*}
%     (\mathfrak{F})_{XY}(\operatorname{Hom}_{\operatorname{Hom}_{\Sigma\operatorname{Vect}[\bbZ_2]}(X,Y)}) \cong \operatorname{Hom}_{\operatorname{Hom}_{\operatorname{2Rep}_\text{wk}(D(B\bbZ_2))}(\mathfrak{F}(X),\mathfrak{F}(Y))}.
% \end{equation*}
\medskip

We now wish to lift $\mathfrak{F}$ to a {\it monoidal} 2-functor, which requires the fusion rules to be preserved (up to coherence). The computations \eqref{fus0}, \eqref{fus0*}, \eqref{magcheshire} show that $\mathfrak{F}:Z_1(\Sigma\operatorname{Vect}[\bbZ_2])\rightarrow \operatorname{2Rep}_\text{wk}(D(B\bbZ_2)^\text{trv})$ preserves the fusion rules of the simples, and is indeed monoidally essentially surjective. Next is to check that each component functor $\mathfrak{F}_{X\to Y},~X,Y\in \Sigma\operatorname{Vect}[\bbZ_2]$ is monoidal on the Hom-categories.

From the fusion rules \eqref{1morfus} for the 1-morphisms, we see that $\mathfrak{F}_{X\to X}$ with $X=I,c$ are indeed monoidal, but the issue is that $\mathfrak{F}_{I\to c}$ (or $\mathfrak{F}_{c\to I}$) is {\it not}: $\mathfrak{F}_{I\to c}(v_1)\otimes\mathfrak{F}_{I\to c}(v_1) = i[01]\otimes i[10] \simeq \mathfrak{1}$ is trivial in $\Gamma\subset\operatorname{2Rep}_\text{wk}(D(B\bbZ_2)^\text{trv})$, while $v_1\otimes v_1\not\simeq \mathring{1}$ is not in $\Sigma\operatorname{Vect}[\bbZ_2]$. This prevents $\mathfrak{F}$ from being a monoidal equivalence.
\end{proof}

The problem is in fact even worse --- we will show in the following that $\operatorname{2Rep}_\text{wk}(D(B\bbZ_2)^\text{trv})$ does not even define a gapped topological order. We elaborate in Section \ref{ferkitcharges} on how this problem can be amended by {\it twisting} the 2-algebra structure of $D(B\bbZ_2)$.

\subsection{The braiding data}
Let us for now turn to the {\it braiding} structure. From the perspective of $\mathscr{R}$, it is understood \cite{Johnson-Freyd:2020} in particular that there is the self-braiding
\begin{equation}
    \beta: m\otimes m\rightarrow m\otimes m\nonumber
\end{equation}
on the magnetic $m$ line, which can either be trivial $\mathring{1}$ or the electric $\bbZ_2$-particle $e$. An argument was given in \cite{Johnson-Freyd:2020} that states $\beta=\mathring{1}$ is in fact trivial. We will prove that this is also the case in $\operatorname{2Rep}_\text{wk}(D(B\bbZ_2)^\text{trv})$, but there is a major problem.

\begin{theorem}\label{braidtoric}
All braiding maps on $\operatorname{2Rep}(D(B\bbZ_2)^\text{trv})$ are trivial.
\end{theorem}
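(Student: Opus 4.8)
The plan is to reduce the statement entirely to the $R$-matrix data $R\in kN_\ast\otimes kN_\ast$ and $\cR^\pm$, since by \eqref{2repbraid1} and \eqref{2repbraid2} every braiding $1$-morphism is assembled from these: the object braiding is $b_{VW}=\text{flip}\circ\rho_0(R)$, while the mixed braidings are $b_{iW}=\text{flip}\circ(\rho_1\otimes\rho_0)(\cR^+)$ and $b_{Wi}=\text{flip}\circ(\rho_0\otimes\rho_1)(\cR^-)$. It therefore suffices to show that, for the untwisted double $D(B\bbZ_2)^\text{trv}$, all three of these elements act as the identity, so that each braiding coincides with the underlying symmetric flip --- which is exactly what ``trivial'' means here.

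First I would pin down $R$. Because $D(B\bbZ_2)^\text{trv}$ is skeletal ($t=0$) the component $R$ is independent data, and because the action $\bbZ_2\rhd\widehat{\bbZ_2}$ is trivial and no twisting cocycle ($\bar e$ or $\bar c$) has yet been introduced, the grouplike coproduct \eqref{grouplike} is both commutative and cocommutative and is rendered triangular by the undeformed element $R=1\otimes 1\in k\bbZ_2\otimes k\bbZ_2$; this is the canonical quasitriangular structure inherited from \cite{Chen:2023tjf}. Since $\rho_0$ is an algebra map, $\rho_0(R)=\rho_0(1)\otimes\rho_0(1)=\operatorname{id}$ on every $2$-representation, whence $b_{VW}=\text{flip}$ for all $V,W$ among the objects \eqref{2repdd}. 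An identical argument disposes of the mixed braidings once $\cR^\pm$ are identified with the trivial canonical elements of the bicrossed product $kBM\bar{\bowtie} kN_\ast$, so that $(\rho_1\otimes\rho_0)(\cR^+)$ and $(\rho_0\otimes\rho_1)(\cR^-)$ reduce to the identity as well. As a sanity check I would evaluate on the cases that would otherwise be candidates for non-triviality --- the self-braiding $b_{\mathbf{1}^*}$ of the magnetic line and the full braiding $B_{\mathbf{1}\mathbf{1}^*}$ between the electric and magnetic sectors --- and confirm via \eqref{ribboneq} that they are trivial precisely because $\rho_0(R)=\operatorname{id}$ irrespective of whether the relevant $\rho_0$ is the trivial or the sign representation.

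The main obstacle is the identification $R=1\otimes 1$ (and $\cR^\pm$ trivial) itself: in the skeletal case $R$ is not forced by $\cR$, so one must genuinely solve the $2$-$R$-matrix axioms of \cite{Chen:2023tjf} for the trivial action and verify that the minimal grouplike solution is the undeformed one. The conceptual point to be careful about is that, unlike the ordinary Drinfel'd double $D(\bbZ_2)$ whose electric--magnetic braiding is non-trivial, here the $\text{flip}$ must be taken to carry no Koszul sign on the graded tensor product \eqref{grtensor}; otherwise the magnetic self-braiding would already be fermionic and the theorem would fail. I would therefore emphasise that every non-trivial (in particular fermionic) braiding datum is produced only by the twists $\bar e,\bar c$ studied in the later sections, so that the untwisted $D(B\bbZ_2)^\text{trv}$ is maximally degenerate --- consistent with its non-trivial M\"uger centre.
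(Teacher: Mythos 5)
Your reduction of the theorem to the statement that $R$ and $\cR^\pm$ act as the identity is exactly how the paper begins, but the crucial step --- identifying the \emph{canonical} 2-$R$-matrix of $D(B\bbZ_2)^\text{trv}$ as the trivial one --- is precisely what you leave open, and the justification you sketch for it does not work. Cocommutativity of the grouplike coproduct does not force $R=1\otimes 1$: a commutative, cocommutative Hopf algebra can carry highly non-trivial quasitriangular structures, and the ordinary Drinfel'd double $D(\bbZ_2)$ is itself such an example --- it is commutative and cocommutative as a Hopf algebra, yet its canonical $R$-matrix (the canonical element of the pairing between the two factors) is not $1\otimes 1$ and produces exactly the non-trivial electric--magnetic braiding you mention in your last paragraph. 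So ``the coproduct is cocommutative, hence the undeformed element gives the quasitriangular structure'' establishes nothing about which $R$-matrix the double construction actually endows. Your proposed repair --- solve the 2-$R$-matrix axioms for the trivial action and take the ``minimal grouplike solution'' --- has the same defect: the braiding on $\operatorname{2Rep}(D(B\bbZ_2)^\text{trv})$ is induced by the specific 2-$R$-matrix that comes with the bicrossed product $kBM\bar{\bowtie}kN_\ast$, not by whichever solution of the axioms one finds most convenient, so exhibiting \emph{a} trivial solution does not prove that \emph{the} braiding is trivial.

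The paper closes this gap with a reconstruction argument in the style of Majid: the canonical $R$-matrix is recovered as $R=\bar\Psi\circ\operatorname{coev}$, where $\bar\Psi$ is the braided transposition characterized by $xx'=\cdot\circ\bar\Psi(x'\otimes x)$, and analogously $\cR^+=\Psi^l_{-1}\circ\operatorname{coev}_l$, $\cR^-=\Psi^r_{-1}\circ\operatorname{coev}_r$ for the mixed components. This pins the $R$-matrix to computable algebraic data: since $\bbZ_2$ is Abelian, $\bar\Psi$ is the identity, and since the action in $D(B\bbZ_2)^\text{trv}$ is trivial, the transpositions $\Psi^{l,r}_{-1}$ reduce to Pontrjagyn duality; hence $R$ and $\cR^\pm$ act as the identity and all braiding maps (object, mixed, and therefore full) are trivial. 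This also explains why the 2-double behaves differently from the 1-double $D(\bbZ_2)$: the canonical pairing element of the 2-double lives across degrees, in $kBM\otimes kN_\ast$, and it is the triviality of the $\bbZ_2$-action on $\widehat{\bbZ_2}$ (not any convention about Koszul signs in the flip) that kills the would-be electric--magnetic phases. To complete your argument you would need to either import this reconstruction theorem from \cite{Majid:1994nw,Chen:2023tjf} or compute the canonical element of the bicrossed-product pairing directly; without one of these, the identification $R=1\otimes 1$ is exactly the content of the theorem, assumed rather than proved.
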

\begin{proof}
Recall from  \eqref{2repbraid1}, \eqref{2repbraid2} that the braiding structure of $\operatorname{2Rep}(D(B\bbZ_2)^\text{trv})$ is induced by a 2-$R$-matrix $(\cR,R)$ on $D(B\bbZ_2)^\text{trv}$. However, instead of explicitly solving the 2-Yang-Baxter equations, we are instead going to {\it induce} the 2-$R$-matrix from the structure of the 2-quantum "quadruple" $D(D(B\bbZ_2),D(B\bbZ_2))$.

This method is based on the general quantum double construction of Majid \cite{Majid:1994nw,Majid:1996kd}, in which the universal $R$-matrix $R$ on $D(\bbZ_2,\bbZ_2) = k\bbZ_2 \bar{\bowtie} k\bbZ_2$ can be reconstructed
\begin{equation}
    R =  \bar\Psi\circ\operatorname{coev},\label{braidrmatrix}
\end{equation}
from the {\it braided transposition} $\bar\Psi:k\bbZ_2\otimes k\bbZ_2\rightarrow k\bbZ_2\otimes k\bbZ_2$ of the underlying quantum double (note here $k\bbZ_2 = D(B\bbZ_2)_0$ is in degree-0), satisfying \cite{Majid:1994nw,Majid:1996kd}
\begin{equation}
    xx' = \cdot\circ \bar\Psi(x'\otimes x),\qquad x,x'\in\bbZ_2.\nonumber
\end{equation}
Here, $\operatorname{coev}$ is the coevaluation dual to the canonical pairing form on $\bbZ_2$. Now since $\bbZ_2$ is Abelian, $\bar \Psi$ is simply the identity and hence  \eqref{braidrmatrix} states that $R = \operatorname{id}$ is in fact the identity matrix. The braiding maps $b_{V,W} =1$ are thus all trivial.

\smallskip

Now consider the universal 2-$R$-matric $\cR$. The above result was categorified in \cite{Chen:2023tjf}, hence we can play the same game and reconstruct
\begin{equation}
    \cR^+ = \Psi^l_{-1}\circ\operatorname{coev}_l,\qquad \cR^- = \Psi_{-1}^r\circ\operatorname{coev}_r\label{braid2rmatrix}
\end{equation}
from the underlying braided transpositions 
\begin{gather}
    \Psi_{-1}^l: \widehat{\bbZ_2}\otimes\bbZ_2\rightarrow \bbZ_2\otimes\widehat{\bbZ_2},\qquad \Psi_{-1}^r: \bbZ_2\otimes\widehat{\bbZ_2}\rightarrow \widehat{\bbZ_2}\otimes\bbZ_2 \nonumber \\
    y\cdot f = \cdot \Psi^l_{-1}(y\otimes f),\qquad x\cdot g = \cdot \circ \Psi^r_{-1}(x\otimes g)\label{psimap}
\end{gather}
on the Drinfel'd double 2-bialgebra $D(D(B\bbZ_2),D(B\bbZ_2))$, where $x,f\in\bbZ_2$ and $y,g\in\widehat{\bbZ_2}$ and $\operatorname{coev}_{l,r}$ is the coevaluation dual to the Pontrjagyn pairing. However, in the case of $D(B\bbZ_2)^\text{trv}$, the braided transposition is merely the Pontrjagyn duality,
\begin{equation}
    \Psi_{-1}^l(y\otimes f) = \hat y\otimes \hat f,\qquad \Psi_{-1}^r(x\otimes g) = \hat x\otimes \hat g,\nonumber
\end{equation}
whence  \eqref{braid2rmatrix} states that $\cR^\pm = p\circ \operatorname{coev} = \operatorname{id}$ is the identity matrix. The mixed braiding maps $b_{i,W},b_{W,i}$ are thus all trivial.

\end{proof}
\noindent The fact that all the braiding maps are trivial on $\operatorname{2Rep}_\text{wk}(D(B\bbZ_2)^\text{trv})$ can also be seen from the corresponding topological NLSM $Z_\text{Kit}^0$, which has no terms in its action that encode any non-trivial statistics of the charges in the theory \cite{Zhu:2019,Wen:2019}. Of course, we already know from {\bf Proposition \ref{4dtoric}} that $\operatorname{2Rep}_\text{wk}(D(B\bbZ_2)^\text{trv})$ not (braided) monoidally equivalent to the toric code $\mathscr{R}$, and hence calling $Z_\text{Kit}^0$ the "4d toric code" is incorrect.

\begin{remark}\label{invisibletoric}
$\operatorname{2Rep}_\text{wk}(D(B\bbZ_2)^\text{trv})$ is "too trivial" to even describe a gapped topological phase, since it violates the principle of \textbf{remote detectability} \cite{Kong:2020,KongTianZhou:2020,Johnson-Freyd:2020}. This principle states that all non-trivial excitations can be detected by braiding, and it is part of the definition of a topological order (such as the toric code $\mathscr{R}\simeq \Sigma\operatorname{Vect}[\bbZ_2]$). In this simple $\bbZ_2$-charged case, this principle is encoded by the presence of the term $\langle B\cup \bar e(A)\rangle$ in the Dijkgraaf-Witten 4-cocycle $\omega$ \cite{Zhu:2019,Johnson-Freyd:2020}, which is only present for $Z_\text{Kit}^s$. Nevertheless, the above computations lay the foundation for our results in the following.
\end{remark}

\begin{remark}\label{wk2algrepthy}
    It is worthwhile to pause here and comment on the fact that $\operatorname{2Rep}_\text{wk}(D(B\bbZ_2)^\text{trv})$ is a symmetric 2-category, similar to the (Kapranov-Voevodsky) 2-representations $\operatorname{2Rep}_{D(B\bbZ_2)}$ of (the 2-group underlying) $D(B\bbZ_2)$. More generally, we have shown in the appendix of \cite{Chen:2023tjf} that, for any split skeletal 2-group $\mathbb{G}$, the weak 2-representation 2-category $\operatorname{2Rep}_\text{wk}(k\mathbb{G})$ of its associated 2-group algebra $k\mathbb{G}$ (see Section \ref{2grpbialg}) hosts the same identical set of coherence data/relations as the 2-category of KV 2-representations $\operatorname{2Rep}_{\mathbb{G}}$ as studied in the literature \cite{Baez:2012,Douglas:2018,Delcamp:2023kew}. Since the grouplike coproduct \eqref{grouplike} on $k\mathbb{G}$ is cocoummutative, the 2-$R$-matrix is trivial and hence no non-trivial braiding is present.
\end{remark}

\section{Excitations in the spin-Kitaev model $Z_\text{Kit}^s$}\label{ferkitcharges}
We now turn to the spin-Kitaev model $Z_\text{Kit}^s$ given by the Drinfel'd double 2-bialgebra $D(B\bbZ_2)^\text{sgn}$. Its 2-representations have the same ingredients as those of $D(B\bbZ_2)^\text{trv}$, and hence the 2-category $\operatorname{2Rep}(D(B\bbZ_2)^\text{sgn})$ also has four objects, similar to those in  \eqref{2repdd}.

The difference here is that $D(B\bbZ_2)^\text{sgn}_0 =\bbZ_2$ now acts non-trivially on $D(B\bbZ_2)^\text{sgn}_{-1}=\widehat{\bbZ_2}$. This action was obtained by dualizing the non-trivial action $u\in \operatorname{Aut}(k\bbZ_2)$, which induces via  \eqref{ext} the class $\bar e=\frac{1}{2}u^2\in H^2(\bbZ_2,\bbZ_2)$ determining the non-trivial central extension of $\bbZ_2$ by itself. This extension is $\bbZ_4$, which we interpret as a  "semidirect product" $\bbZ_2\rtimes \bbZ_2$ where the central element $x^2\in\bbZ_2$ acts by $-1$.

As such, the component $\rho_0^0(x)^2$ "acts" non-trivially on the degree-(-1) component of the graded 2-representation spaces. In other words, provided $\rho_0^0$ is non-trivial, the component $\rho_0$ of the 2-representation $\rho$ furnishes a representation of $\bbZ_2\rtimes\bbZ_2$, satisfying
\begin{equation}
    \rho_0(x^2)(w,v) = (\bar e(x,x)\cdot(\rho_0^1(x^2))w,\rho_0^0(x^2)v) = (-w,v),\qquad x\in\bbZ_2,\nonumber
\end{equation}
where $(w,v) \in V\cong V_{-1}\oplus V_0$. We denote such representations by $\rho_0=\rho_0^1\oplus_\pm \rho_0^0 = (\bar e\cdot \rho_0^1,\rho_0^0)$. From  \eqref{2repdd}, we thus see that the magnetic vacuum line $\mathbf{1}^*$ and the Cheshire string $\mathbf{c}$ carry a $\bbZ_4$-representation, while the electric vacuum line $\mathbf{1}$ and the magnetic Cheshire $\mathbf{c}^*$ carry a $\bbZ_2\times\bbZ_2$-representation.

\medskip

Now recall from {\it Remark \ref{rho1}} that $\rho$ should preserve the identity, which was a vacuous condition as $\rho_0^{0,1}(x^2)=1$ are both trivial for $D(B\bbZ_2)^\text{trv}$. However, due to the non-trivial sign coming from $\rho_1(\bar e(x,x))=-1$ in the current case, this becomes a non-trivial relation that one must impose,
\begin{equation}
    -1\cdot \rho_1(y)=\rho_1( \bar e(x,x)\cdot y)\rho_0^1(x^2) = \rho_0^0(x^2)\rho_1(y)=\rho_1(y),\qquad y\in\widehat{\bbZ_2}.\nonumber
\end{equation}
The component $\rho_1$ is thus no longer required in general to preserve the identity. As $V_0,V_{-1}\cong k$ are both 1-dimensional vector spaces over the ground field $k$, we have 
\begin{equation}
    \rho_1(y^{-1})\rho_1(y)=\rho_1(y)^2 = \rho_1(y)^2(\rho_1(y^2))^{-1}\equiv \bar c(y,y) = -1\label{ferm2cocy}
\end{equation}
by considering $\rho_1(y)\in k^\times$ as an invertible element. This defines a 2-cocycle $\bar c\in H^2(\widehat{\bbZ_2},k^\times)$ at degree-(-1) carried by 2-representations that have $\rho_1\neq 0$. In other words, the Cheshire strings $\mathbf{c},\mathbf{c}^*$ are capable of carrying a minus sign due to $\bar c$, while the vacuum lines $\mathbf{1},\mathbf{1}^*$ do not.

By considering $D(B\bbZ_2)$ as a categorical group, $\bar c$ determines an interchanger 2-morphism $h(y_1,y_1';y_2,y_2') = \bar c(y_1,y_2)$ implementing  \eqref{grpinterchange} \cite{Wen:2019}; see {\it Remark \ref{gencatgrp}}. Correspondingly, we thus have two versions of the 2-category $\operatorname{2Rep}_{f,m}^\tau(D(B\bbZ)^\text{sgn})$, corresponding to the versions of $D(B\bbZ_2)$ that either carry the projective sign $\bar c$ or do not. 

\paragraph{Twisted Drinfel'd double 2-bialgebras.} These 2-cocycles $\bar c,\bar e$ can alternatively be interpreted as "twists" in the 2-algebra structure of the Drinfel'd double 2-bialgebra. Moreover, they can also be interpreted as contributions to the 4-cocycles $H^4(D(B\bbZ_2),k^\times)$ of the (2-group underlying the) Drinfel'd double 2-bialgebra $D(B\bbZ_2)$. This is a categorification of the 3-cocycle twists of an ordinary 1-Drinfel'd double/3d tube algebra \cite{Willerton:2008gyk}; indeed, twists of 2-group(oid) algebras by 4-cocycles have also appeared in the construction of the 4d tube algebra \cite{Bullivant:2021}. 

More precisely, the degree-4 cohomology of $D(B\bbZ_2)$ was computed in \cite{Kapustin2017} to take the form 
\begin{equation*}
    H^4(D(B\bbZ_2),k^\times) \cong H^4(B^2\widehat{\bbZ_2},k^\times)\oplus H^2(B\bbZ_2,\widehat{\bbZ_2})\oplus H^4(\bbZ_2,k^\times).
\end{equation*}
The 2-cocycle $\bar e$ fits into the second term, while the double suspension map $\widehat{\bbZ_2}\to B^2\widehat{\bbZ_2}$ sends $\bar c\mapsto \bar c[1]$ into the first term \cite{Wen:2019,Johnson_Freyd_2023,Zhu:2019}. This allows us to identify two different 2-group 4-cocycles 
\begin{equation}
    \omega_f = \bar c[1] + \bar e, \quad \omega_b = \bar e \quad \in H^4(\mathbb{G},k^\times)\label{4cocy}
\end{equation}
corresponding to twists of the Drinfel'd double 2-bialgebra $D(B\bbZ_2)$, where the notation "$[1]$" signifies a degree-shift under the double suspension map. These are the 4-cocycles that had appeared in {\bf Theorem \ref{mainthm}}.

In analogy with the 3-dimensional case \cite{Willerton:2008gyk}, we shall denote the twisted Drinfel'd double 2-bialgebras by $D^\omega(B\bbZ_2)$, where $\omega\in H^4(D(B\bbZ_2),k^\times)$. We take, now with proper naming,
\begin{gather}
    \textbf{Spin-Kitaev: } \quad \operatorname{2Rep}_f^\tau(D(B\bbZ_2)^\text{sgn}) = \operatorname{2Rep}_\text{wk}(D^{\omega_f}(B\bbZ_2)),\nonumber\\
    \textbf{Toric code: }\quad \operatorname{2Rep}_m^\tau(D(B\bbZ_2)^\text{sgn}) = \operatorname{2Rep}_\text{wk}(D ^{\omega_b}(B\bbZ_2)),\nonumber
\end{gather}
in which the first version is called {\bf fermionic} ($f$-subscript) while the second version is {\bf bosonic} ($m$-subscript). This notation is suggestive, as it corresponds to whether the degree-(-1) $\widehat{\bbZ_2}$ of the Dijkgraaf-Witten NLSM associated to $D(B\bbZ_2)^\text{sgn}$ is fermion parity $\bbZ_2^f$ or a bosonic $\pi$-flux $\bbZ_2^m$ \cite{Zhu:2019,Wen:2019}. 

\smallskip

Strictly speaking, the monster 2-BF theory  \eqref{toric} associated to $\operatorname{2Rep}_f(D(B\bbZ_2)^\text{sgn})$ should include a term $\bar c(B,B)$ given by the data of the 2-cocycle $\bar c$, whence the partition function  \eqref{kit} reads
\begin{equation}
    Z_\text{Kit}^s(X) \sim \sum_{\substack{dA=0 \\ dB = \tau}}e^{i2\pi \int_X\langle B\cup \bar e(A)\rangle + \bar c(B,B)}.\label{fermkit}
\end{equation}
Note that this term $\bar c(B,B)$, being cohomological, does not alter the EOM\footnote{Indeed, a 2-gauge theory with $F=B$ as an equation of motion would host instead a trivial 2-group $\bbZ\xrightarrow{1}\bbZ_2$ \cite{chen:2022}.} for the fields $(A,B)$. The theory $Z_{\text{Kit}}^s$  has also appeared as part of the NLSM construction in \cite{Zhu:2019}, provided we identify
\begin{equation}
    \bar e(A) = \frac{1}{2}\operatorname{Sq}^1A,\qquad \bar c(B,B) = \frac{1}{2}\operatorname{Sq}^2B\label{Z2stablecohomo}
\end{equation}
in terms of the $\bbZ_2$-cohomology operation $\operatorname{Sq}^i:H^j(X,\bbZ_2)\rightarrow H^{j+i}(X,\bbZ_2)$ called the \textbf{Steenrod square} \cite{book-charclass}.

\begin{remark}\label{particlestatistics}
In the spin-Kitaev model $Z_\text{Kit}^s$, the coefficient of $1/2$ that appeared in front of the term $\operatorname{Sq}^2B$ means that the point-like particle in the NLSM is a fermion \cite{Zhu:2019}. If this coefficient is $1/4$, then such a term $\frac{1}{4}\operatorname{Sq}^2B = \mathfrak{p}_2(B)$ gives a cohomology operation called the {\it Pontrjagyn square} $\mathfrak{p}_2:H^2(X,\bbZ_2)\rightarrow H^4(X,\bbZ_4)$ \cite{Kapustin2017}. The point particle would then be a {\it semion} \cite{Zhu:2019} in this case.
\end{remark}

% In the following, we shall denote the two versions of the 2-categories  $\operatorname{2Rep}_m(D(B\bbZ_2)^\text{sgn})$ and $\operatorname{2Rep}_f(D(B\bbZ_2)^\text{sgn})$ collectively by $\operatorname{2Rep}(D(B\bbZ_2)^\text{sgn})$, without the subscript.

% As we aim to reconstruct the topological phases defined in \cite{Johnson-Freyd:2020}, we shall mainly focus on the fermionic version $\operatorname{2Rep}_f(D(B\bbZ_2)^\text{sgn})$ in the following. With this understood, we will drop the subscript $f$ from hereon.

\subsection{Fusion structure in the twisted case}\label{fermifusion} Due to the presence of 2-cocycles $\bar e$ and $\bar c$ in $\operatorname{2Rep}_f(D(B\bbZ_2)^\text{sgn})$, the corresponding coproduct component $\Delta_0'$ governing the tensor product of 2-representations via  \eqref{tensor2rep} now satisfies a modified version of the condition  \eqref{2algcoprod},
\begin{equation}
    \Delta_0'(x^2) =(\bar e(x,x)\cdot \bar e(x,x))\otimes x^2 = \bar c(y,y)1\otimes 1,\label{fermcoprod}
\end{equation}
where we have noted $y=\bar e(x,x)$ and the twisted monoidal structure $y\cdot y = \bar c(y,y) \cdot 1$ for generators $x\in\bbZ_2,y\in\widehat{\bbZ_2}$. The presence of the sign $\bar c(y,y)=-1$ allows us to lift or trivialize certain $\bbZ_4$-representations. We demonstrate this with explicit computations.

Forming the tensor product, we see that the fusion rules in $\operatorname{2Rep}_f(D(B\bbZ_2)^\text{sgn})$ must be different than that in $\operatorname{2Rep}(D(B\bbZ_2)^\text{trv})$. To see this more explicitly, we perform a monoidal computation while keeping track of the data $\rho_1:V_0=\text{sgn}\rightarrow V_{-1}=1$,
\begin{eqnarray}
    \mathbf{c}\otimes\mathbf{c}&=& (\rho_\mathbf{c}\otimes \rho_{\mathbf{c}}) \circ\Delta_0' \nonumber\\
    &=&(\bar e\cdot 1\otimes \bar e \cdot 1\xleftarrow{\rho_1} \text{sgn}\otimes \text{sgn} (\simeq 1)) \oplus (\bar e\cdot 1 \otimes\text{sgn}~\substack{\xleftarrow{\rho_1\otimes 1} \\ \xrightarrow[1\otimes \rho_1]{}}~ \text{sgn}\otimes \bar e\cdot 1) \nonumber\\
    &\simeq& (1 \xleftarrow{\hat 1} 1) \oplus (\bar e\cdot 1\otimes \text{sgn}~\substack{\xleftarrow{\rho_1\otimes 1} \\ \xrightarrow[1\otimes \rho_1]{}} ~\text{sgn}\otimes \bar e\cdot 1),\nonumber
\end{eqnarray}
where we we have used the fact that $(\bar e\cdot 1)^{2\otimes} \simeq 1$ and $\rho_1\otimes \rho_1 \simeq \hat 1$. 

The first term is simply the trivial representation $1$, while we use $\rho_1(y)^2= \bar c(y,y)=-1$ in the second term to lift "$\text{sgn}$" to a sign representation of the subgroup $\bbZ_2\subset \bbZ_4$. However, together with the factor $\bar e(x,x)\neq 1$, this allows to degenerate $\bar e\cdot 1\otimes \text{sgn}\simeq 1$ to the trivial representation; this is the effect of the condition  \eqref{fermcoprod}. As such, we have
\begin{equation}
    \mathbf{c}\otimes\mathbf{c} \simeq (1\xleftarrow{\hat 1} 1) \oplus (1\xleftarrow{\hat 1}1)\simeq 1\oplus 1 = \mathbf{1},\label{fus1}
\end{equation}
which is indeed distinct from  \eqref{fus0}. The magnetic Cheshire $\mathbf{c}^*$, on the other hand, does not carry $\bar e$, so it furnishes a $k\bbZ_2\times k\bbZ_2$-representation. However, it does carry the 2-cocycle $\bar c$, which lifts the sign representation of $\bbZ_2$ to the trivial one. Hence we deduce that we have $\mathbf{c}^*\otimes\mathbf{c}^*\simeq \mathbf{1}$ as well.

On the other hand, the above argument can be applied to compute the fusion 
\begin{equation}
    \mathbf{c}\otimes\mathbf{c}^* \simeq (1\xleftarrow{\hat 1}\text{sgn}) \oplus (1\xleftarrow{\hat 1} \text{sgn}) \simeq \text{sgn}\oplus \text{sgn} \simeq\mathbf{1}^*, \label{fus1*}
\end{equation}
where a non-trivial sign representation is left over due to the lack of a 2-cocycle $\bar e$ carried by the magnetic Cheshire line $\mathbf{c}^*$. Similarly, we have $\mathbf{c}^*\otimes\mathbf{c}\simeq \mathbf{1}^*$.

% Alternatively, combined with the technique outlined in the computation of  \eqref{magcheshire}, we have here
% \begin{eqnarray}
% \mathbf{1}^*\otimes\mathbf{c}&=&(\text{sgn}\oplus\text{sgn})\otimes (1\oplus_\pm\text{sgn})\nonumber\\
% &\cong& \begin{tikzcd}
% \text{sgn} \arrow[rrrr, "1", bend left] & \oplus_\pm & \text{sgn}^{2\otimes}(\simeq 1) \arrow[rrrr, "1", bend right] & \oplus & \text{sgn}& \oplus_\pm & \text{sgn}^{2\otimes}(\simeq 1) 
% \end{tikzcd}\nonumber\\
% &\simeq & \text{sgn}\oplus_\pm 1 = \mathbf{c}^*,\label{spincheshire}
% \end{eqnarray}
% which implies that $\mathbf{c}
% \otimes\mathbf{c}\simeq 1$, consistent with  \eqref{fus1}.

The above computations for \eqref{fus1}, \eqref{fus1*} rely crucially on $\bar c\neq0$. Therefore, if $\bar c=0$ were trivial, then the Cheshire strings $\mathbf{c},\mathbf{c}^*\in \operatorname{2Rep}_m(D(B\bbZ_2)^\text{sgn})$ revert to having the same fusion rules  \eqref{fus0}, \eqref{fus0*} as those in $\operatorname{2Rep}(D(B\bbZ_2)^\text{trv})$. This observation corroborates with \cite{Johnson-Freyd:2020}.

\paragraph{Fusion rules for the 2-intertwiners $i[01],i[10]$.} Now in contrast to the previous case of the invisible toric code, the coproduct component $\Delta_0$ is non-trivial for the Drinfel'd double 2-bialgebra $D(B\bbZ_2)^\text{sgn}$. By  \eqref{tensor2rep}, this induces a tensor product between the 2-representations  \eqref{2repdd} and the 2-intertwiners on them. To be concrete and for brevity, we shall concentrate on the connected component $\Gamma = \operatorname{End}_{\operatorname{2Rep}_\text{wk}(D(B\bbZ_2)^\text{sgn}}(\mathbf{1})$ in the following.

The fusion rules for the self-2-intertwiners $i[00]=i[11]=\mathfrak{1},i[00]'=i[11]'=\mathfrak{e}$ remain the same as  \eqref{fus1}, hence we shall focus on the fusion rules between $i[01],i[10]$. For convenience, we relabel these 2-intertwiners as $v_\mathbf{1},v_\mathbf{c}$ by their domains, and the goal is to directly compute the tensor product $v_\mathbf{1}\otimes v_\mathbf{c}=v_\mathbf{c}\otimes v_\mathbf{1}$ through the definition \eqref{tensor2rep}. Given \eqref{tensordecomp}, it was noted in \cite{Chen:2023tjf} that, similar to what happens in {\it Gray categories} \cite{gurski2006algebraic,neuchl1997representation}, the following two decompositions of $i\otimes j$
\begin{equation}
    v_\mathbf{1}\otimes\mathbf{1} \circ \mathbf{1}\otimes v_\mathbf{c},\qquad v_{\mathbf{c}}\otimes\mathbf{c} \circ \mathbf{c}\otimes v_\mathbf{1}\nonumber
\end{equation}
differ up to an invertible modification. This 2-isomorphism was computed in \cite{Chen:2023tjf} to be given by the weak component $\varrho=\rho_1\circ\bar e$, which in this case is determined by the 2-cocycle $\bar e\in H^2(\bbZ_2,\widehat{\bbZ_2})$ (see \eqref{moduleassoc} later).

This fact is verified after a bit of a lengthy computation. We find that, for each non-trivial $x\in\bbZ_2$ (recall the counit $\epsilon$ defines the trivial 2-representation $\rho\cong 1$),
\begin{eqnarray}
    \rho_{v_\mathbf{1}\otimes\mathbf{1}}\cdot \rho_{\mathbf{1}\otimes v_\mathbf{c}}(x) &=& \epsilon_{-1}\otimes \operatorname{id} \cong \rho_{\mathfrak{1}}, \nonumber\\
    \rho_{v_\mathbf{c}\otimes \mathbf{c}}\cdot \rho_{\mathbf{c}\otimes v_\mathbf{1}}(x) &=& (\epsilon_{-1} \otimes \rho_0(x))\cdot (\epsilon_{-1}\otimes \rho_0(x)) = (\epsilon_{-1}\otimes \rho_0(x)^2).\nonumber
\end{eqnarray}
Upon using the extension class $\bar e$, the latter indeed becomes $\rho_1(\bar e(x,x))\otimes \rho_0(x^2) = \rho_1(y)\otimes \operatorname{id} \cong \rho_\mathfrak{e}$, where $y\in\widehat{\bbZ_2}$ is the non-trivial generator. These contribute as direct summands into the tensor product, whence
\begin{equation}
    v_\mathbf{1}\otimes v_\mathbf{c} (= v_\mathbf{c}\otimes v_\mathbf{1})\simeq \mathfrak{1}\oplus\mathfrak{e}.\label{tensorv's} 
\end{equation}
This is required for the following.

\begin{theorem}\label{4dspin}
There are monoidal equivalences $$\mathfrak{F}_m:Z_1(\Sigma\operatorname{Vect}[\bbZ_2])\simeq \operatorname{2Rep}_\text{wk}(D^{\omega_b}(B\bbZ_2)),\qquad \mathfrak{F}_f:Z_1(\Sigma\mathrm{sVect}) \simeq \operatorname{2Rep}_\text{wk}(D^{\omega_f}(B\bbZ_2)$$
of fusion 2-categories.
\end{theorem}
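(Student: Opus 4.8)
The plan is to upgrade the non-monoidal functor $\mathfrak{F}$ of Proposition \ref{4dtoric} into genuine monoidal equivalences by exploiting the corrected fusion rules of the twisted double. First I would recall from \cite{Johnson-Freyd:2020} the explicit presentations of $\mathscr{R} = Z_1(\Sigma\operatorname{Vect}[\bbZ_2])$ and $\mathscr{S} = Z_1(\Sigma\mathrm{sVect})$: both carry the same four objects $1,c,m,m'$ with $m'\simeq m\otimes c$ and loop categories $\Omega\mathscr{R}\simeq\operatorname{Vect}[\bbZ_2]$, $\Omega\mathscr{S}\simeq\mathrm{sVect}$, but they are distinguished as fusion 2-categories by exactly two features: (i) the Cheshire fusion rule, namely $c\otimes c\simeq c\oplus c$ in $\mathscr{R}$ versus the Clifford collapse $c\otimes c\simeq \mathbf{1}$ in $\mathscr{S}$, and (ii) the symmetric braiding carried by the generator $e\in\Omega$, which is trivial in $\operatorname{Vect}[\bbZ_2]$ but carries the Koszul sign $-1$ in $\mathrm{sVect}$.

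On both sides I would set $\mathfrak{F}_m,\mathfrak{F}_f$ equal to $\mathfrak{F}$ on objects and $1$-morphisms, namely $1\mapsto\mathbf{1},\,c\mapsto\mathbf{c},\,m\mapsto\mathbf{1}^*,\,m'\mapsto\mathbf{c}^*$ and $\mathring{1}\mapsto\mathfrak{1},\,e\mapsto\mathfrak{e}$, sending the off-diagonal generators of the identity component to $v_\mathbf{1},v_\mathbf{c}$. The essential surjectivity and local fully-faithfulness of the underlying $2$-functor carry over verbatim from the proof of Proposition \ref{4dtoric}, since $\operatorname{2Rep}(D(B\bbZ_2)^\mathrm{sgn})$ has the same underlying objects and morphisms (the diagram \eqref{gammadiag} for $\Gamma$) as the untwisted case; what genuinely remains is to supply the tensorator natural equivalence and verify its coherence.

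The crux is that the one obstruction identified in Proposition \ref{4dtoric} — that $v_\mathbf{1}\otimes v_\mathbf{c}$ collapsed to $\mathfrak{1}$ because $\Delta_0=0$ in the untwisted double — disappears once $D(B\bbZ_2)$ is twisted: the computation \eqref{tensorv's} gives $v_\mathbf{1}\otimes v_\mathbf{c}\simeq\mathfrak{1}\oplus\mathfrak{e}$, matching the nontrivial off-diagonal fusion in both $\Sigma\operatorname{Vect}[\bbZ_2]$ and $\Sigma\mathrm{sVect}$. I would then match the object-level fusion case by case: for $\mathfrak{F}_m$ the bosonic rule \eqref{fus0}, $\mathbf{c}\otimes\mathbf{c}\simeq\mathbf{c}\oplus\mathbf{c}$, reproduces the toric-code Cheshire rule, while for $\mathfrak{F}_f$ the fermionic rule \eqref{fus1}, $\mathbf{c}\otimes\mathbf{c}\simeq\mathbf{1}$, reproduces the Clifford collapse of $\mathscr{S}$; the dual-sector identity \eqref{magcheshire} handles $m'\simeq m\otimes c$. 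The split between the two functors is then pinned down entirely by $\bar c$: through \eqref{ferm2cocy} and Remark \ref{gencatgrp}, the value $\bar c(y,y)=-1$ equips $\mathfrak{e}$ with self-braiding $-1$ in the Eckmann--Hilton loop category $\Omega\Gamma$ of $\operatorname{2Rep}_f$, matching $\mathrm{sVect}$, whereas $\bar c=0$ leaves this braiding trivial in $\operatorname{2Rep}_m$, matching $\operatorname{Vect}[\bbZ_2]$.

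Finally I would check the monoidal $2$-functor coherence. The associator $1$-morphism $a$ on the $\operatorname{2Rep}$ side is controlled by the Postnikov/Hochschild class $\tau$ (Section \ref{weak2bialgstructure}), so its image agrees with the associator class of $\mathscr{R},\mathscr{S}$, and the tensorator assembled from \eqref{tensorv's} must be verified to satisfy the pentagonator and unit axioms of a monoidal functor of $2$-categories. The main obstacle I anticipate is precisely this coherence: weaving the tensorator, its interchange with $a$, and the fermionic sign from $\bar c$ into the full system of monoidal-$2$-functor axioms, and confirming that the Koszul sign is placed consistently so that $\mathfrak{F}_f$ lands in $\mathscr{S}$ rather than in $\mathscr{R}$. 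The fusion-rule and object-level matchings are essentially bookkeeping once \eqref{fus0}, \eqref{fus1}, and \eqref{tensorv's} are in hand; the delicate part is the $2$-categorical coherence together with the correct threading of the fermionic twist.
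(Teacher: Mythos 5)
Your proposal follows essentially the same route as the paper's proof: reuse the 2-functor $\mathfrak{F}$ of Proposition \ref{4dtoric}, observe that its sole monoidal obstruction is cured by the twisted fusion rule \eqref{tensorv's}, and match \eqref{fus0}/\eqref{fus1} (together with \eqref{magcheshire} for the magnetic sector) against the Cheshire/Clifford rules of $\mathscr{R}$ and $\mathscr{S}$, extending the identity-component equivalence to the whole 2-category. The only remark worth making is that your Koszul-sign/self-braiding discussion belongs to the subsequent braided statement (Theorem \ref{fermbraid}); for this purely monoidal theorem the paper separates the two targets by object-level fusion alone, using $\Omega\Gamma'\simeq\Omega\Gamma\cong\operatorname{Rep}(\bbZ_2)$ and setting $\Omega\mathfrak{F}_f=\Omega\mathfrak{F}_b$, and it likewise leaves the full tensorator/pentagonator coherence check implicit.
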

\begin{proof}
Recall from proof of {\bf Proposition \ref{4dtoric}} that the obstruction from lifting the equivalence $\mathfrak{F}: \Sigma\operatorname{Vect}[\bbZ_2]\rightarrow \Gamma\subset\operatorname{2Rep}_\text{wk}(D(B\bbZ_2)^\text{trv})$ to a monoidal one is the component functor $\mathfrak{F}_{Ic}$ (or equivalently $\mathfrak{F}_{cI}$), where $I,c\in \Sigma\operatorname{Vect}[\bbZ_2]$ are respectively the tensor unit and the (electric) Cheshire string in $\mathscr{R}$. 

Let $\Gamma_m\subset\operatorname{2Rep}_\text{wk}(D^{\omega_b}(B\bbZ_2))$ denote the identity component. By adapting $\mathfrak{F}$ to the twisted case $\mathfrak{F}_m: \Sigma\operatorname{Vect}[\bbZ_2]\rightarrow \Gamma_m\subset \operatorname{2Rep}_\text{wk}(D^{\omega_b}(B\bbZ_2))$, we see that the fusion rule  \eqref{tensorv's} makes the component functors of $\mathfrak{F}_m$ monoidal,
\begin{eqnarray*}
    (\mathfrak{F}_m)_{Ic\to cI}(v_1v_2) &=& (\mathfrak{F}_m)_{Ic\to Ic}(\mathring{1}\oplus e) = \mathfrak{1}\oplus\mathfrak{e} \nonumber\\
    &\simeq& v_{\mathbf{1}} v_{\mathbf{c}} = (\mathfrak{F}_m)_{I\to c}(v_1)(\mathfrak{F}_m)_{c\to I}(v_2),
\end{eqnarray*}
and identically for $(\mathfrak{F}_m)_{cI\to Ic}(v_2v_1) \simeq (\mathfrak{F}_m)_{c\to I}(v_2)(\mathfrak{F}_m)_{I\to c}(v_1)$ (note the work \cite{Johnson_Freyd_2023} did not distinguish between $v_1,v_2$, so the fusion rule there is $v^2 \simeq \mathring{1} + e$). Therefore, $\mathfrak{F}_m: \Sigma\operatorname{Vect}[\bbZ_2]\rightarrow\Gamma_m$ is a monoidal equivalence.
% \begin{equation}
%      \Sigma\operatorname{Vect}[\bbZ_2] \xrightarrow{\mathfrak{F}_m|_\text{conn.}=\Sigma\Omega\mathfrak{F}}  \Sigma\Omega\Gamma\simeq \Gamma.\nonumber
% \end{equation}
% which 
Since $\mathfrak{F}_m$ and its component functors preserve all units, it extends to a monoidal equivalence $\mathfrak{F}_m: Z_1(\Sigma\operatorname{Vect}[\bbZ_2])\simeq \mathscr{R}\rightarrow \operatorname{2Rep}_\text{wk}(D^{\omega_b}(B\bbZ_2))$, as desired.

\medskip 

Now consider the fermionic case. We use the description of the braided fusion 2-category $\mathscr{S}$ describing the spin-$\bbZ_2$ gauge theory given in \cite{Johnson-Freyd:2020}. The 2-category $\mathscr{S}$ is very similar to $\mathscr{R}$: it has two identical components, with the endormophism category on the identity given by $\Omega\mathscr{S}= \text{sVect}\simeq\operatorname{Vect}[\bbZ_2]$. The fusion rules of the two components are once again graded by $\bbZ_2$. The caveat, however, is that each component are monoidally equivalent to $\Sigma\operatorname{sVect}$ instead.

In the identity component $\Sigma\operatorname{sVect}$, the Cheshire string $c\in \text{sVect}$ is the superalgebra $\text{Cl}(1)$, ie. the Clifford algebra with one odd generator. It satisfies the well-known fusion rule $c\otimes c\simeq 1$ in the ambient category $\text{sVect}$. The rest of the fusion rules are then determined by the $\bbZ_2$-grading,
\begin{equation}
    c^2  \simeq m'^2 \simeq 1,\qquad c\otimes m' \simeq m'\otimes c \simeq m,\qquad m\otimes c \simeq c\otimes m \simeq m' \nonumber
\end{equation}
Let $\Gamma_f$ denote the identity component of $\operatorname{2Rep}_\text{wk}(D^{\omega_f}(\bbZ_2))$. The 2-functor $\mathfrak{F}_f:\Sigma\text{sVect}\rightarrow \Gamma_f$, defined in the same way as in {\bf Proposition \ref{4dtoric}} and the above, the computations \eqref{fus1}, \eqref{fus1*} show that $\mathfrak{F}_f$ is monoidally essentially surjective. 

Consider $\Omega\Gamma_f = \operatorname{End}_{\operatorname{2Rep}_\text{wk}(D^{\omega_f}(\bbZ_2))}(\mathbf{1})$, whose unit is $\mathfrak{1}$. Though $\Gamma_f\not\simeq\Gamma_m$ as monoidal 2-categories, we do have $\text{sVect}\simeq \Omega\Gamma_f\simeq\Omega\Gamma_m\simeq\operatorname{Vect}[\bbZ_2]$ (only monoidally) \cite{Johnson-Freyd:2020}, hence $\mathfrak{F}_f$ is monoidally fully faithful by the same argument as the above for $\Gamma_m$. Therefore, $\mathfrak{F}_f$ extends to a monoidal equivalence $\mathfrak{F}_f:\mathscr{S}\xrightarrow{\sim}\operatorname{2Rep}_f(D(B\bbZ_2)^\text{sgn})$ as desired.
\end{proof}

% We shall prove in the following that $\operatorname{2Rep}_m(D(B\bbZ_2)^\text{sgn})$ is braided equivalent to the \textit{actual}   4d toric code.

\subsection{Proof of the main theorem}\label{proof}
Let us now look at the braiding data. We recall that the braiding in the 4d toric code $\mathscr{R} \simeq Z_1(\Sigma\operatorname{Vect}[\bbZ_2])$ is known \cite{KongTianZhou:2020} to be given by
\begin{equation}
    \beta_{X,Y}(X\otimes Y) = Y\otimes \text{sgn}_{|Y|} X,\qquad X,Y\in Z_1(\Sigma\operatorname{Vect}[\bbZ_2]) \label{magbraid}
\end{equation}
where $\text{sgn}$ is the sign representation and $|Y|\in\bbZ_2$ denotes the $\bbZ_2$-grading of the object $Y$; namely, given $Y=m,m'$ is magnetic, $\text{sgn}_{|Y|}$ acts non-trivially on the electric sector. This then gives rise, by naturality, to a non-trivial full mixed braiding  \cite{Johnson-Freyd:2020,Kong:2020wmn} 
$$\beta_{e,Y}\circ \beta_{Y,e} = -1\cdot\operatorname{id},\qquad Y=m,m^*$$ between the non-trivial 1-morphism $e\in\Omega\mathscr{R}\simeq \operatorname{Vect}[\bbZ_2]$ and the magnetic objects $m,m^*$, as required by remote detectability (see {\it Remark \ref{invisibletoric}}). 

The spin-$\bbZ_2$ gauge theory $\mathscr{S}\simeq Z_1(\Sigma\operatorname{sVect})$, on the other hand, has $\Omega\Sigma\operatorname{sVect}\simeq\operatorname{sVect}$, which has a non-trivial self-braiding $\beta_e = -1 \cdot\operatorname{id}$ for the odd object $e$ (this is what distinguishes $\operatorname{sVect}$ from $\operatorname{Vect}[\bbZ_2]$). Moreover, since the Cheshire strings $c,m'$ are now invertible, either of them be self-braided. Given that the mixed braiding maps behave the same way as in $\mathscr{R}$ (namely the only non-trivial mixed braiding maps are between $e$ and the magnetic sector, with non-trivial full-braiding), then it is one of the main results in \cite{Johnson-Freyd:2020} that \textit{only} the electric Cheshire $c$ carries a non-trivial self-braiding $\beta_c = e$ --- a non-trivial self-braiding in $m,m'$ would in fact trivialize the anomaly of $\mathscr{S}$.

\medskip

% Since now the Cheshire string $c$ is invertible in $\mathscr{S}$, there is a self-braiding morphism
% \begin{equation}
%     \beta_c: c\otimes c\rightarrow c\otimes c\nonumber
% \end{equation}
% in addition to the magnetic self-braiding $\beta_m:m\otimes m\rightarrow m\otimes m$. It was shown in \cite{Johnson-Freyd:2020} that $\beta_c\simeq e$ is in fact non-trivial. We can once again provide a direct proof of this. In fact, 

We are now in a position to prove the main theorem. 
\begin{theorem}\label{fermbraid}
The 2-functors $\mathfrak{F}_{m,f}$ in {\bf Theorem \ref{4dspin}} are $\mathrm{braided}$ equivalences.
\end{theorem}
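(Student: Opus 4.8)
The plan is to upgrade the monoidal equivalences $\mathfrak{F}_{m,f}$ of \textbf{Theorem \ref{4dspin}} to braided ones by checking compatibility with the braiding on a generating set. Since the target 2-categories $\mathscr{R}\simeq Z_1(\Sigma\operatorname{Vect}[\bbZ_2])$ and $\mathscr{S}\simeq Z_1(\Sigma\operatorname{sVect})$ are generated under fusion by the electric, magnetic and Cheshire objects, a braided structure on them is pinned down by the self-braidings $\beta_m,\beta_c$ of the generating lines together with the mixed braidings between the 1-morphisms $\mathfrak{1},\mathfrak{e}$ and the objects, all subject to hexagonator coherence. It therefore suffices to evaluate the corresponding braidings on $\operatorname{2Rep}^\tau(D(B\bbZ_2)^{\mathrm{sgn}})$ induced by the 2-$R$-matrix $(\cR,R)$ through \eqref{2repbraid1} and \eqref{2repbraid2}, and to match them with the Johnson-Freyd data.

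First I would reconstruct the 2-$R$-matrix explicitly in the twisted setting, reprising the braided-transposition argument of \textbf{Theorem \ref{braidtoric}} but now with the non-trivial sign action. The key difference from the invisible toric code is twofold: the degree-$0$ group $\bbZ_2$ acts by $\mathrm{sgn}$ on $\widehat{\bbZ_2}$, so the braided transpositions $\Psi_{-1}^{l,r}$ of \eqref{psimap} no longer reduce to plain Pontrjagyn duality and produce a non-trivial $\cR^\pm$ via \eqref{braid2rmatrix}; and the twists $\bar e,\bar c$ deform the products on $k\bbZ_2$ and $k\widehat{\bbZ_2}$, so that the reconstructed $R$ of \eqref{braidrmatrix} and the mixed $\cR^\pm$ carry the signs $\bar e(x,x)$ and $\bar c(y,y)=-1$ recorded in \eqref{ferm2cocy}.

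Next I would evaluate the generating braidings. For the magnetic self-braiding $\beta_m=b_{\mathbf{1}^*}$, the two sign factors contributed by $\rho_0=\mathrm{sgn}$ on the two graded components of $\mathbf{1}^*$ should cancel, yielding $\beta_m\cong\mathring{1}$ trivial, in agreement with \cite{Johnson-Freyd:2020}. For the Cheshire self-braiding in the fermionic model I would compute $\beta_c=b_{\mathbf{c}}$ and push it through the fusion $\mathbf{c}\otimes\mathbf{c}\simeq\mathbf{1}$ of \eqref{fus1}; here the sign $\bar c(y,y)=-1$ carried by $\rho_1$ survives and identifies $\beta_c$ with the non-trivial 1-morphism $\mathfrak{e}$ on $\mathbf{1}$, reproducing $\beta_c\simeq e$. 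Finally the mixed braidings $b_{iW},b_{Wi}$ of \eqref{2repbraid2}, evaluated on the linking of $\mathfrak{e}$ with the magnetic line, encode remote detectability; their non-triviality (absent in the $\mathrm{trv}$ model of \textbf{Theorem \ref{braidtoric}}) confirms that the electric charge is detected. Matching these against the known braidings of $\mathscr{R}$ and $\mathscr{S}$ then shows $\mathfrak{F}_{m,f}$ intertwine all generating braidings.

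The main obstacle will be the coherence rather than the values: one must check that $\mathfrak{F}_{m,f}$ respect not merely the braiding 1-morphisms but the full braided-monoidal structure, in particular the hexagonator 2-morphisms $\Omega_{V|WU}$ and their adjoints supplied by the weak component $\varrho$ of the 2-representation. Concretely, the delicate step is to verify that the non-trivial $\beta_c\simeq e$ is compatible with the hexagonator and with the associator $a$, so that the $\bar c$-twisted data reproduce \emph{precisely} the braided structure of $\Sigma\operatorname{sVect}$ and not some other extension. A secondary but error-prone point is the careful tracking of the flip and of the grading signs of \eqref{grtensor} when the mixed braidings act on the degree-shifting 1-morphisms $i[01],i[10]$.
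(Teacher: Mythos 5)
Your overall strategy is the same as the paper's: reconstruct the twisted 2-$R$-matrix from the braided transpositions \eqref{psimap}--\eqref{braid2rmatrix}, evaluate the generating braidings, and match them against the data of $\mathscr{R}$ and $\mathscr{S}$ from \cite{Johnson-Freyd:2020}. However, there are two concrete gaps in how you evaluate those braidings.

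First, the magnetic self-braiding. You claim that for $b_{\mathbf{1}^*}$ ``the two sign factors contributed by $\rho_0=\mathrm{sgn}$ on the two graded components should cancel,'' giving triviality directly. This step would fail. In the paper's analysis, what allows a self-braiding to carry the non-trivial 1-morphism $\mathfrak{e}$ is exactly that the object carries the sign representation in degree $0$; but $\mathbf{1}^*=(\mathrm{sgn}\oplus\mathrm{sgn},\partial=1,\rho_1=0)$ \emph{does} carry $\mathrm{sgn}$ in degree $0$, just like $\mathbf{c}$ --- whose self-braiding is precisely the non-trivial one, $b_{\mathbf{c}}\simeq\mathfrak{e}$. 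The paper states explicitly that the direct evaluation ``does not force $b_{\mathbf{1}^*}$ to be trivial''; triviality is instead obtained \emph{indirectly}, from the fusion rule $\mathbf{c}\otimes\mathbf{c}^*\simeq\mathbf{1}^*$ of \eqref{magcheshire} combined with the ribbon identity \eqref{ribboneq}:
\begin{equation}
    b_{\mathbf{1}^*} = b_{\mathbf{c}^*\otimes\mathbf{c}} \cong b_{\mathbf{c}^*}\otimes b_{\mathbf{c}}\otimes B_{\mathbf{c}^*\mathbf{c}} \simeq \mathfrak{1}\otimes\mathfrak{e}\otimes\mathfrak{e}\simeq\mathfrak{1}.\nonumber
\end{equation}
Without this argument your identification with $\beta_m\cong\mathring{1}$ in \cite{Johnson-Freyd:2020} is unsubstantiated, and if in fact $b_{\mathbf{1}^*}\simeq\mathfrak{e}$ the functors $\mathfrak{F}_{m,f}$ would not be braided.

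Second, you never compute the self-statistics of the electric particle $\mathfrak{e}$ itself, which is the datum that actually distinguishes $\mathscr{S}$ (where the point particle is a fermion) from $\mathscr{R}$ (where it is a boson); the Cheshire self-braiding $\beta_c\simeq e$ alone does not determine it. The paper gets this by decomposing the braiding of 1-morphisms into mixed braidings, $b_{\mathfrak{e}} = B_{\mathfrak{e}W}B_{W\mathfrak{e}} = \left(\rho_1(\bar e(x,x))\right)^2 = \bar c(y,y)\cdot\operatorname{id} = -1\cdot\operatorname{id}$, using \eqref{ferm2cocy}; in the bosonic theory ($\bar c=0$) the same computation gives $+1$. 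This is an essential input to the braided matching, and it ties the mixed-braiding lemma (remote detectability) to the statistics. By contrast, your worry about verifying hexagonator coherence, while legitimate, is not something the paper resolves either: having matched the generating braiding data, the paper concludes by invoking the recognition results of \cite{Johnson-Freyd:2020,Johnson_Freyd_2023}, so that point is not a gap relative to the paper's own standard of proof.
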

\begin{proof}
The strategy is to simply compute all of the braiding structures in $\operatorname{2Rep}_\text{wk}(D(B\bbZ_2)^\text{sgn})$, and match them to the topological orders $\mathscr{R},\mathscr{S}$. To do this, we lift the 2-functors $\mathfrak{F}_{m,f}$ of {\bf Theorem \ref{4dspin}} to braided ones. This requires:
\begin{enumerate}
    \item For each pair of simple objects $X,Y\in \mathscr{R}$, say, the 1-morphisms $\mathfrak{F}_m(\beta_{X,Y})$ and $b_{\mathfrak{F}_m(X),\mathfrak{F}_m(Y)}$ are 2-isomorphic in $\operatorname{2Rep}_\text{wk}(D^{\omega_b}(B\bbZ_2))$, and
    \item For each object $X$ and morphism $f:Y\rightarrow Y$ in $\mathscr{R}$, the component functors $(\mathfrak{F}_m)_{XY\to Y'X}$ and $(\mathfrak{F}_m)_{YX\to XY'}$ satisfy
    \begin{equation*}
        (\mathfrak{F}_m)_{XY\to Y'X}(\beta_{X,f}) = b_{\mathfrak{F}_m(X),(\mathfrak{F}_m)_{Y\to Y'}(f)},\qquad (\mathfrak{F}_m)_{YX\to XY'}(\beta_{f,X}) = b_{(\mathfrak{F}_m)_{Y\to Y'}(f),\mathfrak{F}_m(X)}.
    \end{equation*}
\end{enumerate}
Of course, the same conditions must be met for $\mathfrak{F}_f:\mathscr{S}\rightarrow \operatorname{2Rep}_\text{wk}(D^{\omega_f}(B\bbZ_2))$. 

We shall follow the proof of {\bf Theorem \ref{braidtoric}} in order to construct the 2-$R$-matrix on $D^\omega(B\bbZ_2)$, which leads to the braiding in $\operatorname{2Rep}_\text{wk}(D^\omega(B\bbZ_2))$ through  \eqref{2repbraid1}, \eqref{2repbraid2}. We will see how each of the non-trivial 2-cocycle twists $\bar e\in H^2(\bbZ_2,\bbZ_2)$ and $\bar c\in H^2(\widehat{\bbZ_2},k^\times)$ manifest in the braiding data.

Recall the 2-$R$-matrix $(\cR,R)$ is determined by the braided transposition $\Psi$ by  \eqref{braid2rmatrix}, \eqref{braidrmatrix}. Due to the "semidirect product" structure $\widehat{\bbZ_2}\rtimes\bbZ_2$ induced by the 2-cocycle $\bar e$, the degree-0 $\bbZ_2$ acts non-trivially on the degree-(-1) $\widehat{\bbZ_2}$ by a sign $-1$. The defining relations  \eqref{psimap} then implies that 2-R-matrix $\cR$ is non-trivial:
\begin{equation}
    \cR =  (-1)^{x}\cdot y\otimes x+ x\otimes (-1)^{x}y,\qquad R = (-1)^xx\otimes x.\nonumber
\end{equation}
By \eqref{2repbraid1}, \eqref{2repbraid2}, the off-diagonal nature of these $R$-matrices witness non-trivial braiding between the electric and magnetic sectors. Indeed, $R$ acts non-trivially on 2-representations $V,W\in \operatorname{2Rep}_\text{wk}(D^\omega(B\bbZ_2))$ that differ in \textit{both} of their graded $\bbZ_2$-representations, which is only possible if $V,W$ lie in distinct sectors by \eqref{2repdd}. The sign then indicates that this braiding is non-trivial, consistent with \eqref{magbraid}.

% We  will allow us to develop non-trivial braiding maps on $D(B\bbZ_2)^\text{sgn}$. 
\begin{lemma}\label{mixeddetectability}
The 2-cocycle $\bar e\in H^2(\bbZ_2,\bbZ_2)$ leads to non-trivial \textrm{full} braiding maps between $\mathfrak{e}$ and objects $W$ in the magnetic sector. 
\end{lemma}
\begin{proof}
Recall $\bar e\in H^2(\bbZ_2,\bbZ_2)$ determines the non-trivial central extension $\bbZ_4$ of $\bbZ_2$ by itself. Provided that the component $\rho_0^0$ is non-trivial, then $\rho_0 = (\bar e\cdot \rho_0^1,\rho_0)$ furnishes a $k\bbZ_4$-representation. 

In addition, this 2-cocycle also dualizes to $\bar e\in H^2(\bbZ_2,\widehat{\bbZ_2})$, which "twists" the algebra structure in $D(B\bbZ_2)^\text{sgn}$ in the sense that 
\begin{equation}
    x\cdot (x\cdot y) = \bar e(x,x)y \neq x^2\cdot y =y, \nonumber
\end{equation}
where $x\in \bbZ_2$ and $y\in k\widehat{\bbZ_2}$. In the 2-representation 2-category $\operatorname{2Rep}_\text{wk}(D(B\bbZ_2))$, this manifests as the presence of the 2-morphism 
\begin{equation}
    \varrho(x_1,x_2) = \rho_1(\bar e(x_1,x_2)): \rho_0(x_1)\circ \rho_0(x_2) \Rightarrow \rho_0(x_1x_2),\qquad x_1,x_2\in k\bbZ_2\label{moduleassoc}
\end{equation} 
mentioned in {\bf Definition \ref{2repcat}}. This demonstrates why we {must} use the weak 2-representation theory based on $\operatorname{2Vect}^{BC}_\text{wk}$, as the strict version does contain the component $\varrho$, and hence cannot detect any twists in the 2-bialgebra $D(B\bbZ_2)$.

% Now on the other hand, the universal 2-$R$-matrix $\cR$ is reconstructed from the braided transposition $\Psi$ via  \eqref{braid2rmatrix}. We see that the defining relation  \eqref{psimap} is modified in the presence of $\bar e$,
% \begin{equation}
%     \bar e(x,x) = x\cdot (x\cdot 1) = \cdot\circ \Psi_{-1}^r(x^2\otimes 1) = \cdot\circ \Psi_{-1}^r(1\otimes 1),\nonumber
% \end{equation}
% and similarly for $\Psi_{-1}^l$. As such, $\Psi_{-1}^{l,r}$ are equipped with a factor of $\bar e$ in addition to just the Pontrjagyn duality. This factor $\bar e$ carries over to the 2-$R$-matrix $\cR^\pm$ via  \eqref{braid2rmatrix}. 

Recall  \eqref{swaptwiners} that $\mathfrak{e}$ swaps the grading of the 2-representation spaces, and hence $\bar e$ will occur only in the full mixed braiding $B_{W,\mathfrak{e}}=B_{\mathfrak{e},W}=b_{W,\mathfrak{e}}\cdot b_{\mathfrak{e},W}$ between $\mathfrak{e}$ and those 2-representations $W$ that carry a non-trivial sign representation \textit{in degree-(-1)} --- namely the magnetic sector in  \eqref{2repdd}. The other full mixed braiding maps being trivial. A simple computation then gives
\begin{equation}
    B_{W,\mathfrak{e}}: \rho_0^0(\cR^+_{(2)}) \rho_0^0(\cR^-_{(1)}) \Rightarrow \rho_0^0(\cR^+_{(2)}\cR^-_{(1)}) = 1,
\end{equation}
which is precisely the map $\varrho_1(x,x) = \rho_1(\bar e(x,x))\simeq -1$ from  \eqref{moduleassoc}. In other words, the $\bbZ_2$-particle $\mathfrak{e}$ braids non-trivially with the magnetic sector $\mathbf{1}^*,\mathbf{c}^*$, as required.

\end{proof}

% Now what of the braiding maps on the objects  \eqref{2repdd}? 

\begin{lemma}\label{fermiparticle}
The 2-cocycle $\bar c\in H^2(\widehat{\bbZ_2},k^\times)$ gives the non-trivial self-braiding $b_\mathfrak{e}=-1$. Moreover, the self-braiding $b_\mathbf{c}$ is non-trivial in $\operatorname{2Rep}_\text{wk}(D^{\omega_f}(B\bbZ_2))$, but $b_{\mathbf{c}^*},b_{\mathbf{1}^*}$ are trivial.
\end{lemma}
\begin{proof}
Consider the first statement. By naturality, the braiding maps $b_{i,j}$ on 1-morphisms $i,j$ can be decomposed into mixed braiding maps,
\begin{equation}
    b_{i,j} = b_{i,W}b_{V,j},\qquad \begin{cases} i: V\rightarrow U \\ j: W\rightarrow T\end{cases}.\nonumber
\end{equation}
Taking $i=j=\mathfrak{e}$ and the identity endomorphism $\mathfrak{1}^*:W\rightarrow W$ on a magnetic line, we see that
\begin{eqnarray}
    b_\mathfrak{e} &=& b_{\mathfrak{e},\mathfrak{1}^*}b_{\mathfrak{1}^*,\mathfrak{e}} = (b_{\mathfrak{e},W}b_{W,\mathfrak{e}})(b_{W,\mathfrak{e}}b_{\mathfrak{e},W})\nonumber\\
    &=& B_{\mathfrak{e},W}B_{W,\mathfrak{e}} = (\rho_1(\bar e(x,x)))^2 = \bar c(y,y) \cdot \operatorname{id}= -1\cdot\operatorname{id}\nonumber
\end{eqnarray}
from the definition of $\bar c$ in \eqref{ferm2cocy} and the fact that $B_{\mathfrak{e},W} = \bar e$ from the above lemma. Here, note the extension cocycle $\bar e$ satisfies $\bar e(x,x)=y$ for the non-trivial generators $x\in \bbZ_2,y\in \widehat{\bbZ_2}$. This is consistent with the observation that $\bar c$ implements the fermionic statistics of the $\bbZ_2$-charged particle in \cite{Wen:2019,Zhu:2019,Johnson-Freyd:2020}.

\smallskip
% This would have had trivial consequence if it were not for the fact that $\bar e$ also dualizes to a class $\bar e\in H^2(\bbZ_2,\widehat{\bbZ_2})$. From the definition  \eqref{braid} of the component $\Psi_{-1}^l$ of the braiding map $\Psi$, we have
% \begin{equation}
%     \bar e(x,x)y = (x^2)\rhd y = (\cdot \rhd\cdot)\Psi_{-1}^l(x^2\otimes y) = (\cdot \rhd \cdot)\Psi_{-1}^l(1\otimes y).\nonumber
% \end{equation}
% If $\Psi_{-1}^l$ does not carry the dual $\hat{\bar e} = p\circ \bar e\circ p\in H^2(\widehat{\bbZ_2},\bbZ_2)$ of $\bar e$ (namely if $\Psi_{-1}^l$ is just given by the self-duality map $\sigma$), then the right-hand side would have been trivial $1\rhd p(y)= 1\rhd \hat x=1$, which is in conflict with the left-hand side. This is amended by taking $\Psi_{-1}^l = \hat{\bar e}\cdot \sigma$, such that
% \begin{equation}
%     (\cdot \rhd\cdot)\Psi_{-1}^l(x^2\otimes y) = \hat{\bar e}(\hat y,\hat y)\hat x.\nonumber
% \end{equation}
% Therefore, a non-trivial mixed braiding map $b_{\mathfrak{e}W}$ is induced which implements the relation
% \begin{equation}
%     (\hat{\bar e}\cdot \sigma)\otimes \bar e\cdot\sigma=\Psi_{-1}^l \otimes \bar e\cdot \Psi_{-1}^r \Rightarrow \Psi_{-1}^l\cdot \bar e\otimes\Psi_{-1}^r=(\hat{\bar e}\cdot \sigma)\cdot \bar e \otimes \sigma;\nonumber
% \end{equation}
% note we have neglected the inconsequential prefator of $\rho_V\otimes\rho_W$.

Consider the second statement. Since $\bar e$ also determines a central extension of $D(B\bbZ_2)_0 =\bbZ_2$ by itself, an analogous argument as the previous lemma shows that, provided the 2-representation $\rho_0$ has the non-trivial sign representation at degree-0 (ie. the Cheshire string $\mathbf{c}$ or the magnetic vacuum line $\mathbf{1}^*$), then the self-braiding 
\begin{equation}
    b_{V}: \rho_0^0(R_{(1)})\rho_0^0(R_{(2)}) \Rightarrow \rho_0^0(R_{(1)}R_{(2)})=1 \nonumber
\end{equation}
can carry the non-trivial 1-morphism $\rho_0(\bar e(x,x))\simeq\mathfrak{e}$. In particular, this establishes that $b_{\mathbf{c}^*}\simeq \mathfrak{1}$ is trivial while $b_\mathbf{c}\simeq\mathfrak{e}$ is not.

% The only However, if $\bar c=0$ then  \eqref{fus0} states that $\mathbf{c},\mathbf{c}^*$ are non-invertible, and hence cannot be self-braided. It is therefore only in the fermionic case $\operatorname{2Rep}_f^\tau(D(B\bbZ)^\text{sgn})$ where the electric Cheshire string $\mathbf{c}$, say, has a non-trivial self-braiding $b_{\mathbf{c}}\simeq \mathfrak{e}$.

But what about the magnetic vacuum $\mathbf{1}^*$? The above argument does not force $b_{\mathbf{1}^*}$ to be trivial, but the fusion rule  \eqref{magcheshire} (in the form $\mathbf{c}\otimes\mathbf{c}^*\simeq \mathbf{1}^*$) and \eqref{ribboneq} do. Since the magnetic Cheshire $\mathbf{c}^*$ is bosonic, the \textit{full} braiding $B_{\mathbf{c}^*,\mathbf{c}}\simeq b_{\mathbf{c}} \simeq \mathfrak{e}$ must be non-trivial. Using this along with  \eqref{magcheshire} and the previous result then gives
\begin{eqnarray}
    b_{\mathbf{1}^*} &=& b_{\mathbf{c}^*\otimes\mathbf{c}} \nonumber\\
    &\cong& b_{\mathbf{c}^*,\mathbf{c}} \circ (b_{\mathbf{c}^*}\otimes b_{\mathbf{c}}) \circ b_{\mathbf{c},\mathbf{c}}\nonumber\\
    &\simeq& (b_{\mathbf{c}^*}\otimes b_{\mathbf{c}}) \circ B_{\mathbf{c}^*,\mathbf{c}}\nonumber\\
    &\simeq& \mathfrak{1}\otimes\mathfrak{e}\otimes\mathfrak{e}\simeq \mathfrak{1},\nonumber
\end{eqnarray}
hence the magnetic vacuum $\mathbf{1}^*$ must have trivial self-braiding $b_{\mathbf{1}^*}=\mathfrak{1}$.

Of course, in the absence of $\bar c$, the braiding maps considered above are all trivial. 
\end{proof}

These lemmas demonstrate that the non-trivial braiding data in $\mathscr{R}$ (resp. $\mathscr{S}$) also appear in $\operatorname{2Rep}_\text{wk}(D^{\omega_b}(B\bbZ_2))$ (resp. $\operatorname{2Rep}_\text{wk}(D^{\omega_f}(B\bbZ_2))$), and identifies them from the 2-cocycle twists $\bar e,\bar c$ present in $D^\omega(B\bbZ_2)$. Moreover, there are no more non-trivial braiding maps in the latter than in the former: for instance, we have noted in {\bf Lemma \ref{mixeddetectability}} how the full mixed braiding $B_{\mathfrak{e},V}$ with {\it electric} lines $V$ is trivial in $\operatorname{2Rep}_\text{wk}(D^\omega(B\bbZ_2))$. 

In other words, given the presence of the appropriate 2-cocycles $\bar e,\bar c$, the 2-functors $\mathfrak{F}_m,\mathfrak{F}_f$ respects the braiding data in the sense that we have described in the beginning of the proof. This establishes the desired braided equivalences.

% self-braiding data $b_{\mathbf{c}}$

% It now suffices to check that it also preserves the trivial ones. To do this, we must ensure that no other braiding maps --- aside from the above ones --- occur non-trivially in $\operatorname{2Rep}_\text{wk}(D^\omega(B\bbZ_2))$. We will without loss of generality focus on braiding data that does not involve the tensor units, such as $\mathbf{1}\in\Gamma$ or $\mathfrak{1}\in\Omega\Gamma$.

% since the $\bbZ_2$-representation in degree-0 of the tensor product $\mathfrak{e}\otimes V$ is trivial due to a grading swap in $\mathfrak{e}$, which leads to the absence of the 2-cocycle $\bar e$ in $B_{\mathfrak{e}V}$. Now 

\end{proof}

To further drive home the point of the main result {\bf Theorem \ref{fermbraid}}, we shall recover the 5-dimensional cobordism invariant associated to the spin $\bbZ_2$-gauge theory $\mathscr{S}$ from the spin-Kitaev model. Recall the expressions of $\bar e(A) = \frac{1}{2}\operatorname{Sq}^1A$ and $\bar c(B,B)=\frac{1}{2}\operatorname{Sq}^2B$ in terms of the Steenrod square. Starting from the partition function  \eqref{fermkit}, 
\begin{equation}
    Z_{\text{Kit}}^s(X)\sim \sum_{\substack{dA=0 \\ dB=\tau}} e^{i2\pi\int_X  B\cup \frac{1}{2}\operatorname{Sq}^1A + \frac{1}{2}\operatorname{Sq}^2B},\nonumber
\end{equation}
we deduce that, given $W$ is a 5-dimensional manifold with boundary $X=\partial W$, the bulk partition function takes the form \cite{Zhu:2019}
\begin{equation}
    Z_{\text{Kit}}^s(X)\sim \exp\left[i\pi \int_W  \tau(A)\cup \operatorname{Sq}^1A + \operatorname{Sq}^2\tau(A)\right]\nonumber
\end{equation}
on-shell of the EOM $dA=0,dB=\tau(A)$.

By interpreting the on-shell gauge fields $(A,B)$ (ie. satisfying $dA=0,dB=\tau(A)$) as a classifying map $f = (A,B):W\rightarrow BD(B\bbZ_2)$ \cite{Zhu:2019,Kapustin2017}, we can introduce group cohomology classes 
\begin{equation}
    E\in H^3(\bbZ_2,\widehat{\bbZ_2}),\qquad M\in H^2(\bbZ_2,\bbZ_2)\nonumber
\end{equation}
such that $f^*E=\tau(A)$ and $f^*M = \frac{1}{2}\operatorname{Sq}^1A=\bar e(A)$. Then, the spin-Kitaev partition function can be written as
\begin{equation}
     Z_{\text{Kit}}^s(X) \sim \sum_{f\in [W,B\cK]}([W],f^*\alpha),\nonumber
\end{equation}
where $[W]\in H_5(W,\bbC^\times)$ is the fundamental homology class and $\alpha$ is a degree-5 group cohomology class given by
\begin{eqnarray}
    \alpha = (-1)^{\operatorname{Sq}^2E+ E\cup M}\in H^5(\bbZ_2[3]\times\bbZ_2[2],\bbC^\times).\label{fermclass}
\end{eqnarray}
This is precisely the anomaly of the fermionic phase $\mathscr{S}$ \cite{Johnson-Freyd:2020}.

\paragraph{The $w_2w_3$ gravitational anomaly.} The reader may notice that we have conveniently left out the study of the anomalous version $\mathscr{T}$ of the fermionic order $\mathscr{S}$. This is because $\mathscr{T}$ is does not admit a description in terms of a Drinfel'd centre \cite{Johnson_Freyd_2023,decoppet2023drinfeld}, hence it may not be straightforward to construct a corresponding 2-bialgebra description. We will leave this for a future work, and say more about this in the conclusion.

\section{Conclusion}
Following the construction of Drinfel'd double 2-bialgebras, we have applied the structural results proven in \cite{Chen:2023tjf} to the case of the   4d Kitaev model based on the 2-group associated to $\bbZ_2$. We explicitly computed the associated 2-representation 2-category $\operatorname{2Rep}_\text{wk}(D(B\bbZ_2))$ and shown that it satisfies formally the formula
\begin{equation}
    Z_1(\Sigma\operatorname{Vect}[\bbZ_2]) \simeq \operatorname{2Rep}_\text{wk}(D(B\bbZ_2)),\label{2dd2cat}
\end{equation}
where $Z_1$ is the Drinfel'd centre; cf. {\bf Theorem \ref{4dspin}, \ref{fermbraid}}. This directly categorifies the characteristic relation 
\begin{equation}
    Z_1(\operatorname{Rep}(G)) \simeq \operatorname{Rep}(D(G)) \nonumber
\end{equation}
of the quantum double of Drinfel'd \cite{Drinfeld:1986in} (and, more generally, of Majid \cite{Majid:1996kd}), at least in the case $G=\bbZ_2$. Our results can be concisely summarized in the following table.

\begin{table}[h]
    \centering
    \begin{tabular}{|c|c|c|c|}
    \hline
    Gapped phase & N/A &   4d toric code & spin-$\bbZ_2$ gauge theory \\
    \hline 
    2-representations & $\operatorname{2Rep}_\text{wk}(D(B\bbZ)^\text{trv})$ & $\operatorname{2Rep}_\text{wk}(D^{\omega_b}(B\bbZ_2))$ & $\operatorname{2Rep}_\text{wk}(D^{\omega_f}(B\bbZ_2))$ \\
    2-category in \cite{Johnson-Freyd:2020} & N/A & $\mathscr{R}$ & $\mathscr{S}$ \\
    DW cocycle in \cite{Zhu:2019} & $\omega(A,B)=0$ & $\omega(A,B) = \frac{1}{2}BA^2$ & $\omega(A,B) = \frac{1}{2}BA^2 + \frac{1}{2}\operatorname{Sq}^2B$ \\
    \hline
    \end{tabular}
\end{table}
\noindent We have also been able to concretely identify the 4d 2-group Dijkgraaf-Witten NLSMs constructed in \cite{Zhu:2019} that host the 2-category $\operatorname{2Rep}_\text{wk}(D(B\bbZ_2))$ as charges. These explicit equivalences provide an explicit and rigorous identification between the 2-categorical and field theoretic descriptions of the associated gapped 4d topological phases \cite{Wen:2019}.

This hints towards a categorified notion of "Tannaka-Krein reconstruction", in which certain braided fusion 2-categories $\mathcal{C}$ can be equivalently described 
\begin{equation}
    \cC\simeq \operatorname{2Rep}_\text{wk}(\cA^\omega), \nonumber
\end{equation}
as the 2-representation 2-category of a (possibly twisted) quasitriangular 2-Hopf algebra/2-bialgebra $\cH$. Such a Tannakian duality is worthwhile to have in this context, as it (i) condenses the data of a (braided) monoidal 2-category $\cC$ to those of the underlying 2-bialgebra $\cA$, and (ii) it allows one to directly construct the action of the underlying TFT given the 2-categorical data $\cC$ of a gapped 4d topological phase.

Aside from applications to condensed matter theory, we expect that these higher bialgebraic structures would see fruitful applications in high-energy physics, higher-dimensional conformal field theory and integrable systems. 

\begin{remark}
    In fact, we can make an even bolder and refined conjecture. Suppose $\cC = Z_1(\Sigma\cB)$ is the Drinfel'd centre of $\Sigma\cB$, where $\Sigma$ is the condensation functor \cite{Gaiotto:2019xmp,Johnson_Freyd_2023} and $\cB$ is a braided fusion category satisfying the Tannakian reconstruction formula
    \begin{equation}
        \cB\simeq \operatorname{Rep}(H^\tau).\nonumber
    \end{equation}
    Here, $\tau$ is a algebraic 3-cocycle on a certain (not necessarily grouplike) quasitriangular semisimple (1-)Hopf algebra $H$, and $H^\tau$ is the corresponding twisted version \cite{Majid:1996kd} (ie. a braided quasi-bialgebra of Drinfel'd \cite{book-quasihopf}). The conjecture is then that $\cC = Z_1(\Sigma\cB)$ is braided equivalent to $\operatorname{2Rep}_\text{wk}(D^\omega(BH))$ where $\tau$ now becomes the Postnikov datum of the 2-bialgebra $BH$, which is constructed in a manner similar to Section \ref{2grpbialg}, and $\omega$ is a 4-cocycle twist of the skeletal Drinfel'd double $D(BH)$. Such a correspondence would provide a purely algebraic interpretation of the condensation completion functor $\Sigma$.
\end{remark}

% It is still not easy to find these data, however, even in the grouplike case $\cA = kG$, since one must employ spectral sequences methods to compute $H^4(G,U(1))$ \cite{Kapustin2017,Zhu:2019}.

It is interesting to note that both of the graded components of $D(B\bbZ_2)$ contribute to determine,  at each level, the structures of the 2-representation 2-category --- the objects (point excitations) are not solely determined by the degree-0 piece $D(G)_0$ of the Drinfel'd double 2-bialgebra, for instance. The underlying graded duality structure of the Drinfel'd double 2-bialgebra play a significant role in the categorified Tannakian duality. Indeed, the self-duality of the Drinfel'd double seem to be very closely related to the Morita self-duality of the Drinfel'd centre \cite{decoppet2023drinfeld} in general, hence it may be possible to study Morita theory for 2-categories \cite{decoppet2022morita} by studying (higher-)module theory of 2-bialgebras.

Aside from purely categorical endeavours, our results also pave the way towards the construction of a more complete and refined 4-dimensional topological invariant \cite{Mackaay:ek,Mackaay:hc,Douglas:2018}. It may guide us in the exploration of more intricate higher structures in tensor 2-categories, such as pivotality \cite{Douglas:2018}, ribbon or even modularity. 

\medskip

We discuss some open questions which we find very interesting and important to tackle in the future.

\paragraph{Higher-ribbon structures and modular tensor 2-categories.} As mentioned at the end of Section \ref{proof}, the fermionic order $\mathscr{T}$ still currently eludes us in terms of the above 2-Hopf algebraic treatment. This order is very closely related to the $w_2w_3$ gravitational anomaly \cite{Johnson-Freyd:2020,JuvenWang}; this gravitational anomaly has a proposed topological field theory description by the so-called "fermionic quasistrings order" \cite{Thorngren2015,Chen:2022hct}.

In any case, the order $\mathscr{T}$ is known to be distinct from $\mathscr{S}$ as fusion 2-supercategories \cite{Johnson-Freyd:2020,Johnson_Freyd_2023}. As mentioned in {\it Remark 3.4} of \cite{Johnson_Freyd_2023}, this can be understood as the difference between the self-duality datum they host for the magnetic line $m$, which concerns their spherical structures. Since the spherical data for braided fusion structures are described equivalently by ribbon twists via the Drinfle'd isomorphism (see eg. \cite{Delaney2020BraidedZA} for 1-categories), it would be interesting to understand the distinction between $\mathscr{S},\mathscr{T}$ through {\it ribbon} 2-Hopf algebras and their 2-representations.

More precisely, given a (possibly infinite-dimensional) quasitriangular 2-Hopf algebra $(\cA,\Delta,S,\cT,\cR)$, one may seek to equip it with a central {\it ribbon element} $\nu = (\nu_{-1},\nu_0) \in \cA$ in analogy with the 1-Hopf algebra case \cite{Majid:1996kd,etingof2016tensor}. Its (possibly non-finite non-semsimple) braided tensor 2-category $\operatorname{2Rep}_\text{wk}(\cA)$ of 2-representation would then be equipped with a ribbon structure, given by the twist 1- and 2-morphisms $$\theta_V: V\mapsto \nu_0\cdot V,\qquad \theta_i: i\mapsto \nu_{-1}\cdot i,$$ satisfying some appropriate higher-analogue of the ribbon equations, as well as additional coherences. Moreover, the modular data of (possibly {\it non} finite semisimple) ribbon tensor 2-categories --- such as $\operatorname{2Rep}_\text{wk}(\cA)$ for an infinite-dimensional ribbon 2-Hopf algebra $\cA$ --- could be used to construct a 4d version of the Reshetikhin-Turaev TQFT.

% It is known that if $(H,\Delta,S,R,\nu)$ is a {\it ribbon Hopf algebra}, in the sense that the central element $\mathfrak{v}=\mathfrak{u}S(\mathfrak{u})\in H$ given by $\mathfrak{u} = \cdot(S\otimes1)(R^T)\in H$ admits a square-root $\nu\in H$, then its representation category $\operatorname{Rep}(H)$ is a ribbon category. The ribbon twist $\theta_V(V) = \nu\cdot V$ on $V\in\operatorname{Rep}(H)$ is given by the action of $\nu$, and satisfies the following ribbon equation
% \begin{equation*}
%     \theta_{VW} = b_{V,W}b_{W,V}(\theta_V\otimes \theta_W),\qquad \theta_{\bar V} = \bar{\theta}_V
% \end{equation*}
% for all $V,W\in\operatorname{Rep}(H)$. Now 

Regardless if such 4d TQFTs can produce novel invariants of 4-manifolds (see eg. \cite{Reutter:2020bav} for an answer in the negative for semsimple TQFTs), however, such higher ribbon algebras are nevertheless interesting to investigate due to the higher modular data they host. In analogy with the 1-category case, it would be interesting to see if such modular tensor 2-categories model, in an appropriate sense, representations of 3-dimensional vertex operator algebras (VOAs) such as the "Raviolo VOA" defined in \cite{Garner:2023zqn} based on topological-holomorphic field theories \cite{Gwilliam:2021zkv}.

% which satisfy certain coherence conditions against the braiding and mixed-braiding maps $b$ on $\operatorname{2Rep}(\cA)$ endowed by the 2-$R$-matrix. Given that $\operatorname{2Rep}(\cA)$ forms a braided monoidal 2-category, we may take the 2-representation 2-category $\operatorname{2Rep}(\cA)$ of a ribbon 2-Hopf algebra $\cA$ to model and define in general a {\it ribbon 2-category}. Such structures would then be proposed to describe the vertex operator algebra of {\it 3-dimensional} conformal field theories.

\paragraph{Lattice realizations; 2-groupoid algebras and the 4d tube algebra.} The main result of this paper properly pins down the continuum field theory description of the given   4d gapped topological phase. We have said nothing, however, about how one may UV complete and construct a lattice realization (or a class thereof). To do so, one must construct the lattice Hamiltonian and cast the excitations encoded in the field theory as (extended) local symmetric operators on the lattice. 

We know how this is done in 3-dimensions. The 3d lattice theory is given by {\it string-net condensation} \cite{Wen:2010gda}, labeled by an underlying structure group $G_0$. This Hilbert space is equipped with a "gluing operation", which combines two local string-nets along the lattice edges. This forms the {\bf Ocneanu tube algebra}, which can be modeled by representations of the groupoid algebra $\bbC[\Lambda G_0]$ of the {\it inertia groupoid} $\Lambda G_0$ of $G_0$. However, it is known also that for finite groups $G_0$, the (twisted) Drinfel'd double $D(G_0)$ is isomorphic\footnote{For the twists, the group 3-cocycle on $D(G_0)$ is sent to a groupoid 2-cocycle on $\Lambda G_0$ via transgression.} to the (twisted) groupoid algebra $\bbC[\Lambda G_0]$ \cite{Willerton:2008gyk}, and as such the  3d tube algebra is given precisely by the representation theory of $D(G_0)$ \cite{Delcamp:2016yix}.

As always, we wish to categorify the above results. The construction of a   4d analogue of the tube algebra has in fact already appeared in \cite{Bullivant:2021,Bartsch:2023wvv}, where \textit{2-groupoid algebras} and the representation theory thereof was studied. This allows one to construct the lattice Hamiltonian, as well as the local symmetric operators, of a given   4d topological phase; see also \cite{Delcamp:2023kew} from the categorical perspective. The missing link here is of course the correspondence between the Drinfel'd double 2-bialgebra $D(G)$ and the 2-groupoid algebra $\bbC[\Lambda G]$ of the inertia 2-groupoid $\Lambda G$. Evidence to suggest such a correspondence stems from the fact that both 2-representations of 2-groupoid algebras and those of the Drinfel'd double 2-bialgebra describe gapped topological phases in   4d. Moreover, 2-groupoid algebras admit twists by 2-group 4-cocycles \cite{Bullivant:2021} just as the Drinfel'd double 2-bialgebra does, as we have demonstrated in Section \ref{ferkitcharges}. If this correspondence can be made explicit, then we can construct lattice realizations of   4d field theories directly from the underlying Drinfel'd double 2-bialgebra symmetry and its 2-representations.

% \paragraph{Categorified Witt equivalence.} With the 

\newpage

\appendix

\section{The Drinfel'd double $D(B\bbZ_n)$ and the $\bbZ_n$-gauge theory}\label{Zp}
In this section, we apply the computation techniques outlined in the main text to the case of the $\bbZ_n$-Drinfel'd double 2-bialgebra  
\begin{equation*}
    D(B\bbZ_n) = \widehat{\bbZ_n}\rightarrow \bbZ_n.
\end{equation*}
Similar to the $n=2$ case, the 2-Hopf algebra structure is given in \eqref{grouplike}, \eqref{dualaction}, and $D(B\bbZ_n)$ is once again self-dual. The argument in {\it Remark \ref{2alg-2grp}} and Section \ref{weak2bialgstructure} can be applied to determine the associator morphisms in the (weak) 2-representation 2-category $\operatorname{2Rep}_\text{wk}(D(B\bbZ_n))$ through the Postnikov class $\tau\in H^3(\bbZ_n,\widehat{\bbZ_n})$.

In analogy with \eqref{toric}, \eqref{kit}, the associated 4d topological action and its partition function are given by
\begin{equation}
    S[A,B] = \frac{1}{n} \int_X \langle B\cup dA\rangle,\qquad Z_{\text{Kit}^0}(X) = \sum_{A,B} e^{i2\pi S[A,B]}.\nonumber
\end{equation}
Note that, due to the coefficient of $\frac{1}{n}$ in front of the topological action, the flatness EOM $dA=0$ implies that $A$ is a mod-$n$ cocycle. The excitations in this theory are labeled by the 2-representations of $D(B\bbZ_n)$; we can once again form the table of irreducible 2-representations in the following,
\begin{table}[h]
    \centering
    \hspace*{-0.35cm}
    \begin{tabular}{|c|c|c|c|c|c|}
    \hline
         & Vacuum & $1$-Cheshire & $2$-Cheshire & \dots & ${n-1}$-Cheshire \\
         \hline
    Electric: & $\mathbf{1}=1\xrightarrow{1} 1$ & $\mathbf{c}_1 = 1 \xrightarrow{0}\zeta$  & $\mathbf{c}_2 = 1\xrightarrow{0}\zeta^2$ & \dots & $\mathbf{c}_{n-1}=1\xrightarrow{0}\zeta^{n-1}$ \\
    $1$-Magnetic: & $\mathbf{1}^1=\zeta\xrightarrow{1}\zeta $& $\mathbf{c}^1_1=\zeta \xrightarrow{0} \zeta^2$ & $\mathbf{c}^1_2=\zeta\xrightarrow{0}\zeta^3 $&\dots & $\mathbf{c}^{1}_{n-1}=\zeta \xrightarrow{0}1$\\
    $2$-Magnetic: & $\mathbf{1}^2=\zeta^2\xrightarrow{1}\zeta^3 $& $\mathbf{c}^2_1=\zeta^2 \xrightarrow{0} \zeta^4$ & $\mathbf{c}^2_2=\zeta^2\xrightarrow{0}\zeta^5 $&\dots & $\mathbf{c}^{2}_{n-1}=\zeta^2 \xrightarrow{0}\zeta$\\
    \vdots &\vdots &\vdots &\vdots &\vdots &\vdots  \\
    ${n-1}$-Magnetic:  & $\mathbf{1}^{n-1}=\zeta^{n-1}\xrightarrow{1} \zeta^{n-1}$ & $\mathbf{c}^{n-1}_1 = \zeta^{n-1} \xrightarrow{0} 1$  & $\mathbf{c}^{n-1}_2 = \zeta^{n-1}\xrightarrow{0}\zeta$ & \dots & $\mathbf{c}^{n-1}_{n-1}=\zeta^{n-1}\xrightarrow{0}\zeta^{-2}$ \\
    \hline 
    \end{tabular}
    % \caption{Caption}
    % \label{tab:my_label}
\end{table}

\noindent where $\zeta \in k^\times$ is a $n$-th root of unity labeling the irreducible representations of $\bbZ_n$ over $k = \bbC$. Here, As before, the component $\rho_1$ is understood to be trivial for the vacuum lines, and the map $\rho_1=\hat 1$ for the Cheshire lines. Of course, we recover the list \eqref{2repdd} for $n=2$. 

\medskip

The hope one has is that the above table should give a complete labeling of the \textit{distinct} simple objects of $\operatorname{2Rep}_\text{wk}(D(B\bbZ_n))$. This is not the case, however --- it contains too many things! In general, one can find invertible 2-intertwiners between them. We shall explicitly construct these 1-isomorphisms in the following, and give a correct characterization of the isomorphism classes of simple objects in $\operatorname{2Rep}_\text{wk}(D(B\bbZ_n))$.

% \begin{remark}
%     The hope is that this table enumerates the isomorphism classes simple objects in $\operatorname{2Rep}_\text{wk}(D(B\bbZ_n))$. However, there are in some sense "too many" Cheshire lines in that table.
%     Then there  such that the tensor power $(\mathbf{c}^a_b)^{b^{-1}\otimes}$ contains the fusion identity $\mathbf{1}$, hence $\mathbf{c}^a_b$ is not simple. 
% \end{remark}

\subsection{2-categorical structures of $\operatorname{2Rep}_\text{wk}(D(B\bbZ_n))$}
The structure of the 1- and 2-morphisms can be analyzed analogously by employing the technique used in Section \ref{invtoricfusion}, namely by repeatedly checking the commutativity of the diagrams in \eqref{2int}. In particular, there are no non-trivial 1-morphisms of the following forms 
\begin{equation*}
    \mathbf{1}^a\rightarrow\mathbf{1}^b,\qquad \mathbf{1}^a\rightarrow\mathbf{c}^b_c
\end{equation*}
for $a\neq b\in\bbZ_n$. On the other hand, since the differential $\partial=0$ is trivial for the Cheshire 2-representations, this argument does not apply to 1-morphisms among the $\mathbf{c}$'s themselves.

For $n>2$, one can write down potentially non-trivial intertwining chain maps of the form $$\bar i^{ac}[bd] = \bar i^{ac}[bd]_1 \oplus \bar i^{ac}[bd]_0 =  0\oplus 1:\mathbf{c}^a_b \rightarrow\mathbf{c}^c_d,$$ where $a+b  = c+d \mod n$ and $a\neq c$. If these were to form genuine 1-morphisms in $\operatorname{2Rep}_\text{wk}(D(B\bbZ_n))$, they must satisfy the compatibility condition
\begin{equation*}
     0= \bar i^{ac}[bd]_{1}\circ \rho_1(y)=\rho_1(y)\circ \bar i^{ac}[bd]_0 = \rho_1(y),\qquad \forall~y\in\widehat{\bbZ_n}
\end{equation*}
against the chain homotopies $\rho_1:\widehat{\bbZ_n}\rightarrow \operatorname{Hom}(V_0,V_{-1})$. This clearly forces $\rho_1=0$, which is inconsistent with the fact that the Cheshire lines must carry $\rho_1=\hat 1\neq 0$. Notice that the presence of the non-trivial group $\widehat{\bbZ_n}$ in degree-(-1) in the 2-bialgebra $D(B\bbZ_n)$ is key in ruling out non-trivial 1-morphisms between the distinct sectors.

This allows us to conclude that no non-trivial 1-morphisms exist across the different sectors, similar to the $n=2$ case studied above. In other words, $\operatorname{2Rep}_\text{wk}(D(B\bbZ_n))$ splits into a direct sum of $n$ sectors, each of which "looks the same": there is one vacuum and $n-1$ (possibly isomorphic; see later) Cheshire lines. This result is consistent with the known structure of Drinfel'd centre 2-categories \cite{Kong:2020wmn,KongTianZhou:2020},
\begin{equation}
    Z_1(\operatorname{2Vect}^{KV}[\bbZ_n]) \simeq \bigoplus_{k=1}^n\operatorname{2Rep}(\bbZ_n).\label{2repZn}
\end{equation}
We have of course not proven that each of the sectors we have are equivalent to $\operatorname{2Vect}^{KV}[\bbZ_n]$. This question, as well as the related question of the difference between $\operatorname{2Vect}^{hBC}$ and $\operatorname{2Vect}^{KV}$, shall be addressed in an upcoming work by the author.

% \begin{remark}
%     Here we comment  Indeed,   
% \end{remark}

\medskip

As distinct sectors differ by merely a global multiplication by some power of $\zeta$, we can without loss of generality focus on the electric sector $\Gamma\subset\operatorname{2Rep}_\text{wk}(D(B\bbZ_n))$ (namely the connected component of the identity). By a similar argument as in the main text, there are non-trivial 1-morphisms between vacuuum line and each of the Cheshire lines given by
\begin{equation*}
    i[0b] = 1 \oplus 0: \mathbf{1}\rightarrow\mathbf{c}_b,\qquad i[b0] = 0 \oplus 1: \mathbf{c}_b\rightarrow \mathbf{1},
\end{equation*}
for $b=1,2,\dots,n-1$. The endomorphism category $\Omega\Gamma = \operatorname{End}(\mathbf{1})$ on the tensor unit $\mathbf{1}$ is characterized as in \eqref{looping} to be equivalent to $\operatorname{Rep}(\bbZ_n)\simeq\operatorname{Vect}[\bbZ_n]$, as one expects. 

When $n>2$, computing the 1-morphisms between the Cheshire lines themselves, on the other hand, involves some deeper number-theoretic facts and deserves a more detailed study. In particular, the degree-swapping 1-morphisms \eqref{swaptwiners} are in general no longer endomorphisms. Indeed, such maps $i'[0b]:(w,v)\mapsto  \zeta^b\cdot (v,w)$ lands the $-b$-Cheshire line $\mathbf{c}_{-b}\cong 1\oplus\zeta^{-b}\ni (w,v)$ in $1\oplus\zeta^{b}\cong \mathbf{c}_{b}$ for $b\in\bbZ_n$. Moreover, if $b$ is coprime to $n$, then $b\in\bbZ_n$ admits a multiplicative inverse $b^{-1}$ and the degree-swap 1-morphism $i'[0b]$ is invertible $i'[b^{-1}0] = i'[0b]^{-1}$. The set of such numbers forms the multiplicative group $\bbZ_n^\times$, and we have the following. 

\begin{lemma}
The isomorphism classes of the Cheshire line $\mathbf{c}_d,~1\leq d\leq n-1$ in $\Gamma$ (and similarly in any other $a$-magnetic sector) are labeled by its orbit under $\bbZ_n^\times$, and each of the degree-swapping 1-morphisms descends to an endomorphism on each orbit.    
\end{lemma}

In particular, if $n=p$ were a prime, then $\bbZ_p^\times\cong\bbZ_p$ forms a field and we only have one unique Cheshire line, denoted $\mathbf{c}_{p-1}\in\Gamma$, and its endomorphism category $\operatorname{End}(\mathbf{c}_{p-1})\simeq\operatorname{Vect}[\bbZ_p]$ is generated by $\bbZ_p$ itself. Hence, the structure of $\Gamma$ in this case is very similar to \eqref{gammadiag}:
\begin{equation}
   \Gamma = \begin{tikzcd}
\mathbf{1} \arrow[rr, "{\operatorname{Vect}}", bend left] \arrow["{\operatorname{Vect}[\bbZ_p]}"', loop, distance=2em, in=215, out=145] &  & \mathbf{c} \arrow[ll, "{\operatorname{Vect}}", bend left] \arrow["{\operatorname{Vect}[\bbZ_p]}"', loop, distance=2em, in=35, out=325]
\end{tikzcd}.\label{gammadiag-p}
\end{equation}
We shall denote the simple objects in the endomorphism categories $\operatorname{Vect}[\bbZ_p]$ by $\{\mathfrak{1},\mathfrak{e}_1,\dots,\mathfrak{e}_{p-1}\}$. 

This form of $\Gamma$ is precisely the structure of the known (KV) 2-representation 2-category $\operatorname{2Rep}_{\bbZ_p}$ of $\bbZ_p$ \cite{Douglas:2018,Delcamp:2021szr,Bartsch:2022mpm}, in line with {\it Remark \ref{wk2algrepthy}}. And since we have $p$ copies of $\Gamma$ labeled by the magnetic sectors, each of which we recall have no non-trivial 1-morphisms between them, we achieve the following generalization of {\bf Proposition \ref{4dtoric}}.
\begin{proposition}
    There is a $\mathrm{non}$-$\mathrm{monoidal}$ equivalence between $\operatorname{2Rep}_{\text{wk}}(D(B\bbZ_p))$ and $Z_1(\operatorname{2Vect}^{KV}(\bbZ_p))$.
\end{proposition}
\begin{proof}
    This follows from the above lemma, as well as \eqref{2repZn} from \cite{Kong:2020wmn,KongTianZhou:2020}. The reason this equivalence is non-monoidal also stems from the fact that the tensor product of the 1-morphism $v$ is trivial in $\operatorname{2Rep}_{\text{wk}}(D(B\bbZ_p))$, while it is not in $Z_1(\Sigma\operatorname{Vect}[\bbZ_p])$.
\end{proof}

As explained in {\bf Theorem \ref{braidtoric}}, the {\it untwisted} 2-representation 2-category $\operatorname{2Rep}_\text{wk}(D(B\bbZ_p)^\text{trv})$ is symmetric and hence does not describe a gapped topological phase. In the following, we shall focus our attention on the twisted case, involving a 2-cocycle $\bar e\in H^2(\bbZ_p,\widehat{\bbZ_p})$. 

\subsection{Twists by $\bbZ_p$-cocycles}
As explained in Section \ref{ferkitcharges}, the 2-cocycles $\bar e,\bar c$ studied in the main text twist the 2-bialgebra structure of the underlying Drinfel'd double 2-bialgebra, and hence manifests in the 2-representation theory. In general, these 2-cocycles can be identified as contributions to certain 2-group 4-cocycles \cite{Kapustin2017} by \eqref{4cocy}. In the current case, these 2-cocycles are defined on $\bbZ_p$; in particular, the analogue of $\bar c\in H^2(\widehat{\bbZ_p},k^\times)$ would, intuitively from {\bf Lemma \ref{fermiparticle}}, determine a $p$-anyonic statistics of the particles in the theory. 

However, from the general theory of classification of 3+1d gapped topological phases \cite{Wen:2019,Wen:2010gda}, point particles should only have bosoinc or fermionic statistics, and for odd primes $p$ there are no $\bbZ_2$ subgroups of $\bbZ_p$ that can give rise to an "emergent" fermionic statistics through the projection $H^2(\widehat{\bbZ_2},k^\times)\rightarrow H^2(\widehat{\bbZ_p},k^\times)$. For this reason, we shall focus our attention on the 4-cocycle twist $\omega\in H^4(D(B\bbZ_p),k^\times)$ defined only with $\bar e\in H^2(\bbZ_p,\widehat{\bbZ_p})$, and denote the corresponding 2-representation 2-category by $\operatorname{2Rep}_\text{wk}(D^\omega(B\bbZ_p))$. 

\medskip

%  In order to examine this, we need an explicit description of $\bar e$.

In contrast to the $p=2$ case, here we have multiple choices for a non-trivial 2-cocycle $\bar e$. Standard computations demonstrate that there is a bijection $H^2(\bbZ_p,\bbZ_p)\cong \bbZ_p$ \cite{book-groupcohomology}, whence 2-cocycles $\bar e=\bar e_m$ are labeled by an element $m\in\bbZ_p$. From the standard cyclic resolution of $\bbZ_p$, one finds an explicit formula 
\begin{equation*}
    \bar e_m(a,b) = \begin{cases} 0 &; a+b < p \\ m &; a+b\geq p\end{cases},\qquad a,b\in\bbZ_p
\end{equation*}
for the group 2-cocycles. If we let $\Theta_m$ denote the central extension of $\bbZ_p$ by $\bbZ_p$ determined by $\bar e_m$, then we see that the Cheshire strings $\mathbf{c}^a_{p-1}$ carry a $\Theta_m$-extension instead of a $\bbZ_p\times\widehat{\bbZ_p}$-representation, in analogy with the $p=2$ case studied in Section \ref{ferkitcharges}. Further, through the Pontrjagyn isomorphism $\bbZ_p\cong \widehat{\bbZ_p}$, the 2-cocycle $\bar e_m$ (with the same label $m\in\widehat{\bbZ_p}$) also appears in the component $\rho_1$ of these 2-representations.

As with typical central extensions, there is a sense in which $\Theta_m$ can be viewed as a "semidirect product" $\bbZ_p\rtimes_m \bbZ_p$, with the action $$x^a\rhd_m (x^b\rhd_m y) = \begin{cases} y &; a+b < p \\ \zeta^m\cdot y &; a+b\geq p\end{cases},\qquad \bbZ_p=\langle x\rangle,\quad \widehat{\bbZ_p}=\langle y\rangle$$ given by multiplying a $p$-th root of unity $\zeta^m\in\bbZ_p\subset U(1)$. This view then allows us to interpret the term $\bar e_m(A)$ that appears in the topological action $S$ as a quadratic term $A\cup_m A$, where $\cup_m = \rhd_m \circ\cup$ is a composition of this action and the usual cup product. This then introduces a cohomological term
\begin{equation*}
    \int_X \langle B\cup \bar e_m(A)\rangle =\int_X \langle B\cup (A\cup_m A)\rangle 
\end{equation*}
into the topological action. The corresponding partition function of this {\it 4d $\bbZ_p$-gauge theory} then reads
\begin{equation*}
    Z_{\text{Kit}^p}(X) \sim \sum_{\substack{dA=0 \\ dB=\tau(A)}} e^{i2\pi \int_X \langle B\cup \bar e_m(A)\rangle},
\end{equation*}
which we call the {\bf 4d $\bbZ_p$-toric code}. We briefly remark here that it may be possible to express $\bar e_m(A)=\beta(A)$ in terms of the $\bbZ_p$-Bockstein homomorphism $\beta: H^1(X,\bbZ_p)\rightarrow H^2(X,\bbZ_p)$, which specializes to the Steenrod square $\operatorname{Sq}^1$ in \eqref{Z2stablecohomo} for the $p=2$ case. We shall not say more about this here.

\subsection{Fusion and braiding in the $\bbZ_p$-toric code}
As outlined in the general theory of Section \ref{tensor2repthy}, the fusion and braiding data encoded in $\operatorname{2Rep}_\text{wk}(D^{\omega_b}(B\bbZ_p))$ are determined by the structure of the graded coproduct and 2-$R$-matrices \cite{Chen:2023tjf}. We shall denote $\omega=\bar e$ throughout the following.

Since we still only have a single Cheshire line (up to equivalence) in the current case, the same computations in \eqref{fus0}, \eqref{fus0*} implies the following $\bbZ_p$-graded fusion rules 
\begin{equation*}
    \mathbf{1}^a \otimes\mathbf{1}^b \simeq\mathbf{1}^{a+b},\qquad \mathbf{1}^a \otimes \mathbf{c}_{p-1}^b \simeq \mathbf{c}_{p-1}^a\otimes\mathbf{1}^b\simeq \mathbf{c}_{p-1}^{a+b},\qquad \mathbf{c}_{p-1}^a\otimes \mathbf{c}_{p-1}^b\simeq \mathbf{c}_{p-1}^{a+b}\oplus \mathbf{c}_{p-1}^{a+b}    
\end{equation*}
where $a,b\in\bbZ_p$ with $a,b=0$ denoting the electric sector. This justifies the name "Cheshire lines" for the simple objects $\mathbf{c}_{p-1}$.

Now let us consider the tensor product between 1-morphisms. From the structure of $\Gamma$ \eqref{gammadiag-p}, we wish to compute $v\otimes v$ in terms of the 2-cocycle $\bar e_m$, where $v$ is the simple object in $\operatorname{Hom}(\mathbf{1},\mathbf{c}_{p-1})\simeq\operatorname{Vect}$. Recall from \ref{tensor2rep} that $v\otimes v$ is determined by the square of the coproduct $\Delta_0(x)$, in which the 2-cocycle $\bar e_m$ appears as the interchange 2-morphism \eqref{interchange}. Due to the self-duality of $D(B\bbZ_p)$, the coproduct $\Delta_0$ is dual to precisely the action $\rhd_m$ of $\bbZ_p$ on $\widehat{\bbZ_p}$, which is given by multiplication by a cyclic element of order $m\in\bbZ_p$. Therefore, if $m\in\bbZ_p$ is such that $2m \geq p$, then a non-trivial factor $\bar e_m(m,m)=m\neq 0$ appears in the square of the coproduct $\Delta_0(x)$, and hence the 1-morphisms $v:\mathbf{1}\rightarrow\mathbf{c}_{p-1}$ will satisfy the fusion rule
\begin{equation}
    v\otimes v\simeq  \mathfrak{1} \oplus \mathfrak{e}_m.\nonumber
\end{equation}
Recall that $\mathfrak{e}_m$ denotes the "$m$-th" simple object in $\Omega\Gamma\simeq\operatorname{Vect}[\bbZ_p]$, where $m=1,\dots,p-1$. This result reduces to \eqref{tensorv's} when $p=2$ and $m=1$. However, for $p>2$ it is curious to note that there exist $1\leq m\leq p-1$ for which $2m<p$, whence $v\otimes v\simeq \mathfrak{1}$ remains trivial in $\operatorname{2Rep}_\text{wk}(D^\omega(B\bbZ_p))$.

\medskip

Now let us consider the braiding. We know from \eqref{braidrmatrix}, \eqref{braid2rmatrix} that the $R$-matrices governing the braiding on $\operatorname{2Rep}_\text{wk}(D^\omega(B\bbZ_p))$ can be deduced from the braided transposition map $\Psi$. For the current case, the braided transpositions must carry the "semidirect product" action $\rhd_m$, which translates to the following forms of the $R$-matrices
\begin{equation*}
    R = \zeta^m \cdot x\otimes x,\qquad \cR = \zeta^m\cdot y\otimes x^m+ x\otimes \zeta^m\cdot y,\qquad m=1,\dots,p-1,
\end{equation*}
where $x,y$ are the generators. This is really an abuse of notation, since the phase $\zeta^m$ appears only when the $R$-matrices here act on 2-representations with sufficient degree. For instance, we have
\begin{equation*}
    b_{V,W} = \text{flip}\circ R\cdot (V\otimes W) = \begin{cases} W\otimes V &; |V_0| + |W_{-1}| < p \\ \zeta^m\cdot (W\otimes V) &; |V_0|+ |W_{-1}|\geq p\end{cases},
\end{equation*}
where $|V_0|,|V_{-1}|$ denotes the degree of the graded $\bbZ_p$-representations in $V$. As an explicit example, if $V=\mathbf{c}_{p-1}$ is the electric Cheshire, then $|V_{-1}|=0$ and $|V_0| = |\zeta^{p-1}| = p-1$; in this case, provided $|W_{-1}|\neq 0$, then a phase of $\zeta^m$ will appear in the braiding $b_{V,W}$. In other words, the electric Cheshire will braid non-trivially with any of the non-trivially graded object $W^a$,
\begin{equation*}
    b_{\mathbf{c}_{p-1},W^a}(\mathbf{c}_{p-1}\otimes W^a) = W^a\otimes \zeta^m\cdot\mathbf{c}_{p-1}= W^a\otimes (a\rhd_m \mathbf{c}_{p-1}),
\end{equation*}
where we have rewritten a multiplication by the phase $\zeta^m$ in terms of the action $a\rhd_m$. This braiding formula can be seen as a $\bbZ_p$-graded generalization of \eqref{magbraid}.

A similar argument as in the proof of {\bf Theorem \ref{mixeddetectability}} then leads to the following result.
\begin{theorem}
    In the 4d $\bbZ_p$-toric code, the electric $\bbZ_p$-flavoured bosons $\mathfrak{e}_k$ have non-trivial full braidings with any of the $a$-magnetic objects $W^a$,
    \begin{equation*}
        B_{\mathfrak{e}_k,W^a} = \zeta^m\cdot \id,\qquad \forall a,k=1,\dots,p-1,
    \end{equation*}
    where $\id:\mathfrak{e}_k\otimes W^a\Rightarrow\mathfrak{e}_k\otimes W^a$ denotes the identity 2-morphism.
\end{theorem}
\noindent This can be understood as the statement of remote detectability in the $\bbZ_p$-toric code.

\medskip

We conclude this paper by briefly noting that the 2-category $\operatorname{2Rep}_\text{wk}(D^\omega(B\bbZ_p))$ is distinct from the 2-category $\operatorname{2Rep}_{\bbZ_p[1]\rtimes\bbZ_p}$ of 2-representations of the split 2-group $\mathbb{G} = (\bbZ_p,\bbZ_p)$ \cite{Bartsch:2022mpm}. This is due to the fact that the self-duality and the braiding properties of $D(B\bbZ_p)$, as a Drinfel'd double 2-bialgebra, play non-trivial roles in determining the structure of $\operatorname{2Rep}_\text{wk}(D^\omega(B\bbZ_p))$. 

In any case, one is left wondering if there is a more efficient method to determine the 2-categorical structure of 2-representations in general, as opposed to a brute-force direct computation. Investigation along this line, through the approach of categorical characters \cite{Ganter:2014}, have been initiated by the author and some collaborators.

\newpage
\bibliographystyle{Biblio}
\bibliography{biblio}

\providecommand{\href}[2]{#2}\begingroup\raggedright\begin{thebibliography}{10}

\bibitem{Osei:2013xra}
P.~K. Osei and B.~J. Schroers, \emph{{Classical r-matrices via
  semidualisation}}, \href{http://dx.doi.org/10.1063/1.4824704}{\emph{J. Math.
  Phys.} {\bfseries 54} (2013) 101702},
  [\href{https://arxiv.org/abs/1307.6485}{{\ttfamily 1307.6485}}].

\bibitem{chen:2022}
H.~Chen and F.~Girelli, \emph{Categorified drinfel'd double and $bf$ theory:
  2-groups in 4d},
  \href{http://dx.doi.org/10.1103/PhysRevD.106.105017}{\emph{Phys. Rev. D}
  {\bfseries 106} (Nov, 2022) 105017}.

\bibitem{Delcamp:2016yix}
C.~Delcamp, B.~Dittrich and A.~Riello, \emph{{Fusion basis for lattice gauge
  theory and loop quantum gravity}},
  \href{http://dx.doi.org/10.1007/JHEP02(2017)061}{\emph{JHEP} {\bfseries 02}
  (2017) 061}, [\href{https://arxiv.org/abs/1607.08881}{{\ttfamily
  1607.08881}}].

\bibitem{Dupuis:2020ndx}
M.~Dupuis, L.~Freidel, F.~Girelli, A.~Osumanu and J.~Rennert, \emph{{On the
  origin of the quantum group symmetry in 3d quantum gravity}},
  [\href{https://arxiv.org/abs/2006.10105}{{\ttfamily 2006.10105}}].

\bibitem{Majid:1990bt}
S.~Majid, \emph{{Tannaka-Krein theorem for quasiHopf algebras and other
  results}}, .

\bibitem{Witten:1988hc}
E.~Witten, \emph{{(2+1)-Dimensional Gravity as an Exactly Soluble System}},
  \href{http://dx.doi.org/10.1016/0550-3213(88)90143-5}{\emph{Nucl. Phys.}
  {\bfseries B311} (1988) 46}.

\bibitem{Majid:1994nw}
S.~Majid, \emph{{Some remarks on the quantum double}},
  \href{http://dx.doi.org/10.1007/BF01690458}{\emph{Czech. J. Phys.} {\bfseries
  44} (1994) 1059}, [\href{https://arxiv.org/abs/hep-th/9409056}{{\ttfamily
  hep-th/9409056}}].

\bibitem{KitaevKong_2012}
A.~Kitaev and L.~Kong, \emph{Models for gapped boundaries and domain walls},
  \href{http://dx.doi.org/10.1007/s00220-012-1500-5}{\emph{Communications in
  Mathematical Physics} {\bfseries 313} (jun, 2012) 351--373}.

\bibitem{Severa:2005}
P.~{\v S}evera, \emph{Some title containing the words "homotopy" and
  "symplectic", e.g. this one}, {\emph{Travaux Math{\'e}matiques} {\bfseries
  16} (2005) 121--137}.

\bibitem{Kitaev1997}
A.~{\relax Yu}. Kitaev, \emph{{Fault tolerant quantum computation by anyons}},
  \href{http://dx.doi.org/10.1016/S0003-4916(02)00018-0}{\emph{Annals Phys.}
  {\bfseries 303} (2003) 2--30},
  [\href{https://arxiv.org/abs/quant-ph/9707021}{{\ttfamily
  quant-ph/9707021}}].

\bibitem{Delcamp:2018kqc}
C.~Delcamp, \emph{{Gauge Models of Topological Phases and Applications to
  Quantum Gravity}}.
\newblock PhD thesis, Waterloo U., 2018.

\bibitem{Wen:2019}
T.~Lan and X.-G. Wen, \emph{Classification of $3+1\mathrm{D}$ bosonic
  topological orders (ii): The case when some pointlike excitations are
  fermions}, \href{http://dx.doi.org/10.1103/PhysRevX.9.021005}{\emph{Phys.
  Rev. X} {\bfseries 9} (Apr, 2019) 021005}.

\bibitem{Kong:2017hcw}
L.~Kong, X.-G. Wen and H.~Zheng, \emph{{Boundary-bulk relation in topological
  orders}},
  \href{http://dx.doi.org/10.1016/j.nuclphysb.2017.06.023}{\emph{Nucl. Phys. B}
  {\bfseries 922} (2017) 62--76},
  [\href{https://arxiv.org/abs/1702.00673}{{\ttfamily 1702.00673}}].

\bibitem{Johnson-Freyd:2020usu}
T.~Johnson-Freyd, \emph{{On the Classification of Topological Orders}},
  \href{http://dx.doi.org/10.1007/s00220-022-04380-3}{\emph{Commun. Math.
  Phys.} {\bfseries 393} (2022) 989--1033},
  [\href{https://arxiv.org/abs/2003.06663}{{\ttfamily 2003.06663}}].

\bibitem{Kapranov:1994}
M.~M. Kapranov and V.~A. Voevodsky, \emph{2-categories and Zamolodchikov
  tetrahedra equations}, vol.~56 of \emph{Proc. Sympos. Pure Math.},
  p.~177{\textendash}259.
\newblock Amer. Math. Soc., Providence, RI, 1994.

\bibitem{Baez:2003fs}
J.~C. Baez and A.~S. Crans, \emph{{Higher-Dimensional Algebra VI: Lie
  2-Algebras}}, {\emph{Theor. Appl. Categor.} {\bfseries 12} (2004) 492--528},
  [\href{https://arxiv.org/abs/math/0307263}{{\ttfamily math/0307263}}].

\bibitem{Johnson_Freyd_2023}
T.~Johnson-Freyd and D.~Reutter, \emph{Minimal nondegenerate extensions},
  \href{http://dx.doi.org/10.1090/jams/1023}{\emph{Journal of the American
  Mathematical Society} (Jul, 2023) }.

\bibitem{Delcamp:2023kew}
C.~Delcamp and A.~Tiwari, \emph{{Higher categorical symmetries and gauging in
  two-dimensional spin systems}},
  [\href{https://arxiv.org/abs/2301.01259}{{\ttfamily 2301.01259}}].

\bibitem{Douglas:2018}
C.~L. Douglas and D.~J. Reutter, \emph{Fusion 2-categories and a state-sum
  invariant for 4-manifolds}, {\emph{arXiv: Quantum Algebra} (2018) }.

\bibitem{decoppet2022morita}
T.~D. Décoppet, \emph{The morita theory of fusion 2-categories},
  \href{http://dx.doi.org/10.21136/hs.2023.07}{\emph{Higher Structures}
  {\bfseries 7} (May, 2023) 234–292}.

\bibitem{decoppet2023drinfeld}
T.~D. Décoppet, \emph{Drinfeld centers and morita equivalence classes of
  fusion 2-categories},  [\href{https://arxiv.org/abs/2211.04917}{{\ttfamily
  2211.04917}}].

\bibitem{Hamma:2004ud}
A.~Hamma, P.~Zanardi and X.~G. Wen, \emph{{String and membrane condensation on
  3-D lattices}},
  \href{http://dx.doi.org/10.1103/PhysRevB.72.035307}{\emph{Phys. Rev. B}
  {\bfseries 72} (2005) 035307},
  [\href{https://arxiv.org/abs/cond-mat/0411752}{{\ttfamily
  cond-mat/0411752}}].

\bibitem{Else:2017yqj}
D.~V. Else and C.~Nayak, \emph{{Cheshire charge in (3+1)-dimensional
  topological phases}},
  \href{http://dx.doi.org/10.1103/PhysRevB.96.045136}{\emph{Phys. Rev. B}
  {\bfseries 96} (2017) 045136},
  [\href{https://arxiv.org/abs/1702.02148}{{\ttfamily 1702.02148}}].

\bibitem{Kong:2020wmn}
L.~Kong, Y.~Tian and Z.-H. Zhang, \emph{{Defects in the 3-dimensional toric
  code model form a braided fusion 2-category}},
  \href{http://dx.doi.org/10.1007/JHEP12(2020)078}{\emph{JHEP} {\bfseries 12}
  (2020) 078}, [\href{https://arxiv.org/abs/2009.06564}{{\ttfamily
  2009.06564}}].

\bibitem{Chen:2012gz}
Z.~Chen, M.~Sti{\'e}non and P.~Xu, \emph{{Poisson 2-groups}},
  \href{http://dx.doi.org/10.4310/jdg/1367438648}{\emph{J. Diff. Geom.}
  {\bfseries 94} (2013) 209--240},
  [\href{https://arxiv.org/abs/1202.0079}{{\ttfamily 1202.0079}}].

\bibitem{Kim:2019owc}
H.~Kim and C.~Saemann, \emph{{Adjusted parallel transport for higher gauge
  theories}}, \href{http://dx.doi.org/10.1088/1751-8121/ab8ef2}{\emph{J. Phys.
  A} {\bfseries 53} (2020) 445206},
  [\href{https://arxiv.org/abs/1911.06390}{{\ttfamily 1911.06390}}].

\bibitem{Sati:2009ic}
H.~Sati, U.~Schreiber and J.~Stasheff, \emph{{Differential twisted String and
  Fivebrane structures}},
  \href{http://dx.doi.org/10.1007/s00220-012-1510-3}{\emph{Commun. Math. Phys.}
  {\bfseries 315} (2012) 169--213},
  [\href{https://arxiv.org/abs/0910.4001}{{\ttfamily 0910.4001}}].

\bibitem{costello_gwilliam_2016}
K.~Costello and O.~Gwilliam, \emph{Factorization Algebras in Quantum Field
  Theory}, vol.~1 of \emph{New Mathematical Monographs}.
\newblock Cambridge University Press, 2016,
  \href{http://dx.doi.org/10.1017/9781316678626}{10.1017/9781316678626}.

\bibitem{Brown}
R.~Brown, \emph{Computing homotopy types using crossed $n$-cubes of groups},
  \href{http://dx.doi.org/10.1017/CBO9780511526305}{\emph{London Mathematical
  Society Lecture Note Series} {\bfseries 1} (1992) 187--210}.

\bibitem{Ang:2018rls}
J.~P. Ang and A.~Prakash, \emph{{Higher categorical groups and the
  classification of topological defects and textures}},
  [\href{https://arxiv.org/abs/1810.12965}{{\ttfamily 1810.12965}}].

\bibitem{Kapustin2017}
A.~Kapustin and R.~Thorngren, \emph{Higher Symmetry and Gapped Phases of Gauge
  Theories}, pp.~177--202.
\newblock Springer International Publishing, Cham, 2017.
\newblock 10.1007/978-3-319-59939-7{\_}5.

\bibitem{Baez:2004}
J.~C. Baez and A.~D. Lauda, \emph{Higher-dimensional algebra. v: 2-groups.},
  {\emph{Theory and Applications of Categories [electronic only]} {\bfseries
  12} (2004) 423--491}.

\bibitem{Yekuteli:2015}
A.~Yekutieli, \emph{Nonabelian Multiplicative Integration on Surfaces}.
\newblock WORLD SCIENTIFIC, 2015,
  \href{http://dx.doi.org/10.1142/9537}{10.1142/9537}.

\bibitem{Wockel2008Principal2A}
C.~Wockel, \emph{Principal 2-bundles and their gauge 2-groups},
  \href{http://dx.doi.org/doi:10.1515/form.2011.020}{\emph{Forum Mathematicum}
  {\bfseries 23} (2011) 565--610}.

\bibitem{Nikolaus_2014}
T.~Nikolaus, U.~Schreiber and D.~Stevenson, \emph{Principal $\infty$-bundles:
  presentations},
  \href{http://dx.doi.org/10.1007/s40062-014-0077-4}{\emph{Journal of Homotopy
  and Related Structures} {\bfseries 10} (mar, 2014) 565--622}.

\bibitem{Chen:2023tjf}
H.~Chen and F.~Girelli, \emph{{Categorified Quantum Groups and Braided Monoidal
  2-Categories}},  [\href{https://arxiv.org/abs/2304.07398}{{\ttfamily
  2304.07398}}].

\bibitem{Stasheff:1963}
J.~D. Stasheff, \emph{Homotopy associativity of h-spaces. i},
  {\emph{Transactions of the American Mathematical Society} {\bfseries 108}
  (1963) 275--292}.

\bibitem{GURSKI20114225}
N.~Gurski, \emph{Loop spaces, and coherence for monoidal and braided monoidal
  bicategories},
  \href{http://dx.doi.org/https://doi.org/10.1016/j.aim.2010.12.007}{\emph{Advances
  in Mathematics} {\bfseries 226} (2011) 4225--4265}.

\bibitem{Bai_2013}
C.~Bai, Y.~Sheng and C.~Zhu, \emph{Lie 2-bialgebras},
  \href{http://dx.doi.org/10.1007/s00220-013-1712-3}{\emph{Communications in
  Mathematical Physics} {\bfseries 320} (apr, 2013) 149--172}.

\bibitem{Crane:1994ty}
L.~Crane and I.~Frenkel, \emph{{Four-dimensional topological field theory, Hopf
  categories, and the canonical bases}},
  \href{http://dx.doi.org/10.1063/1.530746}{\emph{J. Math. Phys.} {\bfseries
  35} (1994) 5136--5154},
  [\href{https://arxiv.org/abs/hep-th/9405183}{{\ttfamily hep-th/9405183}}].

\bibitem{Mackaay:ek}
M.~Mackaay, \emph{Finite groups, spherical 2-categories, and 4-manifold
  invariants}, {\emph{Adv. Math. 153(2) (2000), 353-390} (2000) },
  [\href{https://arxiv.org/abs/math/9903003}{{\ttfamily math/9903003}}].

\bibitem{neuchl1997representation}
M.~Neuchl, \emph{Representation Theory of Hopf Categories}.
\newblock Verlag nicht ermittelbar, 1997.

\bibitem{Pfeiffer2007}
H.~Pfeiffer, \emph{2-groups, trialgebras and their hopf categories of
  representations},
  \href{http://dx.doi.org/https://doi.org/10.1016/j.aim.2006.09.014}{\emph{Advances
  in Mathematics} {\bfseries 212} (2007) 62--108},
  [\href{https://arxiv.org/abs/0411468}{{\ttfamily 0411468}}].

\bibitem{grosse2001suggestion}
H.~Grosse and K.-G. Schlesinger, \emph{{A Suggestion for an integrability
  notion for two-dimensional spin systems}},
  \href{http://dx.doi.org/10.1023/A:1010988421217}{\emph{Lett. Math. Phys.}
  {\bfseries 55} (2001) 161--167},
  [\href{https://arxiv.org/abs/hep-th/0103176}{{\ttfamily hep-th/0103176}}].

\bibitem{Zhu:2019}
C.~Zhu, T.~Lan and X.-G. Wen, \emph{Topological nonlinear $\sigma$-model,
  higher gauge theory, and a systematic construction of (3+1)d topological
  orders for boson systems},
  \href{http://dx.doi.org/10.1103/physrevb.100.045105}{\emph{Physical Review B}
  {\bfseries 100} (Jul, 2019) }.

\bibitem{Johnson-Freyd:2020}
T.~Johnson-Freyd, \emph{{(3+1)D topological orders with only a
  $\mathbb{Z}_2$-charged particle}},
  [\href{https://arxiv.org/abs/2011.11165}{{\ttfamily 2011.11165}}].

\bibitem{Wan:2019}
Z.~Wan and J.~Wang, \emph{Adjoint ${\mathrm{qcd}}_{4}$, deconfined critical
  phenomena, symmetry-enriched topological quantum field theory, and higher
  symmetry extension},
  \href{http://dx.doi.org/10.1103/PhysRevD.99.065013}{\emph{Phys. Rev. D}
  {\bfseries 99} (Mar, 2019) 065013}.

\bibitem{Gaiotto:2019xmp}
D.~Gaiotto and T.~Johnson-Freyd, \emph{{Condensations in higher categories}},
  [\href{https://arxiv.org/abs/1905.09566}{{\ttfamily 1905.09566}}].

\bibitem{Woronowicz1988}
S.~Woronowicz, \emph{Tannaka-krein duality for compact matrix pseudogroups.
  twisted su (n) groups.}, {\emph{Inventiones mathematicae} {\bfseries 93}
  (1988) 35--76}.

\bibitem{Bullivant:2021}
A.~Bullivant and C.~Delcamp, \emph{Gapped boundaries and string-like
  excitations in (3+1)d gauge models of topological phases},
  \href{http://dx.doi.org/10.1007/jhep07(2021)025}{\emph{Journal of High Energy
  Physics} {\bfseries 2021} (Jul, 2021) }.

\bibitem{Baez:1995xq}
J.~C. Baez and J.~Dolan, \emph{{Higher dimensional algebra and topological
  quantum field theory}}, \href{http://dx.doi.org/10.1063/1.531236}{\emph{J.
  Math. Phys.} {\bfseries 36} (1995) 6073--6105},
  [\href{https://arxiv.org/abs/q-alg/9503002}{{\ttfamily q-alg/9503002}}].

\bibitem{Baez:2010ya}
J.~C. Baez and J.~Huerta, \emph{{An Invitation to Higher Gauge Theory}},
  \href{http://dx.doi.org/10.1007/s10714-010-1070-9}{\emph{Gen. Rel. Grav.}
  {\bfseries 43} (2011) 2335--2392},
  [\href{https://arxiv.org/abs/1003.4485}{{\ttfamily 1003.4485}}].

\bibitem{Wagemann+2021}
F.~Wagemann, \emph{Crossed Modules}.
\newblock De Gruyter, 2021,
  \href{http://dx.doi.org/doi:10.1515/9783110750959}{doi:10.1515/9783110750959}.

\bibitem{Baez:2008hz}
J.~C. Baez, A.~Baratin, L.~Freidel and D.~K. Wise, \emph{{Infinite-Dimensional
  Representations of 2-Groups}}, vol.~1032.
\newblock American Mathematical Society, 2012,
  \href{http://dx.doi.org/10.1090/S0065-9266-2012-00652-6}{10.1090/S0065-9266-2012-00652-6}.

\bibitem{Delcamp:2021szr}
C.~Delcamp, \emph{{Tensor network approach to electromagnetic duality in (3+1)d
  topological gauge models}},
  \href{http://dx.doi.org/10.1007/JHEP08(2022)149}{\emph{JHEP} {\bfseries 08}
  (2022) 149}, [\href{https://arxiv.org/abs/2112.08324}{{\ttfamily
  2112.08324}}].

\bibitem{Baez:2004in}
J.~Baez and U.~Schreiber, \emph{{Higher gauge theory: 2-connections on
  2-bundles}},  [\href{https://arxiv.org/abs/hep-th/0412325}{{\ttfamily
  hep-th/0412325}}].

\bibitem{Martins:2006hx}
J.~F. Martins and T.~Porter, \emph{{On Yetter's invariant and an extension of
  the Dijkgraaf-Witten invariant to categorical groups}}, {\emph{Theor. Appl.
  Categor.} {\bfseries 18} (2007) 118--150},
  [\href{https://arxiv.org/abs/math/0608484}{{\ttfamily math/0608484}}].

\bibitem{Porst2008Strict2A}
S.-S. Porst, \emph{Strict 2-groups are crossed modules}, {\emph{arXiv: Category
  Theory} (2008) }.

\bibitem{Nguyen2014CROSSEDMA}
T.~Q. Nguyen, T.~C. Pham and T.~T. Nguyen, \emph{Crossed modules and strict
  gr-categories}, {\emph{Communications of The Korean Mathematical Society}
  {\bfseries 29} (2014) 9--22}.

\bibitem{Cui_2017}
S.~X. Cui and Z.~Wang, \emph{State sum invariants of three manifolds from
  spherical multi-fusion categories},
  \href{http://dx.doi.org/10.1142/s0218216517501048}{\emph{Journal of Knot
  Theory and Its Ramifications} {\bfseries 26} (dec, 2017) 1750104}.

\bibitem{Baez:2012}
J.~Baez, A.~Baratin, L.~Freidel and D.~Wise, \emph{Infinite-dimensional
  representations of 2-groups},
  \href{http://dx.doi.org/10.1090/s0065-9266-2012-00652-6}{\emph{Memoirs of the
  American Mathematical Society} {\bfseries 219} (2012) 0--0}.

\bibitem{Mackaay:uo}
M.~Mackaay, \emph{Spherical 2-categories and 4-manifold invariants},
  {\emph{Adv. Math. 143 (1999), 288-348} (1999) },
  [\href{https://arxiv.org/abs/math/9805030}{{\ttfamily math/9805030}}].

\bibitem{book-baez}
J.~Baez and J.~P. Muniain, \emph{Gauge Fields, Knots and Gravity}.
\newblock WORLD SCIENTIFIC, 1994,
  \href{http://dx.doi.org/10.1142/2324}{10.1142/2324}.

\bibitem{Mikovic:2016xmo}
A.~Mikovic, M.~A. Oliveira and M.~Vojinovic, \emph{{Hamiltonian analysis of the
  BFCG theory for a strict Lie 2-group}},
  [\href{https://arxiv.org/abs/1610.09621}{{\ttfamily 1610.09621}}].

\bibitem{Mikovic:2011si}
A.~Mikovic and M.~Vojinovic, \emph{{poincar{\'e} 2-group and quantum gravity}},
  \href{http://dx.doi.org/10.1088/0264-9381/29/16/165003}{\emph{Class. Quant.
  Grav.} {\bfseries 29} (2012) 165003},
  [\href{https://arxiv.org/abs/1110.4694}{{\ttfamily 1110.4694}}].

\bibitem{Bullivant:2017sjz}
A.~Bullivant, M.~Calcada, Z.~K\'ad\'ar, J.~a.~F. Martins and P.~Martin,
  \emph{{Higher lattices, discrete two-dimensional holonomy and topological
  phases in (3$+$1)D with higher gauge symmetry}},
  \href{http://dx.doi.org/10.1142/S0129055X20500117}{\emph{Rev. Math. Phys.}
  {\bfseries 32} (2019) 2050011},
  [\href{https://arxiv.org/abs/1702.00868}{{\ttfamily 1702.00868}}].

\bibitem{Siegel:1999}
S.~F. Siegel and S.~J. Witherspoon, \emph{The hochschild cohomology ring of a
  group algebra},
  \href{http://dx.doi.org/https://doi.org/10.1112/S0024611599011958}{\emph{Proceedings
  of the London Mathematical Society} {\bfseries 79} (1999) 131--157},
  [\href{https://arxiv.org/abs/https://londmathsoc.onlinelibrary.wiley.com/doi/pdf/10.1112/S0024611599011958}{{\ttfamily
  https://londmathsoc.onlinelibrary.wiley.com/doi/pdf/10.1112/S0024611599011958}}].

\bibitem{KongTianZhou:2020}
L.~Kong, Y.~Tian and S.~Zhou, \emph{The center of monoidal 2-categories in 3+1d
  dijkgraaf-witten theory},
  \href{http://dx.doi.org/https://doi.org/10.1016/j.aim.2019.106928}{\emph{Advances
  in Mathematics} {\bfseries 360} (2020) 106928}.

\bibitem{Angulo:2018}
C.~Angulo, \emph{A cohomology theory for lie 2-algebras and lie 2-groups},
  2018.
\newblock 10.48550/arxiv.1810.05740.

\bibitem{Kong:2020}
L.~Kong, T.~Lan, X.-G. Wen, Z.-H. Zhang and H.~Zheng, \emph{Algebraic higher
  symmetry and categorical symmetry: A holographic and entanglement view of
  symmetry},
  \href{http://dx.doi.org/10.1103/PhysRevResearch.2.043086}{\emph{Phys. Rev.
  Research} {\bfseries 2} (Oct, 2020) 043086}.

\bibitem{etingof2016tensor}
P.~Etingof, S.~Gelaki, D.~Nikshych and V.~Ostrik, \emph{Tensor Categories}.
\newblock Mathematical Surveys and Monographs. American Mathematical Society,
  2016.

\bibitem{Majid:1996kd}
S.~Majid, \emph{{Foundations of quantum group theory}}.
\newblock Cambridge University Press, 2011.

\bibitem{book-groupcohomology}
K.~S. Brown, \emph{Cohomology of groups}, vol.~87 of \emph{Graduate Texts in
  Mathematics}.
\newblock Springer-Verlag, New York, 1982.

\bibitem{Bullivant:2016clk}
A.~Bullivant, M.~Calcada, Z.~Kadar, P.~Martin and J.~Martins, \emph{Topological
  phases from higher gauge symmetry in 3+1 dimensions},
  \href{http://dx.doi.org/10.1103/PhysRevB.95.155118}{\emph{Phys. Rev.}
  {\bfseries B95} (2017) 155118},
  [\href{https://arxiv.org/abs/1606.06639}{{\ttfamily 1606.06639}}].

\bibitem{Wan:2014woa}
Y.~Wan, J.~C. Wang and H.~He, \emph{{Twisted Gauge Theory Model of Topological
  Phases in Three Dimensions}},
  \href{http://dx.doi.org/10.1103/PhysRevB.92.045101}{\emph{Phys. Rev. B}
  {\bfseries 92} (2015) 045101},
  [\href{https://arxiv.org/abs/1409.3216}{{\ttfamily 1409.3216}}].

\bibitem{Kong:2014qka}
L.~Kong and X.-G. Wen, \emph{{Braided fusion categories, gravitational
  anomalies, and the mathematical framework for topological orders in any
  dimensions}},  [\href{https://arxiv.org/abs/1405.5858}{{\ttfamily
  1405.5858}}].

\bibitem{gurski2006algebraic}
M.~Gurski, \emph{An Algebraic Theory of Tricategories}.
\newblock University of Chicago, Department of Mathematics, 2006.

\bibitem{Baez1996HigherDimensionalAI}
J.~C. Baez, \emph{Higher-dimensional algebra ii. 2-hilbert spaces},
  {\emph{Advances in Mathematics} {\bfseries 127} (1996) 125--189}.

\bibitem{Willerton:2008gyk}
S.~Willerton, \emph{{The twisted Drinfeld double of a finite group via gerbes
  and finite groupoids}},
  \href{http://dx.doi.org/10.2140/agt.2008.8.1419}{\emph{Algebr. Geom. Topol.}
  {\bfseries 8} (2008) 1419--1457}.

\bibitem{book-charclass}
J.~Milnor and J.~D. Stacheff, \emph{{Characteristic Classes}}.
\newblock Annals of Mathematics Studies. Princeton University Press, 1974.

\bibitem{Drinfeld:1986in}
V.~G. Drinfeld, \emph{{Quantum groups}},
  \href{http://dx.doi.org/10.1007/BF01247086}{\emph{Zap. Nauchn. Semin.}
  {\bfseries 155} (1986) 18--49}.

\bibitem{book-quasihopf}
S.~Caenepeel, D.~Bulacu, F.~Panaite and F.~{Van Oystaeyen}, \emph{Quasi-Hopf
  Algebras: A Categorical Approach}.
\newblock Encyclopedia of Mathematics and its Applications. Cambridge
  University Press, 2019.

\bibitem{Mackaay:hc}
M.~Mackaay and R.~Picken, \emph{2-categories, 4d state-sum models and gerbes},
  [\href{https://arxiv.org/abs/math/0104285}{{\ttfamily math/0104285}}].

\bibitem{JuvenWang}
J.~Wang, X.-G. Wen and E.~Witten, \emph{A new su(2) anomaly},
  \href{http://dx.doi.org/10.1063/1.5082852}{\emph{Journal of Mathematical
  Physics} {\bfseries 60} (May, 2019) 052301}.

\bibitem{Thorngren2015}
R.~Thorngren, \emph{Framed wilson operators, fermionic strings, and
  gravitational anomaly in 4d},
  \href{http://dx.doi.org/10.1007/JHEP02(2015)152}{\emph{Journal of High Energy
  Physics} {\bfseries 2015} (2010) }.

\bibitem{Chen:2022hct}
H.~Chen and F.~Girelli, \emph{{Gauging the Gauge and Anomaly Resolution}},
  [\href{https://arxiv.org/abs/2211.08549}{{\ttfamily 2211.08549}}].

\bibitem{Delaney2020BraidedZA}
C.~Delaney, C.~Galindo, J.~Y. Plavnik, E.~C. Rowell and Q.~Zhang, \emph{Braided
  zesting and its applications}, {\emph{Communications in Mathematical Physics}
  {\bfseries 386} (2020) 1 -- 55}.

\bibitem{Reutter:2020bav}
D.~Reutter, \emph{{Semisimple 4-dimensional topological field theories cannot
  detect exotic smooth structure}},
  [\href{https://arxiv.org/abs/2001.02288}{{\ttfamily 2001.02288}}].

\bibitem{Garner:2023zqn}
N.~Garner and B.~R. Williams, \emph{{Raviolo vertex algebras}},
  [\href{https://arxiv.org/abs/2308.04414}{{\ttfamily 2308.04414}}].

\bibitem{Gwilliam:2021zkv}
O.~Gwilliam, E.~Rabinovich and B.~R. Williams, \emph{{Quantization of
  topological-holomorphic field theories: local aspects}},
  [\href{https://arxiv.org/abs/2107.06734}{{\ttfamily 2107.06734}}].

\bibitem{Wen:2010gda}
X.~Chen, Z.~C. Gu and X.~G. Wen, \emph{{Local unitary transformation,
  long-range quantum entanglement, wave function renormalization, and
  topological order}},
  \href{http://dx.doi.org/10.1103/PhysRevB.82.155138}{\emph{Phys. Rev. B}
  {\bfseries 82} (2010) 155138},
  [\href{https://arxiv.org/abs/1004.3835}{{\ttfamily 1004.3835}}].

\bibitem{Bartsch:2023wvv}
T.~Bartsch, M.~Bullimore and A.~Grigoletto, \emph{{Representation theory for
  categorical symmetries}},
  [\href{https://arxiv.org/abs/2305.17165}{{\ttfamily 2305.17165}}].

\bibitem{Bartsch:2022mpm}
T.~Bartsch, M.~Bullimore, A.~E.~V. Ferrari and J.~Pearson,
  \emph{{Non-invertible Symmetries and Higher Representation Theory I}},
  [\href{https://arxiv.org/abs/2208.05993}{{\ttfamily 2208.05993}}].

\bibitem{Ganter:2014}
N.~{Ganter} and R.~{Usher}, \emph{{Representation and character theory of
  finite categorical groups}},
  \href{http://dx.doi.org/10.48550/arXiv.1407.6849}{\emph{Theory and
  Applications of Categories} {\bfseries 31} (July, 2016) 542--570},
  [\href{https://arxiv.org/abs/1407.6849}{{\ttfamily 1407.6849}}].

\end{thebibliography}\endgroup
\end{document}